\documentclass[12pt]{article}
\usepackage{enumitem}
\usepackage{amsmath}
\usepackage{amsthm}
\usepackage{newtxmath}
\usepackage[noend]{algorithmic} 
\usepackage{algorithm,caption}
\algsetup{indent=2em} 

\usepackage{setspace}
\usepackage{adjustbox}
\usepackage{natbib}
\usepackage{libertine} % font
\usepackage[T1]{fontenc}
\usepackage{bm}
\usepackage{fancyhdr}
\usepackage{mathrsfs}
\usepackage[margin=1in]{geometry}
\usepackage{color}
\usepackage{comment}
\usepackage{subcaption}
\usepackage{multirow,array}
\usepackage{verbatim}
\usepackage{appendix}
\usepackage{float}
\usepackage{bbm}
\usepackage{graphicx}
\usepackage[font=sc]{caption}
\usepackage{booktabs}       % professional-quality tables
\usepackage[flushleft]{threeparttable} % table with notes
\usepackage{amsfonts}       % blackboard math symbols
\usepackage{nicefrac}       % compact symbols for 1/2, etc.
\usepackage{microtype}      % microtypography
\setlength{\parindent}{1.5em}
\setlength{\parskip}{0.5em}
\usepackage{adjustbox}
\newtheorem{theorem}{Theorem}

\newtheorem{lemma}{Lemma}
\newtheorem{proposition}{Proposition}
\theoremstyle{definition}
\newtheorem{example}{Example}
\newtheorem{remark}{Remark}
\newtheorem{asu}{Assumption}
\newcounter{subassumption}[asu]
\renewcommand{\thesubassumption}{(\textit{\roman{subassumption}})}
\makeatletter
\renewcommand{\p@subassumption}{\theasu}% Counter prefix.
\makeatother
\newcommand{\subasu}{% Just like \item in a list, but for an asu
  \refstepcounter{subassumption}%
  \thesubassumption~\ignorespaces}

\usepackage[colorlinks = true, linkcolor = blue, urlcolor  = blue,
            citecolor = blue, anchorcolor = blue]{hyperref}

\DeclareMathOperator*{\argmax}{arg\,max}
\DeclareMathOperator*{\argmin}{arg\,min}
\DeclareMathOperator*{\plim}{plim}

\newcommand{\footremember}[2]{%
    \footnote{#2}
    \newcounter{#1}
    \setcounter{#1}{\value{footnote}}%
}
 
 % command for norm
\newcommand\Tstrut{\rule{0pt}{2.6ex}}         % = `top' strut
\newcommand\Bstrut{\rule[-0.9ex]{0pt}{0pt}}   % = `bottom' strut
\onehalfspacing

\title{Indirect Inference for Nonlinear Panel Models with Fixed Effects\footnote{Conversations with Chuqing Jin inspired and improved this project. I am indebted to Iv{\'a}n Fern{\'a}ndez-Val, Jean--Jacques Forneron and Hiroaki Kaido for patience, guidance and encouragement. For helpful discussions and suggestions, I thank Aureo de Paula, Karun Adusumilli, Jiaying Gu, Eric Hardy, Dennis Kristensen, Yan Liu, Xun Lu, Pierre Perron, Zhongjun Qu, Pascual Restrepo, Marc Rysman, Xiaoxia Shi, Guang Zhang, Beixi Zhou and participants in numerous seminars, reading groups and job interviews. All errors are mine.}}
\author{Shuowen Chen\footremember{alley}{Boston University. Email: swchen@bu.edu.}}
\begin{document}
\maketitle
\begin{abstract}
    Fixed effect estimators of nonlinear panel data models suffer from the incidental parameter problem. This leads to two undesirable consequences in applied research: (1) point estimates are subject to large biases, and (2) confidence intervals have incorrect coverages. This paper proposes a simulation--based method for bias reduction. The method simulates data using the model with estimated individual effects, and finds values of parameters by equating fixed effect estimates obtained from observed and simulated data. The asymptotic framework provides consistency, bias correction, and asymptotic normality results. An application and simulations to female labor force participation illustrates the finite--sample performance of the method. 
%\medskip

%\textbf{Key Words:} Incidental parameters, Simulations, Empirical process, Mixing
\end{abstract}

\clearpage
%%%%%%%%%%%%%%%%%%%%%%%%%%%%%%%%%%%%%%%%%%%%%%%%%%%%%%%%%%%%%%%%%%%%%
%%%%%%%%%%%%%%%%%%%%%%%%%%%%%%%%%%%%%%%%%%%%%%%%%%%%%%%%%%%%%%%%%%%%%
\section{Introduction}\label{Section: Introduction}
Panel data refers to data on multiple entities (e.g., individuals, firms, etc.) observed at two or more time periods. Unobserved heterogeneity across entities often accounts for a large fraction of the variation in panel data. When this heterogeneity is correlated with the explanatory variables in the regression specifications, the resulting omitted variable bias renders point estimates inconsistent. 

Adding individual fixed effects, $\alpha_{i0}$'s, is the main approach to control for time--invariant unobserved heterogeneity in panel data models. Compared to other approaches like random effects and correlated random effects, the fixed effect approach does not impose distributional assumptions on $\alpha_{i0}$'s or restrict their relationships with other explanatory variables. Instead, each $\alpha_{i0}$ is treated as a parameter to be estimated. However, because the number of $\alpha_{i0}$'s increases with the sample size and each $\alpha_{i0}$ is estimated using only entity $i$'s time series observations, adding fixed effects introduces the incidental parameter problem in estimating the vector of parameters of interest $\theta_{0}$. It has two consequences for applied research: (1) point estimates are subject to large biases, and (2) confidence intervals have incorrect coverages. 

This paper proposes a new method to debias fixed effect estimators in a class of nonlinear panel models. The method is named \textit{indirect fixed effect estimation} and features two main steps: the first one is to simulate data by using estimated individual effects $\widehat{\alpha}_{i}$'s from the observed data. The second step is to find the vector of parameters that matches the fixed effect estimators using observed and simulated data. 

The method has two advantages: first, it does not require an explicit characterization of the bias term, which can be hard to derive in complex models. Instead, the method finds the solution by automatically correcting the bias because the vector of parameter values that is the closest to $\theta_{0}$ renders similar bias in fixed effect estimations. Second, standard errors can be derived using the delta method, so there is no need to use the bootstrap, which is computationally intensive. 

The two properties are inherited from a precedent simulation--based estimation approach called indirect inference, which was first developed by \cite{GourierouxMonfortRenault1993} and \cite{Smith1993}. In a nutshell, indirect inference uses an auxiliary model to summarize the statistical properties of the observed data and simulated data, and finds values of model parameters that match the parameters of the auxiliary model, estimated using the observed and simulated data, in terms of a minimum--distance criterion function. Because the same regression is run on observed and simulated data, matched estimators have the same bias structure and thus the bias gets cancelled. 

The theory of indirect inference, however, is not directly applicable to nonlinear panel models, which are widely used in various fields of economics like industrial organization and labor. Because the individual effects cannot be differenced out, data simulations seem infeasible without imposing a parametric specification on their distributions, and the bias term is a complicated function of $\theta_{0}$ and $\alpha_{i0}$'s. 

To simulate data, this paper proposes using the estimated individual effects $\widehat{\alpha}_{i}$'s. These are informative proxies for the unknown individual effects $\alpha_{i0}$'s because they become more accurate estimates when each individual's number of time series observations $T$ grows large. Intuitively speaking, although data simulated using $\widehat{\alpha}_{i}$'s do not perfectly mimic the observed data, such a difference vanishes when $T$ increases.

The indirect fixed effect estimator then debiases by matching the fixed effect estimates using observed and simulated data. This brings two advantages for the implementation and theoretical analysis of the new estimator. First, the minimum--distance criterion function for matching is just--identified because the dimensions of the fixed effect estimates are identical. Therefore, there is no need to consider an estimation of an optimal weighting matrix. It further implies that the matching can be made as exact as machine precision permits. The second advantage is with respect to the relationship between the vector of parameters of interest $\theta_{0}$ and the unique maximizer of the limiting log--likelihood function for fixed effect estimation. To back out point estimates of $\theta_{0}$ from fixed effect estimators using simulated data, this relationship should be invertible. Because the unique maximizer is $\theta_{0}$, the relation turns out to be an identity function. Therefore, invertibility is satisfied trivially. 

This paper derives consistency, bias correction and asymptotic normality results for the indirect fixed effect estimator. As usual in the indirect inference literature, consistency requires that the fixed effect estimates using observed and simulated data converge to the unique maximizer of the limiting log likelihood. Although the pointwise convergence of $\widehat{\theta}$ to $\theta_{0}$ is a standard result in the large--$T$ panel literature, three important differences arise in the analysis of fixed effect estimates using simulated data and pose theoretical challenges.

First, the simulated data are generated using $\widehat{\alpha}_{i}$'s instead of $\alpha_{i0}$'s. To justify this practice, the corresponding log likelihood function should uniformly well--approximate the one rendered by data simulated using the true individual effects. Otherwise, simulated fixed effect estimator could not be pointwise convergent. The proof of this statement, however, is complicated by the fact that the log likelihood function using simulated data is typically nonsmooth for important types of nonlinear panel models, with binary choice models as leading examples. Intuitively speaking, when the dependent variable is discrete, a small change in the parameter values can lead to discrete changes in the simulated data. As a result, the sample log likelihood function using simulated data is discontinuous.

Simulations often generate discontinuous objective functions \citep[e.g.,][]{McFadden1989, PakesPollard1989}, but this paper confronts a second difference: the fixed effect estimator using simulated data is nonsmooth with respect to the parameters of the data generating process (DGP). Therefore, standard proof strategies in the panel literature \citep[e.g.,][]{HahnNewey2004, HahnKuersteiner2011} cannot be directly applied to characterize its limiting behavior.  

Empirical process theory provides ample tools to handle nonsmoothness functions and moments in econometrics \citep{Andrews1994}, but the analysis of a nonsmooth fixed effect estimator is further complicated by the third difference: the presence of incidental parameters, whose number increases with the sample size $n$. 

To prove uniform convergence with nonsmoothness, this paper follows \cite{Newey1991} by establishing pointwise convergence and stochastic equicontinuity of the fixed effect estimator in the simulation world. Intuitively speaking, pointwise convergence is equivalent to uniform convergence for any finite number of grid points, but without smoothness, the gap between any two grids can behave rather erratically. The stochastic equicontinuity condition is hence required to restrict such behaviors in probability.

The theoretical analysis of the indirect fixed effect estimator relies on some key structures of the panel data and the log likelihood function. Under the assumption that panel data are independent along the cross section dimension, this paper first justifies data simulation with $\widehat{\alpha}_{i}$'s by proving that the corresponding log likelihood function uniformly approximates the one from simulated data generated by $\alpha_{i0}$'s. As such, a uniform law of large number can be established and pointwise convergence in the simulation world follows from the standard consistency argument \citep{NeweyMcFadden1994}. To verify the stochastic equicontinuity condition of fixed effect estimators using simulated data, this paper uses the concavity property of the profiled log likelihood to verify one of the primitive conditions for stochastic equicontinuity in \cite{Andrews1994}. The proof strategy might be of independent interest. 

Regarding the asymptotic unbiasedness and normality of the new estimator, due to non--smoothness in the simulation world, the conventional strategy in indirect inference that relies on the implicit function theorem \citep[e.g.,][]{GourierouxMonfortRenault1993} is not directly applicable. A regularity conditions is thus imposed, which, combined with consistency, allows to explore bias correction and asymptotic normality through the lens of fixed effect estimates. More specifically, the fixed effect estimators in both worlds have the same structures regarding the bias terms and influence functions. The difference is that the ones using the real data are functions of $\theta_{0}$ and $\alpha_{i0}$'s while those using the simulated data are functions of $\theta_{0}$ and $\widehat{\alpha}_{i}$'s. 

This paper currently imposes two high--level conditions to ensure that the bias term and the influence function from data simulated using $\theta_{0}$ and $\widehat{\alpha}_{i}$'s is uniformly close to their infeasible counterparts from data simulated using $\theta_{0}$ and $\alpha_{i0}$'s with asymptotically negligible approximation errors. The infeasible bias converges to the same probability limit as does the bias obtained from observed data, while the infeasible influence function converges to the same normal distribution as does the influence function from observed data. Therefore, the theory of indirect inference can be invoked to establish bias cancellation and asymptotic normality.

Like other simulation--based estimation methods, the asymptotic variance of the new estimator is inflated by the inverse of the number of simulation draws. The result can also be intepreted as a reflection of the classic bias--variance tradeoff. As shown in the application and Monte Carlo simulations, however, the finite--sample performance of the indirect fixed effect estimator is comparable to the leading methods in terms of bias correction and outperforms half--panel bias correction methods in terms of standard errors. 

\subsection*{Related Literature}
The indirect fixed effect estimator presented in this paper combines four strands of literature, of which this section provides a non--exhaustive overview. The incidental parameter problem is first discussed by \cite{NeymanScott1948}. When $T$ is fixed, fixed effect estimators of nonlinear models are in general inconsistent because estimation errors of $\widehat{\alpha}_{i}$'s do not vanish even when the cross--section sample size $n$ is very large \citep{Chamberlain1984, Lancaster2000}. Only some special models like static linear and logit specifications feature fixed--$T$ consistent estimators \citep{Anderson1970}. A key insight of the large--$T$ panel data literature is that the incidental parameter problem becomes an asymptotic bias problem when $T$ grow with the sample size $n$. When $n$ and $T$ grow at the same rate, fixed effect estimators are consistent and asymptotically normal, but they have a bias comparable to standard errors.

In the search for asymptotically unbiased estimators, there are two leading approaches. For certain types of models, the bias terms have been characterized analytically and corrected using a plug--in approach \citep{HahnKuersteiner2002, HahnNewey2004, Fernandez-Val2009, HahnKuersteiner2011}. However, such terms can be hard to derive for complicated models. Under further sampling and regularity conditions, bias terms can be automatically corrected using jackknife. For example, \cite{HahnNewey2004} proposed leave--one--out panel jackknife for data that do not have dependencies among observations of the same unit. \cite{DhaeneJochmans2015} relaxed the assumption to stationarity along the time series, and proposed a half--panel method. Under an unconditional homogeneity assumption, \cite{Fernandez-ValWeidner2016} allowed for two--way fixed effects and propose a jackknife method that corrects biases from both dimensions. See \cite{ArellanoHahn2007} and \cite{Fernandez-ValWeidner2018} for recent surveys. Standard errors are typically obtained using panel bootstrap, which can be computationally intensive.\footnote{For example, to obtain one debiased point estimate, fixed effect estimations are run three times: one for the whole sample, and twice for the two split samples. If the number of bootstraps is set to be 500, then the total number of fixed effects estimations becomes 1500. In addition, in practice it is often recommended to use multiple sample splits to improve the finite--sample performance.} 

Another popular simulation--based method that can achieve bias correction is bootstrap \citep{Horowitz2001, Horowitz2019}. \cite{GoncalvesKaffo2015} proposed the bootstrap bias correction methods for dynamic linear panel models without covariates. \cite{KimSun2016} proposed a parametric bootstrap bias correction (BBC) method for the nonlinear panel models considered in this paper. Compared to the indirect fixed effect estimator, the BBC estimator mimics the bias term and removes it from the fixed effect estimate explicitly, and thus the proof strategies are very different. 

%On the other hand, simulation--based methods are known to bias correct nonlinear estimators \citep{Horowitz2001, Horowitz2019}, and one particular method is indirect inference, first developed by \cite{GourierouxMonfortRenault1993} and \cite{Smith1993}. \cite{GourierouxPhillipsYu2010} consider dynamic linear panel models, whose fixed effect estimators are subject to incidental parameter problem \citep{Nickell1981}, and propose an indirect inference estimator that is unbiased and efficient. Their analysis, however, relies on the linear structure, under which the individual effects are differenced out. For nonlinear panel models, first--difference does not apply.

Second, this paper extends the existing theory and practice of indirect inference. Since the introduction of the method, its asymptotic theory has mainly been focused on times series data \citep{GourierouxMonfortRenault1993, Smith1993, GallantTauchen1996}. Some recent papers explore asymptotic properties in panel data with discrete dependent variables, but there are two key differences with this paper. First, their settings hold time series dimension fixed and study different types of models. For example, \cite{GII2018} did not consider models with fixed effects, \cite{FrazierOkaZhu2019} imposed normality on individual effects, and \cite{TaberSauer2021} assumed a bivarite normal distribution on the types of individuals. Second, they deal with nonsmoothness by smoothing the discontinuous parts and showing that the resultant bias can be corrected. 

\cite{GourierouxPhillipsYu2010} is the first paper that establishes theoretical properties of indirect inference for a class of large--$T$ panel models. They applied indirect inference to dynamic panel linear models, whose fixed effect estimators are known to be biased \citep{Nickell1981}. The linear structure allows them to eliminate the individual fixed effects $\alpha_{i0}$'s by first--difference. As such, $\alpha_{i0}$'s do not show up in the bias term, and data can be simulated without information on them. However, first difference does not remove the $\alpha_{i0}$'s to nonlinear panel models, and this paper fills the gap by extending the theory to handle the presence of $\alpha_{i0}$'s in data simulation and the bias term. 

Indirect inference is popular in various fields of economics, including empirical industrial organization \citep{CollardWexler2013}, labor economics \citep{AltonjiSmithVidangos2013} and macroeconomics \citep{GuvenenSmith2014, BergerVavra2019}. However, finding an informative auxiliary model is not a trivial task, and researchers often have to assume invertibility of the limiting relationship between auxiliary parameters and parameters of interest. This paper provides an alternative choice, namely the log likelihood function from the nonlinear panel model, for researchers that employ panel data with fixed effects. The estimation procedures are simple to implement as fixed effect estimation schemes are available in free software like R and Julia. 

Nonsmooth objective functions are common in econometrics, and empirical process methods are standard tools for asymptotic analysis \citep{Andrews1994, NeweyMcFadden1994, VanDerVaartWellner1996}. The seminal work on simulation--based methods by \cite{PakesPollard1989} is predicated on the independence assumption of cross section data and therefore is not suitable for panel data, which feature dependence for each individual time series. \cite{DedeckerLouhichi2002} provided an overview of maximal inequalities for empirical central limit theorems for dependent data. \cite{KatoGalvaoMontesRojas2012} provided new stochastic inequalities for mixing sequences and also established stochastic equicontinuity in the presence of nuisance parameters, but their analysis focused on a different class of nonlinear models, namely panel quantile regression models.

Simulation--based methods like simulated method of moments \citep{McFadden1989, PakesPollard1989, LeeIngram1991, DuffieSingleton1993} and indirect inference are widely used to estimate models that do not render tractable moments or likelihood functions. See \cite{GourierouxMonfort1997} for an overview. These methods typically require models to be fully specified, but economic theory does not always provide guidance on functional forms, distributions of shocks or measurement error of observed data. Therefore, the resultant estimators can be subject to misspecification. 

This paper considers a class of nonlinear panel models that do not impose distributional assumptions on the individual effects, and hence contributes to a burgeoning literature that considers simulations for models that relax the full parametric specifications in various ways. \cite{DridiRenault2000} and \cite{ DridiGuayRenault2007} embedded the semiparametric structural model into a full model for data simulation, and proposed an encompassing principle where parameters of interest are consistently estimated even though nuisance parameters are inconsistently estimated due to misspecification of the full model. \cite{Schennach2014} considered parameters estimation in moment conditions that contain unobservable variables, and proposed a simulation--based method that constructs equivalent moments involving only observable variables. \cite{GospodinovKomunjerNg2017} considered parameter estimation of autoregressive distributed lag models in which covariates are contaminated by serially correlated measurement errors. They proposed a method such that simulated covariates preserve the dependence structure observed in the data even though the dynamics of latent covariates or measurement errors are not specified. \cite{Forneron2020} approximated the distribution of shocks by sieves and proposes a sieve--SMM estimator that jointly estimates structural parameters and the distribution of shocks. 

\subsection*{Structure of the Paper}
The rest of the paper proceeds as follows: Section~\ref{Section: Model and FE} introduces the model and describes the fixed effect estimator and incidental parameter problem. Section~\ref{Section: Overview} provides an overview of the indirect fixed effect estimator and its implementation. Section~\ref{Section: Asymptotics} presents the theoretical properties of the estimator. Section~\ref{Section: Application} applies the method to a dataset on labor force participation to illustrate the finite--sample properties of the estimator. Section~\ref{Section: Monte Carlo} uses numerical simulations to compare the new estimator with other bias correction methods. Section~\ref{Section: Conclusion} concludes and discusses open questions. Appendices~\ref{Appendix: Auxiliary Lemmas}, \ref{Appendix: Main Theorem Proof} and \ref{Appendix: Computation} consist of proofs and computation details. 
%%%%%%%%%%%%%%%%%%%%%%%%%%%%%%%%%%%%%%%%%%%%%%%
%%%%%%%%%%%%%%%%%%%%%%%%%%%%%%%%%%%%%%%%%%%%%%%
\section{Nonlinear Panel Model and Fixed Effect Estimator} \label{Section: Model and FE}
This section starts with a description of nonlinear panel models with fixed effects. Let the data observations be denoted by $\{z_{it}=(y_{it}, x_{it})$: $i=1,\dots,n; t=1,\dots,T\}$, where $y_{it}$ is the dependent variable and $x_{it}$ is a $p\times 1$ vector of explanatory variables. The observations are independent across entity $i$ and weakly dependent across time $t$. The DGP of outcome $y_{it}$ takes the following form:
\begin{equation} \label{structural model}
    y_{it} \mid x^{T}_{i},\alpha_{i},\theta \sim f(\cdot \mid x_{it}; \alpha_{i}, \theta),
\end{equation}
where $x_{i}^{T}:=(x_{i1},\dots,x_{iT})$, $\theta$ is a $p\times 1$ vector of model parameters and $\alpha_{i}$ is a scalar individual effect. The explanatory variable $x_{it}$ is strictly exogenous. The model is semiparametric in that neither the distribution of $\alpha_{i}$ nor its relationship with $x_{it}$ is specified. The conditional density $f$ denotes the parametric part of the model and its form depends on the parametric family of distributions $\{u_{it}\}$. Depending on the specification of $f$, this type of models have been used to study many different questions of economic interest. 

\begin{example}[Discrete Choice Model]\rm
Let $y_{it}$ denote a binary variable and $F_{u}$ a cumulative distribution function (CDF), e.g., the standard normal or logistic distribution. Suppose the binary variable is generated by the following single index process with additive individual effects:
$$y_{it}=\boldsymbol{1}\{x'_{it}\theta + \alpha_{i} \geq u_{it}\}, \quad u_{it} \mid x^{T}_{i}, \alpha_{i} \sim F_{u},$$
where $\boldsymbol{1}\{\cdot\}$ denotes the indicator function. Then the conditional distribution of $y_{it}$ is expressed as 
$$f(y_{it}\mid x_{it}; \alpha_{i}, \theta) = F_{u}(x'_{it}\theta+\alpha_{i})^{y_{it}}(1-F_{u}(x'_{it}\theta+\alpha_{i}))^{1-y_{it}}.$$
\cite{HelpmanMelitzRubinstein2008} modeled a country's export decision as  and estimate a gravity equation with country fixed effects. \cite{Fernandez-Val2009} used a  specification to estimate the determinants of females' labor force participation decisions in the presence of time--invariant heterogeneity such as willingness to work and ability. \cite{CollardWexler2013} used a binary logit specification with market--fixed effects to study whether a ready--mix concrete plant decides to be active in a market. 
\end{example}

\begin{example}[Poisson Regression Model]\rm
The Poisson distribution is useful in modeling count data. Let $y_{it}$ denote the number of arrivals of new events within a certain time interval for entity $i$ in year $t$. For $\lambda_{it}=\exp(x'_{it}\theta+\alpha_{i})$, the conditional density is modeled as
$$f(y_{it}\mid x_{it}; \alpha_{i}, \theta) = \frac{\lambda^{y_{it}}_{it}\exp(-\lambda_{it})}{y_{it}!}\boldsymbol{1}\{y_{it}\in\{0, 1, \dots\}\}.$$
Using the number of citation--weighted patents as a proxy for innovation, \cite{ABBGH2005} employed this specification to study the relationship between innovation and competition with industry fixed effects. 
\end{example}

Model~(\ref{structural model}) admits a log likelihood function. The true values of the parameters, denoted by $\theta_{0}$ and $\alpha_{i0}$'s, are one solution to the population conditional maximum likelihood problem
\begin{equation} \label{population criterion}
    (\theta_{0}, \alpha_{10},\dots,\alpha_{n0})\in\argmax_{(\theta, \alpha_{1},\dots,\alpha_{n})\in\Theta\times\Gamma^{n}_{\alpha}}\frac{1}{nT}\sum^{n}_{i=1}\sum^{T}_{t=1}\mathbb{E}\Big[\ln f(y_{it}\mid x_{it}; \alpha_{i}, \theta)\Big],
\end{equation}
where $\Theta$ and $\Gamma_{\alpha}$ are the parameter spaces for $\theta$ and $\alpha_{i}$ respectively, the expectation is with respect to the distribution of the observed data, conditional on the unobserved effects and initial conditions. Section~\ref{Section: Asymptotics} discusses assumptions under which the log likelihood function is concave in all parameters and the solution uniquely exists. The indirect fixed effect estimator relies on the uniqueness condition for consistency.
%%%%%%%%%%%%%%%%%%%%%%%%%%%%%%%%%%%%%%%%%%%%%%%
\subsection{The Fixed Effect Estimator}
The fixed effect estimator of $\theta$ is obtained by doing maximum likelihood estimation on the sample analog of the population problem~(\ref{population criterion}), treating each $\alpha_{i}$ as a parameter to be estimated. 
\begin{equation*} 
    (\widehat{\theta},          \widehat{\alpha}_{1},\dots,\widehat{\alpha}_{n})\in\argmax_{(\theta, \alpha_{1},\dots,\alpha_{n})\in\Theta\times\Gamma^{n}_{\alpha}}\frac{1}{nT}\sum^{n}_{i=1}\sum^{T}_{t=1}\ln f(y_{it}\mid x_{it}; \alpha_{i}, \theta).
\end{equation*}
To facilitate theoretical analysis, this equation is rewritten such that the individual effects are profiled out. More specifically, given $\theta$, the optimal $\widehat{\alpha}_{i}(\theta)$ for each $i$ is defined as  
\begin{equation*} 
    \widehat{\alpha}_{i}(\theta)\in\argmax_{\alpha\in\Gamma_{\alpha}}\frac{1}{T}\sum^{T}_{t=1}\ln f(y_{it} \mid x_{it}; \alpha, \theta).
\end{equation*}
The estimators $\widehat{\theta}$ and $\widehat{\alpha}_{i}$ are then
\begin{equation} \label{theta_hat expression} 
    \widehat{\theta}\in\argmax_{\theta\in\Theta}\frac{1}{nT}\sum^{n}_{i=1}\sum^{T}_{t=1}\ln f(y_{it} \mid x_{it}; \widehat{\alpha}_{i}(\theta), \theta), \quad \widehat{\alpha}_{i}=\widehat{\alpha}_{i}(\widehat{\theta}).
\end{equation}
Section~\ref{Section: Asymptotics} discusses assumptions under which these estimators exist and are unique with probability approaching one as $n$ and $T$ become large. 
%%%%%%%%%%%%%%%%%%%%%%%%%%%%%%%%%%%%%%%%%%%%%%%
\subsection{The Incidental Parameter Problem}
In panel models, the individual effects are incidental parameters, i.e., nuisance parameters whose dimension grows with the number of cross sectional observations $n$. As equation~(\ref{theta_hat expression}) shows, the fixed effect estimator $\widehat{\theta}$ cannot generally be separated from the estimator of individual effects $\widehat{\alpha}_{i}$'s. Because each $\widehat{\alpha}_{i}$ is only estimated using the $T$ observations for entity $i$, its estimation error does not vanish if $T$ is fixed, even as $n$ grows. These estimation errors in turn contaminate $\widehat{\theta}$. This is the incidental parameter problem for fixed effects estimation. Mathematically, 
\begin{equation*}
    \widehat{\theta}\xrightarrow{p}\theta_{T}:=\argmax_{
\theta\in\Theta}\plim_{n\rightarrow\infty}\frac{1}{n}\sum^{n}_{i=1}\Big(\frac{1}{T}\sum^{T}_{t=1}\ln f(y_{it}\mid x_{it}, \theta, \widehat{\alpha}_{i}(\theta))\Big),
\end{equation*}
whereas
$$\theta_{0}:=\argmax_{
\theta\in\Theta}\plim_{n\rightarrow\infty}\frac{1}{n}\sum^{n}_{i=1}\Big(\frac{1}{T}\sum^{T}_{t=1}\mathbb{E}\ln f(y_{it}\mid x_{it}, \theta, \alpha_{i}(\theta))\Big),$$
where $$\alpha_{i}(\theta)=\argmax_{\alpha\in\Gamma_{\alpha}}\frac{1}{T}\sum^{T}_{t=1}\mathbb{E}(\ln f(y_{it}\mid x_{it}, \theta, \alpha)).$$ 
With fixed $T$, $\widehat{\alpha}_{i}(\theta)\neq \alpha_{i}(\theta)$ in general. Therefore, $\theta_{T}\neq \theta_{0}$.   

To illustrate the problem, suppose $y_{it}$ has the normal distribution with mean $\alpha_{i0}$ and variance $\theta_{0}$, and the goal is to estimate $\theta_{0}$. The fixed effect estimator is  
$\widehat{\theta}=\frac{1}{nT}\sum^{n}_{i=1}\sum^{T}_{t=1}(X_{it}-\widehat{\alpha}_{i})^{2},$ where $\widehat{\alpha}_{i}=\frac{1}{T}\sum^{T}_{t=1}X_{it}.$ When $T$ is fixed and $n$ approaches infinity, \cite{NeymanScott1948} show that 
$$\widehat{\theta}\xrightarrow{p}\theta_{0}-\frac{\theta_{0}}{T}.$$
On the other hand, when $T$ also grows to infinity, the bias term $-\frac{\theta_{0}}{T}$ approaches zero. The large--$T$ panel literature generalizes this insight and shows that the incidental parameter problem becomes an asymptotic bias problem when $n$ and $T$ grow at the same rate. 
%%%%%%%%%%%%%%%%%%%%%%%%%%%%%%%%%%%%%%%%%%%%%%%
%%%%%%%%%%%%%%%%%%%%%%%%%%%%%%%%%%%%%%%%%%%%%%%
\section{The Indirect Fixed Effect Estimator} \label{Section: Overview}
The key feature of the indirect fixed effect estimator is to match $\widehat{\theta}$ with a fixed effect estimator from simulated data generated by $\widehat{\alpha}_{i}$'s and a given $\theta$. To avoid confusion, it is necessary to introduce notations to distinguish parameters in the simulation world from those in Section~\ref{Section: Model and FE}. More specifically, this paper uses $\beta$ and $\gamma_{i}$ to denote the vector of parameters of interest and individual effects in the log likelihood function using simulated data. 

To clarify the notations and introduce the implementation of indirect fixed effect estimator, this section first revisits the Neyman--Scott example. Using the panel  model as a concrete example, this section then illustrates the challenges associated with the presence of nonsmoothness and discusses the general estimation procedures.
%%%%%%%%%%%%%%%%%%%%%%%%%%%%%%%%%%%%%%%%%%%%%%%
\subsection{Neyman--Scott Example Revisited}
If $y_{it}\mid \alpha_{i0} \sim \mathcal{N}(\alpha_{i0}, \theta_{0})$ is i.i.d over $n$ and $t$, then the DGP of the observed data is:
\begin{equation*} 
    y_{it}(\alpha_{i0}, \theta) = \alpha_{i0}+\sqrt{\theta}u_{it}, \quad u_{it}\sim\mathcal{N}(0, 1).
\end{equation*}
This equation cannot be simulated without information on the distribution of $\alpha_{i0}$'s. The indirect fixed effect estimator uses $\widehat{\alpha}_{i}$'s instead, and the simulated data have the following DGP: 
\begin{equation*} 
    y^{h}_{it}(\widehat{\alpha}_{i}, \theta) = \widehat{\alpha}_{i}+\sqrt{\theta}u^{h}_{it}, \quad u^{h}_{it}\sim\mathcal{N}(0, 1),
\end{equation*}
where the superscript $h$ denotes a simulation path. The fixed effect estimator using $\{y^{h}_{it}(\widehat{\alpha}_{i}, \theta)\}$ is
\begin{equation*} 
    \widehat{\beta}^{h}(\theta, \boldsymbol{\widehat{\alpha}}) :=\frac{1}{nT}\sum^{n}_{i=1}\sum^{T}_{t=1}(y^{h}_{it}(\widehat{\alpha}_{i}, \theta)-\widehat{\gamma}_{i})^{2}=\frac{\theta}{nT}\sum^{n}_{i=1}\sum^{T}_{t=1}(u^{h}_{it}-\frac{1}{T}\sum^{T}_{t=1}u^{h}_{it})^{2},
\end{equation*} 
where $\boldsymbol{\widehat{\alpha}} := (\widehat{\alpha}_{1},\dots,\widehat{\alpha}_{n})$ and  $\widehat{\gamma}_{i}:=\frac{1}{T}\sum^{T}_{t=1}y^{h}_{it}(\widehat{\alpha}_{i}, \theta)$. The interpretation of $\widehat{\beta}^{h}(\theta, \boldsymbol{\widehat{\alpha}})$ is that the estimator changes if a different value of $\theta$ is used to simulate the data. Note that the $\widehat{\alpha}_{i}$'s are fixed throughout the simulation process. The indirect fixed effect estimator $\widetilde{\theta}$ is the solution to
$$\widehat{\theta}=\widehat{\beta}^{h}(\widetilde{\theta}, \boldsymbol{\widehat{\alpha}}).$$ 
Figure~(\ref{fig:neyman-scott}) illustrates the issues with $\widehat{\theta}$ and the performance of $\widetilde{\theta}$ in this example. The density of the fixed effect estimator $\widehat{\theta}$ conveys two messages: (1) fixed effect estimator is subject to a large bias and (2) a confidence interval around $\widehat{\theta}$ would not have the correct coverage. The density of $\widetilde{\theta}$ illustrates that (1) the new estimator corrects the bias significantly and (2) a confidence interval around $\widetilde{\theta}$ would have a much larger coverage than the one around $\widehat{\theta}$.  
\begin{figure}[H]
\centering
\caption{Comparison of FE and IFE}
\label{fig:neyman-scott}
\includegraphics[width=0.9\columnwidth]{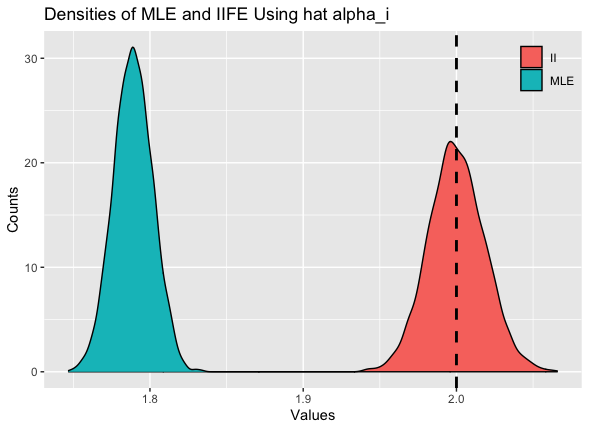}
\begin{minipage}{0.85\textwidth}
	{\footnotesize Note: Density plots of fixed effects and indirect fixed effect estimator for $\theta_{0}$. The DGP is $y_{it}=\alpha_{i0}+\sqrt{\theta_{0}}u_{it}$, where $u_{it}\sim\mathcal{N}(0, 1).$ The true value $\theta_{0}=2$ is depicted by the dashed line and $\alpha_{i0} = i$ for $i=1,\dots,n$. The sample size is $n = 2500, T = 5$ and number of simulation $H$ is set to be 1. The simulations are conducted 5000 times.\par}
\end{minipage}
\end{figure}

\begin{remark}[Caveat] \label{Remark: caveat} \rm
Due to the simple structure of the data, $\widehat{\theta}$ and $\widehat{\beta}^{h}(\theta, \boldsymbol{\widehat{\alpha}})$ have closed--form expressions, and $\widehat{\beta}^h(\theta,\boldsymbol{\widehat{\alpha}}) = \widehat{\beta}^h(\theta)$, i.e. the fixed effect estimator with the simulated data does not depend on the fixed effects. However, it is not generally the case in model~(\ref{structural model}). 
\end{remark} 
%%%%%%%%%%%%%%%%%%%%%%%%%%%%%%%%%%%%%%%%%%%%%%%%%%%%%%%%%%%%%%%%%
\subsection{Challenges due to Nonsmoothness with Incidental Parameters}
Consider the binary choice panel  model as a concrete example. Given $\theta$, $\widehat{\alpha}_{i}$'s and $x_{it}$, the simulated dependent variable is
$$y_{it}^{h}(\theta, \widehat{\alpha}_{i})=\boldsymbol{1}(x'_{it}\theta+\widehat{\alpha}_{i}>u^{h}_{it}), \quad u^{h}_{it}\sim\mathcal{N}(0, 1),$$
where $u^{h}_{it}$ are simulation draws from the standard normal distribution. The corresponding log likelihood function is 
\begin{equation} \label{simulated ll for }
    \frac{1}{nT}\sum^{n}_{i=1}\sum^{T}_{t=1}y_{it}^{h}(\theta, \widehat{\alpha}_{i})\log\Big(\Phi(x'_{it}\beta+\gamma_{i})\Big)+(1-y_{it}^{h}(\theta, \widehat{\alpha}_{i}))\log\Big(1-\Phi(x'_{it}\beta+\gamma_{i})\Big).
\end{equation}
It illustrates the three different aspects of simulated fixed effect estimator $\widehat{\beta}^{h}(\theta, \boldsymbol{\widehat{\alpha}})$. Because $y^{h}_{it}(\theta, \widehat{\alpha}_{i})$ is discontinuous in $\theta$ and $\widehat{\alpha}_{i}$'s, equation~(\ref{simulated ll for }) is discontinuous, which carries over to its maximizer $\widehat{\beta}^{h}(\theta, \boldsymbol{\widehat{\alpha}})$. In addition, estimating $\widehat{\beta}^{h}(\theta, \boldsymbol{\widehat{\alpha}})$ involves incidental parameters $\gamma_{i}$'s. The population version of equation~(\ref{simulated ll for }) does not have randomness due to data sampling, use of simulations and $\widehat{\alpha}_{i}$'s, and takes the form
\begin{equation} \label{limiting ll for }
    \frac{1}{nT}\sum^{n}_{i=1}\sum^{T}_{t=1}\mathbb{E}\Big[y_{it}(\theta, \alpha_{i0})\log\Big(\Phi(x'_{it}\beta+\gamma_{i})\Big)+(1-y_{it}(\theta, \alpha_{i0}))\log\Big(1-\Phi(x'_{it}\beta+\gamma_{i})\Big)\Big],
\end{equation}
where the expectation is over $u^{h}_{it}$ and $x_{it}$, and $\widehat{\alpha}_{i}$'s are replaced by $\alpha_{i0}$'s. 

\noindent \begin{remark}[A comparison with panel quantile regression (QR) models] \label{Remark: comparison with QR} \rm
One important type of nonlinear panel models that is not included in model~(\ref{structural model}) but also features nonsmoothness is panel QR models.\footnote{See \cite{GalvaoKato2018} for a recent survey.} \cite{KatoGalvaoMontesRojas2012} considered the following QR model with individual effects:
\begin{equation*}
    Q_{\tau}(y_{it}\mid x_{it}, \gamma_{i0}(\tau)) = \gamma_{i0}(\tau) + x_{it}'\beta_{0}(\tau),
\end{equation*}
where $\tau\in(0, 1)$ is a quantile index, and $Q_{\tau}(y_{it} \mid x_{it}, \gamma_{i0}(\tau))$ is the conditional $\tau$--quantile of $y_{it}$ given $(x_{it}, \gamma_{i0}(\tau))$. 
The fixed effects quantile regression (FE--QR) estimator for this model is
\begin{equation*}
    (\widehat{\gamma}_{\text{FE-QR}}, \widehat{\beta}_{\text{FE-QR}}) \in \argmin \frac{1}{nT}\sum^{n}_{i=1}\sum^{T}_{t=1}\rho_{\tau}(y_{it} - \gamma_{i} - x_{it}'\beta),
\end{equation*}
where $\gamma:=(\gamma_{1},\dots,\gamma_{n})'$ and $\rho_{\tau}(u):=\{\tau - \boldsymbol{1}\{u\leq 0\}\}u$ is the check function. The FE--QR estimator $\widehat{\beta}_{\text{FE-QR}}$ also contains estimated individual fixed effects and is non--smooth with respect to $\beta$ or $\gamma_{i}$'s because the check function is not differentiable. However, the estimator proposed in this paper is different in two aspects. First, the criterion function~(\ref{simulated ll for }) is still smooth in $\beta$ and $\gamma_{i}$'s. Instead, the source of non--smoothness comes from data simulations. Second, in panel QR models, the DGP does not involve estimated individual fixed effects. Therefore, the theoretical results in \cite{KatoGalvaoMontesRojas2012} do not apply here.
%\footnote{More specifically, the CLT and bias terms for smooth likelihood are combinations of high--order stochastic expansion terms of score functions, which is hard to derive for panel QR models.}
\end{remark}
%%%%%%%%%%%%%%%%%%%%%%%%%%%%%%%%%%%%%%%%%%%%%%%%%%%%%%%%%%%%%%%%%
\subsection{General Estimation Procedures}
\setcounter{example}{0}
From the known distribution $F_{u}$, the simulated unobservables $\{u^{h}_{it}\}$ are independently drawn for $h=1,\dots,H$, where $H$ denotes the number of simulated panel data sets. For a given value of $\theta$, let $y^{h}_{it}(\theta, \widehat{\alpha}_{i})$ denote the simulated dependent variable for simulation path $h$, then the sample log likelihood function using the $h$--th simulated data is
\begin{equation} \label{eqn: criterion in simulation}
    \frac{1}{nT}\sum^{n}_{i=1}\sum^{T}_{t=1}\ln f(y^{h}_{it}(\theta, \widehat{\alpha}_{i}) \mid x_{it}; \beta, \gamma_{i}),
\end{equation}
where $\beta$ and $\gamma_{i}$ respectively denote the finite--dimensional parameter and incidental parameter in the simulation world. The fixed effect estimator to this problem is 
\begin{equation*} 
    \widehat{\beta}^{h}(\theta, \boldsymbol{\widehat{\alpha}})=\argmax_{\beta\in\mathbb{R}^{p}}\frac{1}{nT}\sum^{n}_{i=1}\sum^{T}_{t=1}\ln f(y^{h}_{it}(\theta, \widehat{\alpha}_{i}) \mid x_{it}; \beta, \widehat{\gamma}_{i}(\beta, \theta, \widehat{\alpha}_{i})),
\end{equation*}
where 
$$\widehat{\gamma}_{i}(\beta, \theta, \widehat{\alpha}_{i})=\argmax_{\gamma\in\mathbb{R}}\frac{1}{T}\sum^{T}_{t=1}\ln f(y^{h}_{it}(\theta, \widehat{\alpha}_{i}) \mid x_{it}; \beta, \gamma).$$
Repeating this estimation for all the simulated panel data, the following average can be computed: 
$$\widehat{\beta}_{H}(\theta, \boldsymbol{\widehat{\alpha}}):=\frac{1}{H}\sum^{H}_{h=1}\widehat{\beta}^{h}(\theta, \boldsymbol{\widehat{\alpha}}),$$
and the indirect fixed effect estimator $\widetilde{\theta}^{H}$ is the solution to
\begin{equation} \label{eqn: estimation equation}
    \widehat{\theta}=\widehat{\beta}_{H}(\widetilde{\theta}^{H}, \boldsymbol{\widehat{\alpha}}),
\end{equation}
where the superscript $H$ stresses that the finite--sample performance depends on the number of simulations conducted. The box below summarizes the steps required to compute the estimator.
\begin{algorithm}[H]
\centering
\caption*{\textbf{Algorithm:} Computing the indirect fixed effect estimator} \label{alg: indirect--FE}
\begin{algorithmic}[1]
\item[(i)] Obtain $(\widehat{\theta}, \widehat{\alpha}_{1},\dots,\widehat{\alpha}_{n})$ using the observed data.
\vskip 0.2cm
\item[(ii)] Set a random seed and $H$. For each $i$ and $t$, draw unobservables $\{u^{h}_{it}\}^{H}_{h=1}$ from $F_{u}$.
\vskip 0.2cm
\item[(iii)] Given $\theta$ and $\widehat{\alpha}_{i}$'s, use model~(\ref{structural model}) and $\{u^{h}_{it}\}$ to simulate dependent variable
$\{y^{h}_{it}(\theta, \widehat{\alpha}_{i})\}$; construct data $\{y^{h}_{it}(\theta, \widehat{\alpha}_{i}), x_{it}\}$, where $i=1,\dots,n$ and $t=1,\dots,T$.
\vskip 0.2cm
\item[(iv)] Obtain $\widehat{\beta}^{h}(\theta, \boldsymbol{\widehat{\alpha}})$ using the simulated data in Step (iii).
\vskip 0.2cm
\item[(v)] Repeat steps (ii) and (iii) for all $h=1,\dots,H$ and solve for equation~(\ref{eqn: estimation equation}).
\end{algorithmic}
\end{algorithm}
\begin{remark}[Common random number] \label{Remark: CRN} \rm
Step $(ii)$ follows the standard practice for simulations \citep{GlassermanYao1992} by drawing unobserved shocks at the beginning of the algorithm. It implies that $\widehat{\beta}^{h}(\theta, \boldsymbol{\widehat{\alpha}})$ and  $\widehat{\beta}^{h'}(\theta, \boldsymbol{\widehat{\alpha}})$ are independent for $h\neq h'$ conditional on $x_{it}$. 
\end{remark}

\begin{remark} [The role of $H$] \label{Remark: the role of H} \rm
The number $H$ affects the finite--sample performance of the estimator, and increasing $H$ reduces the asymptotic variance. Just like SMM and indirect inference, there is a trade off between precision of the estimator and intensity of computation. The estimation method, however, is different from the simulated maximum likelihood \citep{ManskiLerman1981}, which is inconsistent for fixed $H$ due to a nonlinear transformation of simulated choice probabilities. 
\end{remark}

%\begin{remark} [Choices of optimization algorithms] %\label{Remark: optimization algorithms} \rm
%When computing fixed effect estimators in the simulation world, the discontinuity nature makes gradient--based optimization methods unsuitable, so simplex--based methods like Nelder--Mead are used instead.
%\end{remark}
%%%%%%%%%%%%%%%%%%%%%%%%%%%%%%%%%%%%%%%%%%%%%%%
%%%%%%%%%%%%%%%%%%%%%%%%%%%%%%%%%%%%%%%%%%%%%%%
\section{Asymptotic Properties} \label{Section: Asymptotics}
This section starts with a discussion of the main assumptions that lead to theoretical properties of $\widehat{\theta}$ and $\widehat{\alpha}_{i}$'s. These assumptions are standard in large--$T$ panel data models \citep{HahnKuersteiner2011}, and they also impose certain structures that help establish the asymptotic properties of the indirect fixed effect estimator. Additional assumptions are imposed to ensure the simulations do not affect the panel data structure.

\begin{asu}[Large $T$ asymptotics] \label{largeT}
$n, T\rightarrow\infty$ such that $nT^{-1}\rightarrow\kappa\in(0, \infty).$
\end{asu}
Assumption~\ref{largeT} requires that the time series dimension grows at the same rate as the cross section dimension. The assumption defines the large--$T$ asymptotics framework and allows to transform the incidental parameter problem from a consistency to a bias problem, the latter of which can be quantified. From a practitioner's point of view, if the ratio $T/n$ is not negligible, then it is reasonable to consider the large T asymptotics.

\begin{asu} [Sampling of observed data] \label{assumption on data}
 \subasu \label{cross-section independence}
$\{z_{it}\}_{t=1}^{\infty}$ are independent across $i$; 
\subasu \label{alpha mixing}
For each $i$, $\{z_{it}\}_{t=1}^{\infty}$ is a stationary $\alpha$--mixing sequence with mixing coefficient $\alpha_{i}(m)$ such that
$\sup_{i}\lvert \alpha_{i}(m)\rvert\leq Ka^{m}$ for some $a$ such that $0<a<1$ and some $K>0$.
%where $\mathcal{A}^{i}_{t}\equiv \sigma(z_{it}, z_{i,t-1} , z_{i,t-2},\dots)$, $\mathcal{B}^{i}_{t}\equiv \sigma(z_{it}, z_{i,t+1}, z_{i,t+2}, \dots)$, and 
%$$\alpha_{i}(m)\equiv \sup_{t}\sup_{A\in\mathcal{A}^{i}_{t}, B\in\mathcal{B}^{i}_{t+m}}\lvert P(A\cap B)-P(A)P(B)\rvert.$$
\end{asu}
Assumption~\ref{cross-section independence} imposes independence along the cross--section dimension. Assumption~\ref{alpha mixing} imposes a weak temporal dependence on each individual time series. The quantity $\alpha_{i}(m)$ measures for each $i$ how much dependence exists between data separated by at least $m$ time periods, and a uniform bound is imposed so as to bound covariances and moments when using law of large numbers (LLN) and central limit theorem (CLT). Interested readers can refer to Section 3.4 in \cite{White2000} for definitions and properties. Note that Assumption~\ref{assumption on data} rules out non--stationary explanatory variables such as time effects and linear trends. 

\begin{asu}[Identification] \label{identification}
Denote $G_{(i)}(\theta, \alpha_{i})\equiv \frac{1}{T}\sum^{T}_{t=1}\mathbb{E}\Big[\ln f(y_{it}\mid x_{it}; \theta, \alpha_{i})\Big]$. For each $\eta>0$, 
$$\inf_{i}\Big[G_{(i)}(\theta_{0}, \alpha_{i0})-\sup_{(\theta, \alpha): \lVert (\theta, \alpha) - (\theta_{0}, \alpha_{i0})\rVert\geq\eta}G_{(i)}(\theta, \alpha)\Big]>0.$$
\end{asu}
Assumption~\ref{identification} is a sufficient condition that ensures the log likelihood function admits a unique maximizer based on the time series variation. This assumption allows to prove the consistency of fixed effect estimators under large--T asymptotics. The indirect fixed effect estimator also requires this assumption for consistency.

\begin{asu}[Envelope condition] \label{Lipschitz}
\subasu \label{compactness} The parameter $\varphi_{i}:=(\theta, \alpha_{i})\in \text{int } \Theta\times\Gamma_{\alpha},$ where $\Theta$ and $\Gamma_{\alpha}$ are compact, convex subsets of $\mathbb{R}^{p}$ and $\mathbb{R}$ respectively. \\
\subasu \label{envelope} There exists an envelope function $M(z_{it})$ such that  
$$\lVert D^{\nu} \ln f(y_{it}\mid x_{it}; \varphi_{1})- D^{\nu} \ln f(y_{it}\mid x_{it}; \varphi_{2})\rVert\leq M(z_{it})\lVert \varphi_{1} - \varphi_{2}\rVert,$$ 
where $\nu:=(\nu_{1},\dots,\nu_{p+1})$ is a vector of non--negative integers $\nu_{j}$ and $\lvert\nu\rvert=\sum^{p+1}_{j=1}\nu_{j}$. In addition, for $\ln f(y_{it}\mid x_{it}; \varphi):=\ln f(y_{it}\mid x_{it}; \theta, \alpha_{i})$,
$$D^{\nu} \ln f(y_{it}\mid x_{it}; \varphi):=\partial^{\lvert\nu\rvert}\ln f(y_{it}\mid x_{it}; \varphi)/(\partial\varphi_{1}^{\nu_{1}}\dots\partial\varphi_{p+1}^{\nu_{p+1}})$$ 
and $\lvert\nu\rvert\leq 5$, the function $M(z_{it})$ satisfies 
$$\sup_{\varphi_{i}\in int\Theta\times\Gamma_{\alpha}}\lVert D^{\nu}\ln f(y_{it}\mid x_{it}; \varphi)\rVert\leq M(z_{it})$$ and $\sup_{i}\mathbb{E}(\lvert M(z_{it})\rvert^{(10+10q)/(1-10\tilde{\nu})+\delta})<\infty$ for some integer $q\geq p/2+2$, $\delta>0$ and $0<\tilde{\nu}<1/10$.
\end{asu}
Assumption~\ref{compactness} imposes compactness of parameter space, which is standard for establishing asymptotic properties of extremum estimators \citep{NeweyMcFadden1994}. Compactness is convenient for proving uniform convergence with nonsmooth criterion functions \citep{Newey1991}. Assumption~\ref{envelope} imposes a Lipschitz condition on the log likelihood function and a moment condition on the envelope function. This allows to establish uniform law of large number (ULLN) of sample log likelihood function and hence the pointwise consistency of $\widehat{\theta}$.  

Under these assumptions and some regularity conditions on the Hessian matrix, \cite{HahnKuersteiner2011} established the following two results:
\begin{equation} \label{uniform consistency of alphai}
    \max_{1\leq i\leq n}\lvert \widehat{\alpha}_{i}-\alpha_{i0}\rvert = o_{p}(1),
\end{equation}
\begin{equation} \label{consistency of fe}
    \widehat{\theta} = \theta_{0} + \frac{A(\theta_{0}, \boldsymbol{\alpha_{0}})}{\sqrt{nT}} + \frac{B(\theta_{0}, \boldsymbol{\alpha_{0}})}{T} + o_{p}\Big(\frac{1}{T}\Big),
\end{equation}
where $\boldsymbol{\alpha_{0}}:=(\alpha_{10},\dots,\alpha_{n0})$. Equation~(\ref{uniform consistency of alphai}) states that the maximal deviation of $\widehat{\alpha}_{i}$ from $\alpha_{i0}$ converges to zero. This uniform consistency result is crucial for the theory of indirect fixed effect estimator because it justifies the usage of $\widehat{\alpha}_{i}$'s for data simulations. Equation~(\ref{consistency of fe}) characterizes the asymptotic relationship between $\widehat{\theta}$ and $\theta_{0}$. The term $A(\theta_{0}, \boldsymbol{\alpha_{0}})$ is the influence function that satisfies the central limit theorem (CLT) with mean zero. The term $B(\theta_{0}, \boldsymbol{\alpha_{0}})$ converges to its expected value. Therefore, $\widehat{\theta}$ is consistent, asymptotically normal, but biased. \cite{HahnKuersteiner2011} derived the analytical forms of both terms, which are complicated functions of $\theta_{0}$ and $\boldsymbol{\alpha_{0}}$. 

%Because the same regression is run on simulated data, the same structure arises. The indirect fixed effect estimator finds solution by matching the bias terms, but the asymptotic variance is inflated.
Because the new estimator involves simulations, the following regularity condition is required so that the simulated data still maintain the mixing properties. Another regularity condition is that the parameter space in the simulation world is compact. Because $(\beta, \gamma)$ is just a change of notation from $(\theta, \alpha)$, this assumption is natural. 

\begin{asu} [Simulation] \label{ass: simulation}
\subasu \label{simulation mixing} 
Assumption~\ref{assumption on data} holds for the simulated process for all $\theta\in\Theta$. 
\subasu \label{compactness in simulation} 
The parameter spaces for $\beta$ and $\gamma_{i}$, $\Theta_{\beta}$ and $\Gamma_{\gamma}$ are compact. 
\end{asu}
In sum, Assumption~\ref{ass: simulation} allows for an asymptotic representation of simulated fixed effect estimator $\widehat{\beta}^{h}(\theta, \widehat{\boldsymbol{\alpha}})$ that resembles the one for $\widehat{\theta}$, i.e., equation~(\ref{consistency of fe}). 
\begin{remark}[Inferring $\widetilde{\theta}^{H}$ from fixed effect estimators] \label{Remark: invertibility} \rm
Backing out $\widetilde{\theta}^{H}$ from $\widehat{\beta}_{H}(\widetilde{\theta}^{H}, \boldsymbol{\widehat{\alpha}})$ requires an invertible relationship $\theta \mapsto \beta(\theta, \boldsymbol{\alpha_{0}})$, where $\beta(\theta, \boldsymbol{\alpha_{0}})$ is the maximizer of the limiting function for equation~(\ref{eqn: criterion in simulation})
\begin{equation*} 
    \frac{1}{nT}\sum^{n}_{i=1}\sum^{T}_{t=1}\mathbb{E}\Big[\ln f(y_{it}(\theta, \alpha_{i0}) \mid x_{it}; \beta, \gamma_{i})\Big].
\end{equation*}
The expectation is taken over simulation draws and sampling of observed data, and $\widehat{\alpha}_{i}$ is replaced with $\alpha_{i0}$. This function is essentially the population function for the fixed effects estimation problem, except that $\beta$ and $\gamma_{i}$ are used to denote parameters for estimation in the simulation world. Assumption~\ref{identification} thus ensures the uniqueness of $\beta(\theta)$, which is $\theta$. As such, invertibility is satisfied trivially.\footnote{For readers who are familiar with indirect inference, the relationship means that the binding function is an identity. This is because the auxiliary model is identical to the structural model, and thus the parameters in the two models coincide. Many papers that employ indirect inference often have to assume invertibility of the binding function \citep{CollardWexler2013, GospodinovKomunjerNg2017}, but this assumption is guaranteed in this paper.} Although $\beta(\theta)$ is an identity function, for the rest of the paper this notation is kept to avoid the confusion between maximum of the limit and a parameter for data generation. 
\end{remark}
%%%%%%%%%%%%%%%%%%%%%%%%%%%%%%%%%%%%%%%%%%%%%%%
\subsection{Consistency}
In order for the indirect inference--type estimator to be consistent, three conditions should be satisfied \citep{GourierouxMonfortRenault1993}: an invertible relationship between $\theta$ and $\beta(\theta, \boldsymbol{\alpha_{0}})$, pointwise convergence of $\widehat{\theta}$ to $\beta(\theta_{0}, \boldsymbol{\alpha_{0}})$, and uniform convergence of $\widehat{\beta}^{h}(\theta, \boldsymbol{\widehat{\alpha}})$ to $\beta(\theta, \boldsymbol{\alpha_{0}})$ over the compact parameter space $\Theta$. The first condition is satisfied because $\beta(\theta, \boldsymbol{\alpha_{0}})$ is an identity, and equation~(\ref{consistency of fe}) gives the second condition. The following proposition states the uniform convergence condition.

\begin{proposition}[Uniform convergence of fixed effect estimator using simulated data] \label{Proposition: uniform consistency} Under Assumptions~\ref{largeT}--\ref{ass: simulation}, 
    $$\sup_{\theta\in\Theta}\lVert\widehat{\beta}^{h}(\theta,\boldsymbol{\widehat{\alpha}})-\beta(\theta, \boldsymbol{\alpha_{0}})\rVert\xrightarrow{p}0.$$
\end{proposition}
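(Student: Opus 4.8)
The plan is to follow the decomposition of \cite{Newey1991}: since the map $\theta\mapsto\widehat{\beta}^{h}(\theta,\boldsymbol{\widehat{\alpha}})$ is nonsmooth, a direct uniform law of large numbers is unavailable, so instead I would prove separately (a) pointwise convergence $\widehat{\beta}^{h}(\theta,\boldsymbol{\widehat{\alpha}})\xrightarrow{p}\beta(\theta,\boldsymbol{\alpha_{0}})$ for each fixed $\theta$, and (b) stochastic equicontinuity of the estimator process $\theta\mapsto\widehat{\beta}^{h}(\theta,\boldsymbol{\widehat{\alpha}})$. Because the limit is the identity $\beta(\theta)=\theta$ (Remark~\ref{Remark: invertibility}), hence continuous, and $\Theta$ is compact by Assumption~\ref{compactness}, these two ingredients combine to yield the claimed uniform convergence.

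For the pointwise step I would first dispose of the gap between simulating with $\widehat{\alpha}_{i}$'s and with $\alpha_{i0}$'s. Introducing the infeasible estimator built from $y^{h}_{it}(\theta,\alpha_{i0})$, I would show that its profiled objective uniformly approximates the feasible one. The delicate point is that, for discrete outcomes, $y^{h}_{it}(\theta,\widehat{\alpha}_{i})\neq y^{h}_{it}(\theta,\alpha_{i0})$ may occur even for tiny $|\widehat{\alpha}_{i}-\alpha_{i0}|$; but the probability of such a flip is $O(|\widehat{\alpha}_{i}-\alpha_{i0}|)$ by boundedness of the density of $u^{h}_{it}$, so combining the uniform consistency $\max_{i}|\widehat{\alpha}_{i}-\alpha_{i0}|=o_{p}(1)$ in~(\ref{uniform consistency of alphai}), cross-section independence (Assumption~\ref{cross-section independence}) and the envelope bound (Assumption~\ref{envelope}), the feasible and infeasible objectives share a probability limit. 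With the infeasible objective in hand, Assumption~\ref{simulation mixing} guarantees the simulated process inherits the mixing structure, so a uniform law of large numbers applies; identification (Assumption~\ref{identification}) pins down the unique maximizer $\beta(\theta)=\theta$, and the standard consistency argument of \cite{NeweyMcFadden1994}, applied after profiling out the incidental $\gamma_{i}$'s as in \cite{HahnKuersteiner2011}, delivers pointwise convergence.

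The equicontinuity step is where I would exploit concavity. The obstacle is that $\theta\mapsto\widehat{\beta}^{h}(\theta,\boldsymbol{\widehat{\alpha}})$ is the argmax of an objective that is only piecewise constant in $\theta$ through the simulated outcomes. The key observation is that, although each realized objective is discontinuous, its expectation is smoothed: because $y^{h}_{it}(\theta,\cdot)$ and $y^{h}_{it}(\theta',\cdot)$ differ only on an event of probability $O(\lVert\theta-\theta'\rVert)$, the associated class of log-likelihood increments is Lipschitz-in-$L^{1}$ in the index, which is precisely the type of primitive condition \cite{Andrews1994} requires for a stochastically equicontinuous empirical process, valid even under the weak temporal dependence of Assumption~\ref{alpha mixing}. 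I would then convert equicontinuity of the objective into equicontinuity of its maximizer using the concavity of the profiled log likelihood in $\beta$: a standard argmax-continuity lemma shows that when two concave objectives are uniformly close their maximizers are close, with the modulus governed by the curvature supplied by the Hessian regularity conditions. This is how concavity verifies the relevant primitive condition of \cite{Andrews1994}.

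The main obstacle is this stochastic equicontinuity step, which must simultaneously accommodate three features absent from the textbook setting: the discontinuity of the simulated data in $\theta$, the weak dependence along $t$ (which rules out the i.i.d.\ empirical-process tools of \cite{PakesPollard1989}), and the growing dimension of the incidental parameters $\gamma_{i}$. The crux is translating the $O(\lVert\theta-\theta'\rVert)$ flip-probability bound into a verifiable entry among \cite{Andrews1994}'s primitive conditions and then invoking concavity to pass from the objective process to the argmax process. Establishing this link is the technically novel part of the argument, whereas the pointwise step reduces, after the $\widehat{\alpha}_{i}\to\alpha_{i0}$ approximation, to a simulation-world analogue of the known large-$T$ consistency results.
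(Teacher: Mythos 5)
Your proposal matches the paper's proof essentially step for step: the same \cite{Newey1991} decomposition into pointwise convergence plus stochastic equicontinuity of the argmax process over compact $\Theta$ with the identity limit, the same handling of the $\widehat{\alpha}_{i}$--versus--$\alpha_{i0}$ gap via an $O(\lvert\widehat{\alpha}_{i}-\alpha_{i0}\rvert)$ flip--probability/envelope bound combined with $\max_{i}\lvert\widehat{\alpha}_{i}-\alpha_{i0}\rvert=o_{p}(1)$ followed by a ULLN and the standard identification argument, the same translation of the $O(\lVert\theta_{1}-\theta_{2}\rVert)$ flip bound into an \cite{Andrews1994} primitive smoothness condition, and the same use of concavity of the profiled likelihood plus Hessian curvature to pass from the objective process to the maximizer. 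The only cosmetic difference is that the paper operationalizes the argmax--continuity step through the first--order conditions, a Taylor expansion of the score, and a smallest--eigenvalue bound (with Cauchy--Schwarz separating the indicator differences from the smooth weights), whereas you invoke uniform closeness of the concave objectives directly; both yield the same $\sqrt{\delta}$--type modulus and the same conclusion.
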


The current proof specializes to panel  models, but it is generalizable to other models that feature concavity and smoothness in $(\beta, \gamma_{i})$. Details are available in Appendix~\ref{Appendix: Main Theorem Proof}, and here the main ideas are discussed. 

Proving the uniform convergence condition with nonsmoothness requires two steps: pointwise convergence of $\widehat{\beta}^{h}(\theta,\boldsymbol{\widehat{\alpha}})$ to $\beta(\theta, \boldsymbol{\alpha_{0}})$, and a stochastic equicontinuity condition as follows:
\begin{equation} \label{stoch equi for uniform consistency}
    \mathbb{E}\Big(\sup_{\lVert \theta_{1}-\theta_{2}\rVert\leq \delta}\lVert \widehat{\beta}^{h}(\theta_{1},\boldsymbol{\widehat{\alpha}})-\widehat{\beta}^{h}(\theta_{2},\boldsymbol{\widehat{\alpha}})\rVert\Big)\leq C\delta,
\end{equation}
where $C$ is a constant and $\delta$ is a positive scalar. 

Following the standard argument in \cite{NeweyMcFadden1994}, pointwise convergence requires a ULLN result of log likelihood function using simulated data~(\ref{simulated ll for }) to the limiting log likelihood~(\ref{limiting ll for }). The log likelihood~(\ref{simulated ll for }) has two sources of randomness: the first source comes from sampling variation of observed data, and the other is from simulations of unobservables. The non--standard part, however, is that data are simulated using $\widehat{\alpha}_{i}$'s. Therefore, it is necessary to first show that (\ref{simulated ll for }) uniformly well approximates the log likelihood using data generated by $\alpha_{i0}$'s:
\begin{equation} \label{eqn: ll using true alphas}
\frac{1}{nT}\sum^{n}_{i=1}\sum^{T}_{t=1}\int_{U}\Big[y_{it}^{h}(\theta, \alpha_{i0})\log\Big(\Phi(x'_{it}\beta+\gamma_{i})\Big)+(1-y_{it}^{h}(\theta, \alpha_{i0}))\log\Big(1-\Phi(x'_{it}\beta+\gamma_{i})\Big)\Big]dF_{u},
\end{equation}
where the integration is with respect to the distribution of simulation draws $u^{h}_{it}$ to eliminate randomness from simulations. The details are available in Lemma~\ref{uniform convergence}, and intuition is provided here. Because panel data are independent along the cross section, it suffices to show that each individual's log likelihood:
$$ \frac{1}{T}\sum^{T}_{t=1}y_{it}^{h}(\theta, \widehat{\alpha}_{i})\log\Big(\Phi(x'_{it}\beta+\gamma_{i})\Big)+(1-y_{it}^{h}(\theta, \widehat{\alpha}_{i}))\log\Big(1-\Phi(x'_{it}\beta+\gamma_{i})\Big),$$
satisfies this property. Given $\theta$, this individual log likelihood is an additive and multiplicative combination of indicator functions of scalar $\widehat{\alpha}_{i}$ and smooth functions of $(\beta, \gamma_{i})$, which belongs to classes of functions that satisfy stochastic equicontinuity \citep{VanDerVaartWellner1996}. Therefore, its empirical process: 
\begin{align*}
    \nu_{T}&(\alpha_{i})=\frac{1}{T}\sum^{T}_{t=1}\Big[ y_{it}^{h}(\theta, \alpha_{i})\log\Big(\Phi(x'_{it}\beta+\gamma_{i})\Big)+(1-y_{it}^{h}(\theta, \alpha_{i}))\log\Big(1-\Phi(x'_{it}\beta+\gamma_{i})\Big) \\
    & - \int_{U}\Big(y_{it}^{h}(\theta,\alpha_{i})\log\Big(\Phi(x'_{it}\beta+\gamma_{i}) \Big) +(1-y_{it}^{h}(\theta, \alpha_{i}))\log \Big(1-\Phi(x'_{it}\beta+\gamma_{i})\Big)\Big)dF_{u}\Big],
\end{align*}
is stochastic equicontinuous. Combined with uniform consistency result of $\widehat{\alpha}_{i}$'s and LLN of $\nu_{T}(\alpha_{i0})$, an application of the triangular inequality leads to the uniform approximation result. Now that (\ref{eqn: ll using true alphas}) only has randomness from observed data, its uniform convergence to the limiting log likelihood~(\ref{limiting ll for }) follows the argument as in \cite{HahnKuersteiner2011}. As such, the pointwise convergence of $\widehat{\beta}^{h}(\theta, \boldsymbol{\widehat{\alpha}})$ follows through.\footnote{Details are available in Lemma~\ref{point-wise consistency}.} To verify the stochastic equicontinuity condition~(\ref{stoch equi for uniform consistency}), note that the profiled log likelihood:
\begin{align*} 
    \widehat{Q}(\beta; \theta) = \frac{1}{nT}\sum^{n}_{i=1}&\sum^{T}_{t=1}\Big[y^{h}_{it}(\theta, \widehat{\alpha}_{i})\log\Big(\Phi(x_{it}'\beta + \widehat{\gamma}_{i}(\beta))\Big) \\
    &+ (1-y^{h}_{it}(\theta, \widehat{\alpha}_{i}))\log\Big(1-\Phi(x_{it}'\beta+ \widehat{\gamma}_{i}(\beta))\Big)\Big]
\end{align*}
is concave in $\beta$. By definition, $\widehat{\beta}^{h}(\theta_{1}, \boldsymbol{\widehat{\alpha}})$ satisfies $\partial\widehat{Q}(\widehat{\beta}^{h}(\theta_{1}, \boldsymbol{\widehat{\alpha}}); \theta_{1})/\partial\beta=0.$ A first--order Taylor expansion with respect to $\widehat{\beta}^{h}(\theta_{1}, \boldsymbol{\widehat{\alpha}})$ around $\widehat{\beta}^{h}(\theta_{2}, \boldsymbol{\widehat{\alpha}})$ and positive--definiteness of the Hessian shows that $\widehat{\beta}^{h}(\theta_{1}, \boldsymbol{\widehat{\alpha}})-\widehat{\beta}^{h}(\theta_{2}, \boldsymbol{\widehat{\alpha}})$ is bounded by $$\Big\lvert\frac{\partial\widehat{Q}(\widehat{\beta}^{h}(\theta_{2}, \boldsymbol{\widehat{\alpha}}); \theta_{2})}{\partial\beta}-\frac{\partial\widehat{Q}(\widehat{\beta}^{h}(\theta_{2}, \boldsymbol{\widehat{\alpha}}); \theta_{1})}{\partial\beta}\Big\rvert,$$ 
which, by the Cauchy--Schwarz inequality, is bounded by the product of two terms: a smooth function of $(\beta, \gamma_{i})$ and 
\begin{equation} \label{eqn: diff of indicators}
    \frac{1}{nT}\sum^{n}_{i=1}\sum^{T}_{t=1}(y^{h}_{it}(\theta_{1}, \widehat{\alpha}_{i})-y^{h}_{it}(\theta_{2}, \widehat{\alpha}_{i})).
\end{equation}
Therefore, it suffices to bound the two terms in expectation. The technical challenge mainly comes from proving this for equation~(\ref{eqn: diff of indicators}) that features nonsmooth components. Although indicator functions are well--known to have controlled complexities \citep{Andrews1994}, and a similar result on the difference of indicator functions with univariate variable is given in \cite{ChenLintonVanKeilegom2003}, here $\theta$'s can be multi--dimensional. It turns out that the expectation of (\ref{eqn: diff of indicators}) satisfies the $L^{2}$--smoothness regularity condition in \cite{Andrews1994}.\footnote{See proof of Proposition~\ref{Proposition: uniform consistency} for details.} Therefore, the stochastic equicontinuity condition for $\widehat{\beta}^{h}(\theta, \boldsymbol{\widehat{\alpha}})$ is verified.

Armed with Proposition~\ref{Proposition: uniform consistency}, the consistency of $\widetilde{\theta}^{H}$ follows the arguments as in \cite{GourierouxMonfortRenault1993}. The proof is straightforward because there is no need to consider a weighting matrix. 
\begin{theorem}[Consistency of indirect fixed effect estimator] \label{consistency}
Under Assumptions~\ref{largeT}--\ref{ass: simulation}, 
$$\widetilde{\theta}^{H}\xrightarrow{p}\theta_{0}.$$
\end{theorem}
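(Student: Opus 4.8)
The plan is to recast the just--identified matching equation~(\ref{eqn: estimation equation}) as a minimum--distance problem and then invoke the standard consistency theorem for extremum estimators \citep{NeweyMcFadden1994}. Concretely, I would define the sample criterion
$$\widehat{Q}_{H}(\theta) := \lVert \widehat{\theta} - \widehat{\beta}_{H}(\theta, \boldsymbol{\widehat{\alpha}})\rVert$$
and its population counterpart
$$Q(\theta) := \lVert \beta(\theta_{0}, \boldsymbol{\alpha_{0}}) - \beta(\theta, \boldsymbol{\alpha_{0}})\rVert = \lVert \theta_{0} - \theta\rVert,$$
where the second equality uses that the binding function $\theta\mapsto\beta(\theta,\boldsymbol{\alpha_{0}})$ is the identity (Remark~\ref{Remark: invertibility}). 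Since the matching is exact in finite samples only up to the discontinuity of $\widehat{\beta}_{H}(\cdot, \boldsymbol{\widehat{\alpha}})$, I would treat $\widetilde{\theta}^{H}$ as an approximate minimizer of $\widehat{Q}_{H}$, i.e.\ $\widehat{Q}_{H}(\widetilde{\theta}^{H}) \leq \inf_{\theta\in\Theta}\widehat{Q}_{H}(\theta) + o_{p}(1)$, which the exact solution of~(\ref{eqn: estimation equation}) satisfies whenever it exists.

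The three conditions of \cite{GourierouxMonfortRenault1993} are already in hand. First, the population criterion $Q$ is continuous and, because the binding function is injective (indeed the identity), uniquely minimized at $\theta_{0}$ with $Q(\theta_{0})=0$; together with compactness of $\Theta$ (Assumption~\ref{compactness}) this is the identification requirement. Second, pointwise convergence $\widehat{\theta}\xrightarrow{p}\theta_{0}=\beta(\theta_{0},\boldsymbol{\alpha_{0}})$ is delivered by equation~(\ref{consistency of fe}). Third, the uniform convergence of the simulated estimator is Proposition~\ref{Proposition: uniform consistency}; since $\widehat{\beta}_{H}$ is a finite average of the $\widehat{\beta}^{h}$ over $h=1,\dots,H$, the triangle inequality propagates this to $\sup_{\theta\in\Theta}\lVert\widehat{\beta}_{H}(\theta,\boldsymbol{\widehat{\alpha}})-\beta(\theta,\boldsymbol{\alpha_{0}})\rVert\xrightarrow{p}0$ for fixed $H$.

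Assembling these pieces, uniform convergence of the criterion follows from the reverse triangle inequality:
$$\sup_{\theta\in\Theta}\big\lvert\widehat{Q}_{H}(\theta)-Q(\theta)\big\rvert \leq \lVert\widehat{\theta}-\theta_{0}\rVert + \sup_{\theta\in\Theta}\lVert\widehat{\beta}_{H}(\theta,\boldsymbol{\widehat{\alpha}})-\beta(\theta,\boldsymbol{\alpha_{0}})\rVert \xrightarrow{p} 0,$$
the first term vanishing by~(\ref{consistency of fe}) and the second by the previous step. Evaluating at $\theta_{0}$ shows $\inf_{\theta}\widehat{Q}_{H}(\theta)\leq\widehat{Q}_{H}(\theta_{0})\leq\lVert\widehat{\theta}-\theta_{0}\rVert+\lVert\widehat{\beta}_{H}(\theta_{0},\boldsymbol{\widehat{\alpha}})-\theta_{0}\rVert=o_{p}(1)$, so the approximate minimizer satisfies $\widehat{Q}_{H}(\widetilde{\theta}^{H})=o_{p}(1)$. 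Combining with uniform convergence gives $Q(\widetilde{\theta}^{H})=\lVert\theta_{0}-\widetilde{\theta}^{H}\rVert\leq\widehat{Q}_{H}(\widetilde{\theta}^{H})+\sup_{\theta}\lvert Q-\widehat{Q}_{H}\rvert=o_{p}(1)$, which is the claim.

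The main obstacle is not in this final assembly---which is routine once Proposition~\ref{Proposition: uniform consistency} is available---but in justifying the approximate--minimizer formulation. Because $\widehat{\beta}_{H}(\cdot,\boldsymbol{\widehat{\alpha}})$ inherits the discontinuity of the simulated data in $\theta$, the matching equation~(\ref{eqn: estimation equation}) need not admit an exact root, so I would need to argue that $\widetilde{\theta}^{H}$ can always be taken to make the objective $o_{p}(1)$; the estimate $\widehat{Q}_{H}(\theta_{0})=o_{p}(1)$ above supplies exactly this, and the discontinuity does not otherwise interfere because the consistency theorem only requires continuity of the population limit $Q$, which here is smooth.
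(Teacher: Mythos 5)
Your proposal is correct and follows essentially the same route as the paper's own proof: both verify the three Gouri\'eroux--Monfort--Renault conditions (the identity binding function, pointwise convergence of $\widehat{\theta}$ via equation~(\ref{consistency of fe}), and uniform convergence via Proposition~\ref{Proposition: uniform consistency}) and then conclude by recasting equation~(\ref{eqn: estimation equation}) as a minimum--distance problem whose limit is uniquely minimized at $\theta_{0}$. Your additional care---treating $\widetilde{\theta}^{H}$ as an approximate minimizer to handle the possible nonexistence of an exact root under discontinuity, and the triangle--inequality step extending Proposition~\ref{Proposition: uniform consistency} from $\widehat{\beta}^{h}$ to the finite average $\widehat{\beta}_{H}$---fills in details the paper leaves implicit but does not change the approach.
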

%%%%%%%%%%%%%%%%%%%%%%%%%%%%%%%%%%%%%%%%%%%%%%%
\subsection{Bias Correction and Asymptotic Normality}
Recall that the indirect fixed effect estimator using $H$ simulations $\widetilde{\theta}^{H}$ is the solution to $\widehat{\theta}=\widehat{\beta}_{H}(\widetilde{\theta}^{H}, \boldsymbol{\widehat{\alpha}})$.  Non--differentiability of $\theta\mapsto\widehat{\beta}_{H}(\theta, \boldsymbol{\widehat{\alpha}})$ means that the techniques in the indirect inference literature \citep[e.g.,][]{GourierouxMonfortRenault1993} are not applicable. The following stochastic equicontinuity assumption is imposed. 

\begin{asu} \label{asu: replace smooth}
For all positive deterministic sequences $\delta_{nT}\downarrow 0$, 
\begin{align*}
    \sup_{\lVert\theta_{1}-\theta_{2}\rVert\leq\delta_{nT}}\sqrt{nT}\lVert\widehat{\beta}_{H}(\theta_{1}, \boldsymbol{\widehat{\alpha}})&-\widehat{\beta}_{H}(\theta_{2}, \boldsymbol{\widehat{\alpha}})-\mathbb{E}(\widehat{\beta}_{H}(\theta_{1}, \boldsymbol{\widehat{\alpha}})-\widehat{\beta}_{H}(\theta_{2}, \boldsymbol{\widehat{\alpha}}))\rVert \xrightarrow{p} 0.
\end{align*}
\end{asu}

Assumption~\ref{asu: replace smooth} requires that the difference between $\widehat{\beta}_{H}(\theta_{1}, \boldsymbol{\widehat{\alpha}})$ and $\widehat{\beta}_{H}(\theta_{2}, \boldsymbol{\widehat{\alpha}})$ can be approximated by its expectation at a $\sqrt{nT}$ rate. Combined with consistency of $\widetilde{\theta}^{H}$ and the mean value theorem, it allows to analyze the asymptotic normality of $\widetilde{\theta}^{H}$ through the lens of fixed effect estimators as follows:
\begin{equation} \label{eqn: transformation of asymptotic analysis}
    \sqrt{nT}(\widetilde{\theta}^{H}-\theta_{0})=\sqrt{nT}(\widehat{\theta}-\widehat{\beta}_{H}(\theta_{0}, \boldsymbol{\widehat{\alpha}}))+o_{p}(1).
\end{equation}

Recall that equation~(\ref{consistency of fe}) characterizes the representation of $\widehat{\theta} - \theta_{0}$. Because the same regression is run on simulated data $h$ and the likelihood is smooth in $(\beta, \gamma_{i})$, the same structure of representation arises, namely that
\begin{equation} \label{eqn: stochastic expansion in simulation world}
    \widehat{\beta}^{h}(\theta_{0}, \boldsymbol{\widehat{\alpha}}) - \theta_{0} = \frac{A^{h}(\theta_{0}, \boldsymbol{\widehat{\alpha}})}{\sqrt{nT}} + \frac{B^{h}(\theta_{0}, \boldsymbol{\widehat{\alpha}})}{T} + o_{p}\Big(\frac{1}{T}\Big).
\end{equation}
The terms $A^{h}(\theta_{0}, \boldsymbol{\widehat{\alpha}})$ and $B^{h}(\theta_{0}, \boldsymbol{\widehat{\alpha}})$ reflect that the data are generated using $\theta_{0}$, $\boldsymbol{\widehat{\alpha}}$ and simulated unobservables $\{u^{h}_{it}\}$. A combination of (\ref{consistency of fe}), (\ref{eqn: transformation of asymptotic analysis}) and (\ref{eqn: stochastic expansion in simulation world}) therefore leads to 
\begin{align}
    \sqrt{nT}(\widetilde{\theta}^{H} - \theta_{0}) = \Big(A(\theta_{0}, \boldsymbol{\alpha_{0}}) & - \frac{1}{H}\sum^{H}_{h=1}A^{h}(\theta_{0}, \boldsymbol{\widehat{\alpha}})\Big) \nonumber \\
    & + \sqrt{\frac{n}{T}}\Big(B(\theta_{0}, \boldsymbol{\alpha_{0}}) - \frac{1}{H}\sum^{H}_{h=1}B^{h}(\theta_{0}, \boldsymbol{\widehat{\alpha}})\Big) + o_{p}(1).
\end{align}
This equation reflects two observations. First, $\widehat{\theta}$ is unbiased if $B(\theta_{0}, \boldsymbol{\alpha_{0}})$ and $B^{h}(\theta_{0}, \boldsymbol{\widehat{\alpha}})$ both converge to the same limit. Second, $\widehat{\theta}$ is asymptotically normal if $A(\theta_{0}, \boldsymbol{\alpha_{0}})$ and $A^{h}(\theta_{0}, \boldsymbol{\widehat{\alpha}})$ converge to the same limiting distribution, but the variance is inflated by a factor of $1/H$. The rest of the section provides the main ideas of the proof.  
%%%%%%%%%%%%%%%%%%%%%%%%%%%

The intuition can be gained by setting $H=1$ and considering an infeasible fixed effect estimator $\widehat{\beta}_{H}(\theta_{0}, \boldsymbol{\alpha_{0}})$, which is obtained from data simulated by $(\theta_{0}, \alpha_{0})$. Then the representation of $\widehat{\beta}_{H}(\theta_{0}, \boldsymbol{\alpha_{0}})-\theta_{0}$ takes the form
\begin{align*}
     \sqrt{nT}(\widehat{\beta}_{H}(\theta_{0}, \boldsymbol{\alpha_{0}}) - \theta_{0}) = A^{h}(\theta_{0}, \boldsymbol{\alpha_{0}}) + \sqrt{\frac{n}{T}} B^{h}(\theta_{0}, \boldsymbol{\alpha_{0}}) + o_{p}(1).
\end{align*}
The theory of indirect inference implies that $B(\theta_{0}, \boldsymbol{\alpha_{0}})$ and $B^{h}(\theta_{0}, \boldsymbol{\alpha_{0}})$ converge to the same probability limit. Because the actual simulated data are generated by $\widehat{\alpha}_{i}$'s, it suffices to show that $B^{h}(\theta_{0}, \widehat{\alpha})$ uniformly well approximates $B^{h}(\theta_{0}, \boldsymbol{\alpha_{0}})$ such that the approximation error is asymptotically negligible. More specifically, the bias term using simulated data takes the following form, 
$$B^{h}(\theta_{0}, \boldsymbol{\widehat{\alpha}}) = -\Big[\frac{1}{n}\sum^{n}_{i=1}\mathcal{I}_{i}(\theta_{0}, \widehat{\alpha}_{i})\Big]^{-1} \frac{1}{n}\sum^{n}_{i=1}B^{h}_{i}(\theta_{0}, \widehat{\alpha}_{i}),$$ 
where $\mathcal{I}_{i}(\theta_{0}, \widehat{\alpha}_{i})$ is individual $i$'s information matrix, and it is a smooth function of all its arguments. Therefore, $\mathcal{I}_{i}(\theta_{0}, \widehat{\alpha}_{i})\xrightarrow{p}\mathcal{I}_{i}(\theta_{0}, \alpha_{i0})$ for each $i$. Each $B^{h}_{i}(\theta_{0}, \widehat{\alpha}_{i})$ is nonsmooth in $\widehat{\alpha}_{i}$, and the following assumption is imposed such that $B^{h}(\theta_{0}; \boldsymbol{\widehat{\alpha}})$ replaces $B^{h}(\theta_{0}, \boldsymbol{\alpha_{0}})$ with negligible errors:
\begin{asu} (Restricting changes due to using $\widehat{\alpha}_{i}$) \label{asu: tech for bias}
$$\mathbb{E}\max_{1\leq i\leq n}\Big\lvert B^{h}_{i}(\theta_{0}, \widehat{\alpha}_{i}) - B^{h}_{i}(\theta_{0}, \alpha_{i0})\Big\rvert^{2} = o_{p}(1)$$
\end{asu}

\begin{proposition}[Bias correction of $\widetilde{\theta}^{H}$] \label{prop: bias correction} Under Assumptions~\ref{largeT}--\ref{asu: tech for bias},
    $$\lvert B^{h}(\theta_{0}, \boldsymbol{\widehat{\alpha}}) - B^{h}(\theta_{0}, \boldsymbol{\alpha_{0}})\rvert\xrightarrow{p}0.$$
\end{proposition}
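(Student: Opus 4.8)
The plan is to exploit the product structure of the bias term,
$$B^{h}(\theta_{0}, \boldsymbol{\widehat{\alpha}}) = -\Big[\frac{1}{n}\sum^{n}_{i=1}\mathcal{I}_{i}(\theta_{0}, \widehat{\alpha}_{i})\Big]^{-1}\frac{1}{n}\sum^{n}_{i=1}B^{h}_{i}(\theta_{0}, \widehat{\alpha}_{i}),$$
together with the analogous expression at $\boldsymbol{\alpha_{0}}$, and to reduce the claim to two separate convergences — one for the averaged information matrix and one for the averaged $B^{h}_{i}$ — recombined by the algebra of probability limits. Abbreviating $\overline{\mathcal{I}}(\boldsymbol{\widehat{\alpha}}) := n^{-1}\sum_{i}\mathcal{I}_{i}(\theta_{0}, \widehat{\alpha}_{i})$ and $\overline{B}^{h}(\boldsymbol{\widehat{\alpha}}) := n^{-1}\sum_{i}B^{h}_{i}(\theta_{0}, \widehat{\alpha}_{i})$, with the same notation at $\boldsymbol{\alpha_{0}}$, I would write
$$B^{h}(\theta_{0}, \boldsymbol{\widehat{\alpha}}) - B^{h}(\theta_{0}, \boldsymbol{\alpha_{0}}) = -\,\overline{\mathcal{I}}(\boldsymbol{\widehat{\alpha}})^{-1}\big(\overline{B}^{h}(\boldsymbol{\widehat{\alpha}}) - \overline{B}^{h}(\boldsymbol{\alpha_{0}})\big) - \big(\overline{\mathcal{I}}(\boldsymbol{\widehat{\alpha}})^{-1} - \overline{\mathcal{I}}(\boldsymbol{\alpha_{0}})^{-1}\big)\overline{B}^{h}(\boldsymbol{\alpha_{0}}),$$
so that it suffices to control each factor on the right-hand side.

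For the information-matrix factor, I would use that $\mathcal{I}_{i}$ is smooth in its arguments and, by the Lipschitz bound in Assumption~\ref{envelope} applied to the relevant derivatives of the log likelihood, Lipschitz in $\alpha_{i}$ uniformly over $i$. This gives
$$\norm{\overline{\mathcal{I}}(\boldsymbol{\widehat{\alpha}}) - \overline{\mathcal{I}}(\boldsymbol{\alpha_{0}})} \leq \frac{1}{n}\sum^{n}_{i=1}\norm{\mathcal{I}_{i}(\theta_{0}, \widehat{\alpha}_{i}) - \mathcal{I}_{i}(\theta_{0}, \alpha_{i0})} \leq \Big(\max_{1\leq i\leq n}\lvert \widehat{\alpha}_{i} - \alpha_{i0}\rvert\Big)\frac{1}{n}\sum^{n}_{i=1}L(z_{i}),$$
where the average of Lipschitz constants is $O_{p}(1)$ by the moment condition in Assumption~\ref{envelope}. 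Since $\max_{i}\lvert \widehat{\alpha}_{i} - \alpha_{i0}\rvert = o_{p}(1)$ by~(\ref{uniform consistency of alphai}), this difference is $o_{p}(1)$. Because $B^{h}_{i}$ is nonsmooth in $\widehat{\alpha}_{i}$, the corresponding bound for the numerator cannot be obtained by Lipschitz arguments; instead it is supplied by Assumption~\ref{asu: tech for bias}. Using the triangle inequality and $n^{-1}\sum_{i}(\cdot)\leq \max_{i}(\cdot)$,
$$\norm{\overline{B}^{h}(\boldsymbol{\widehat{\alpha}}) - \overline{B}^{h}(\boldsymbol{\alpha_{0}})} \leq \max_{1\leq i\leq n}\big\lvert B^{h}_{i}(\theta_{0}, \widehat{\alpha}_{i}) - B^{h}_{i}(\theta_{0}, \alpha_{i0})\big\rvert,$$
and Assumption~\ref{asu: tech for bias} bounds the second moment of this maximal deviation, so Markov's inequality delivers $\overline{B}^{h}(\boldsymbol{\widehat{\alpha}}) - \overline{B}^{h}(\boldsymbol{\alpha_{0}}) = o_{p}(1)$.

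To close, I would combine these using the resolvent identity $\overline{\mathcal{I}}(\boldsymbol{\widehat{\alpha}})^{-1} - \overline{\mathcal{I}}(\boldsymbol{\alpha_{0}})^{-1} = \overline{\mathcal{I}}(\boldsymbol{\widehat{\alpha}})^{-1}\big(\overline{\mathcal{I}}(\boldsymbol{\alpha_{0}}) - \overline{\mathcal{I}}(\boldsymbol{\widehat{\alpha}})\big)\overline{\mathcal{I}}(\boldsymbol{\alpha_{0}})^{-1}$, which is $o_{p}(1)$ by the first step. The first term on the right-hand side of the decomposition is then $O_{p}(1)\cdot o_{p}(1) = o_{p}(1)$, and the second is $o_{p}(1)\cdot O_{p}(1) = o_{p}(1)$, giving the conclusion. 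The main obstacle — and the step I would treat most carefully — is establishing that $\overline{\mathcal{I}}(\boldsymbol{\widehat{\alpha}})^{-1}$ exists and is $O_{p}(1)$, i.e. that the smallest eigenvalue of the averaged information matrix is bounded away from zero with probability approaching one; this rests on the positive-definiteness (regularity) conditions on the Hessian that underlie the expansion~(\ref{consistency of fe}), and its validity at $\boldsymbol{\widehat{\alpha}}$ follows by transferring invertibility from $\boldsymbol{\alpha_{0}}$ through the first-step uniform bound. I would also record that $\overline{B}^{h}(\boldsymbol{\alpha_{0}}) = O_{p}(1)$, which follows from the cross-sectional law of large numbers under Assumption~\ref{cross-section independence} and the moment bounds of Assumption~\ref{envelope}, since the infeasible bias converges to its probability limit.
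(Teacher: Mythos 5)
Your proposal is correct and follows essentially the same route as the paper's own proof: the same add--and--subtract decomposition of the product structure $-\overline{\mathcal{I}}^{-1}\overline{B}^{h}$, the same resolvent identity for the difference of inverted information matrices, and the same reduction of the nonsmooth part to $\max_{1\leq i\leq n}\lvert B^{h}_{i}(\theta_{0},\widehat{\alpha}_{i})-B^{h}_{i}(\theta_{0},\alpha_{i0})\rvert$ handled by Assumption~\ref{asu: tech for bias}. The only differences are organizational --- you combine $O_{p}(1)\cdot o_{p}(1)$ bounds where the paper applies Cauchy--Schwarz in expectation, and you justify convergence of the averaged information matrix by a uniform Lipschitz bound with $\max_{i}\lvert\widehat{\alpha}_{i}-\alpha_{i0}\rvert=o_{p}(1)$ where the paper invokes the continuous mapping theorem for each $i$ --- which if anything makes your treatment of that step slightly more careful.
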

\noindent As such, the indirect fixed effect estimator corrects the bias. 
%%%%%%%%%%%%%%%%%%%%%%%%%
\cite{HahnKuersteiner2011} derived the analytical expression of the term $A(\theta_{0}, \boldsymbol{\alpha_{0}})$. The term $A^{h}(\theta_{0}, \boldsymbol{\widehat{\alpha}})$ has the same structure, namely 
$$A^{h}(\theta_{0}, \boldsymbol{\widehat{\alpha}}) = \Big[\frac{1}{n}\sum^{n}_{i=1}\mathcal{I}^{h}_{i}(\theta_{0}, \widehat{\alpha}_{i})\Big]^{-1}\frac{1}{\sqrt{nT}}\sum^{n}_{i=1}\sum^{T}_{t=1}U^{h}_{it}(\theta_{0}, \widehat{\alpha}_{i}),$$ 
where $A^{h}_{it}(\theta_{0}, \widehat{\alpha}_{i})$ is a combination of high--order derivatives of the log likelihood. The following high--level assumption is imposed:
\begin{asu} \label{assumption for normality}
$\frac{1}{\sqrt{nT}}\sum^{n}_{i=1}\sum^{T}_{t=1} U^{h}_{it}(\theta_{0}, \widehat{\alpha}_{i})=\frac{1}{\sqrt{nT}}\sum^{n}_{i=1}\sum^{T}_{t=1}U^{h}_{it}(\theta_{0}, \alpha_{i0})+o_{p}(1).$
\end{asu}

Under this assumption, $A^{h}(\theta_{0}, \boldsymbol{\widehat{\alpha}})$ can uniformly well approximate $A^{h}(\theta_{0}, \boldsymbol{\alpha_{0}})$ with negligible errors, and the asymptotic normality result in indirect inference literature follows through \citep[][Proposition 5]{GourierouxMonfortRenault1993}. Combined with Proposition~\ref{prop: bias correction}, the indirect fixed effect estimator is asymptotically unbiased and normal. 

%Intuitively speaking, $A^{h}_{h}(\theta_{0}, \widehat{\alpha}_{i})$ is a plugged--in estimate of $A^{h}(\theta_{0}, \alpha_{i0})$, but because the number of individual effects increases with the sample size $n$ and each $\widehat{\alpha}_{i}$ does not converge to $\alpha_{i0}$ fast enough, a multiplication of factor $1/\sqrt{nT}$ magnifies the approximation errors. 

%Establishing the functional asymptotic normality involves the following main steps: for each $i$, invoke a coupling lemma \citep[e.g.,][Lemma 4.1]{DedeckerLouhichi2002} to construct independent sequences that approximate its simulated time series with a controlled remainder, and find an envelope for its simulated CLT term. As such, the expectation of each $i$'s simulated CLT term $\frac{1}{\sqrt{T}}\sum^{T}_{t=1}A^{h}_{it}(\theta_{0}, \widehat{\alpha}_{i})$ can be bounded by a finite uniform entropy integral \citep[][Corollary 19.35]{VanDerVaar2000}. These individual bounds are then aggregated to form a measure of approximation error that is asymptotically negligible. This in turn leads to the following theorem on the asymptotic distribution of the indirect fixed effects estimator. 
\begin{theorem} \label{CLT}
Under Assumptions~\ref{largeT}--\ref{assumption for normality},
$$\sqrt{nT}\Big(\widetilde{\theta}^{H}-\theta_{0}\Big)\xrightarrow{d}\mathcal{N}\Big(0, \Big(1+\frac{1}{H}\Big)\Omega\Big),$$
where $\Omega:=\mathbb{E}(A(\theta_{0}, \boldsymbol{\alpha_{0}})A(\theta_{0}, \boldsymbol{\alpha_{0}})').$
\end{theorem}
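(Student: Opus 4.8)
The plan is to complete the decomposition that the text has already assembled and then identify the limiting law of its leading term. Equation~(\ref{eqn: transformation of asymptotic analysis}) — which follows from Theorem~\ref{consistency}, Assumption~\ref{asu: replace smooth} and a mean value expansion — together with the stochastic expansions~(\ref{consistency of fe}) and~(\ref{eqn: stochastic expansion in simulation world}) writes $\sqrt{nT}(\widetilde{\theta}^{H}-\theta_{0})$ as an influence-function piece $A(\theta_{0},\boldsymbol{\alpha_{0}})-\tfrac{1}{H}\sum_{h}A^{h}(\theta_{0},\boldsymbol{\widehat{\alpha}})$ plus a bias piece scaled by $\sqrt{n/T}$, up to $o_{p}(1)$. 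So the two remaining tasks are to kill the bias piece and to pin down the distribution of the influence-function piece.

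For the bias piece I would argue it is $o_{p}(1)$: under Assumption~\ref{largeT} the factor $\sqrt{n/T}\to\sqrt{\kappa}$ is bounded, so it suffices that $B(\theta_{0},\boldsymbol{\alpha_{0}})-\tfrac{1}{H}\sum_{h}B^{h}(\theta_{0},\boldsymbol{\widehat{\alpha}})\xrightarrow{p}0$. Proposition~\ref{prop: bias correction} replaces each feasible $B^{h}(\theta_{0},\boldsymbol{\widehat{\alpha}})$ by the infeasible $B^{h}(\theta_{0},\boldsymbol{\alpha_{0}})$ up to $o_{p}(1)$, and the identity binding function of Remark~\ref{Remark: invertibility}, via the indirect inference argument, forces $B(\theta_{0},\boldsymbol{\alpha_{0}})$ and every $B^{h}(\theta_{0},\boldsymbol{\alpha_{0}})$ to share one probability limit; averaging over $h$ yields the cancellation. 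For the influence-function piece, Assumption~\ref{assumption for normality} and smoothness of the information matrix give $A^{h}(\theta_{0},\boldsymbol{\widehat{\alpha}})=A^{h}(\theta_{0},\boldsymbol{\alpha_{0}})+o_{p}(1)$, so the piece reduces to the infeasible object $A(\theta_{0},\boldsymbol{\alpha_{0}})-\tfrac{1}{H}\sum_{h}A^{h}(\theta_{0},\boldsymbol{\alpha_{0}})$.

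The heart of the proof is a joint central limit theorem for the stacked vector $(A(\theta_{0},\boldsymbol{\alpha_{0}}),A^{1}(\theta_{0},\boldsymbol{\alpha_{0}}),\dots,A^{H}(\theta_{0},\boldsymbol{\alpha_{0}}))$, which I would establish in three parts. First, each coordinate is asymptotically $\mathcal{N}(0,\Omega)$: the observed block is the Hahn--Kuersteiner CLT underlying~(\ref{consistency of fe}), and each simulated block obeys the same CLT because data generated at the true $(\theta_{0},\boldsymbol{\alpha_{0}})$ are statistically equivalent to the observed process — again the identity binding function — and retain the mixing structure by Assumption~\ref{ass: simulation}. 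Second, all cross-covariances vanish: each influence function is a cross-sectional sum of terms with conditional mean zero given the covariates, the simulation draws are independent of nature's shocks conditional on $x_{it}$, and by Remark~\ref{Remark: CRN} the draws are independent across $h$. Third, a multivariate CLT through the Cram\'er--Wold device, using cross-sectional independence and the $\alpha$-mixing bound of Assumption~\ref{assumption on data}, delivers joint Gaussianity with block-diagonal limiting covariance $\mathrm{diag}(\Omega,\dots,\Omega)$; for jointly Gaussian limits, the vanishing cross-covariances upgrade to genuine independence.

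Given this joint limit the conclusion is a short variance computation: $A-\tfrac{1}{H}\sum_{h}A^{h}$ is a linear combination of $H+1$ independent $\mathcal{N}(0,\Omega)$ vectors, the $-A$ term contributing $\Omega$, the average contributing $\Omega/H$, and independence killing every cross term, so the limit is $\mathcal{N}\big(0,(1+\tfrac{1}{H})\Omega\big)$. I expect the main obstacle to be the joint CLT — specifically, verifying that each simulated influence function carries the same asymptotic variance $\Omega$ as the observed one (precisely where the identity binding function is indispensable) and arguing rigorously that conditional independence given $x_{it}$ becomes asymptotic independence of the Gaussian limits. The bias cancellation and the final variance arithmetic are, by comparison, routine.
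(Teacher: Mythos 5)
Your proposal is correct, and its skeleton coincides with the paper's proof: the same reduction of $\sqrt{nT}(\widetilde{\theta}^{H}-\theta_{0})$ to $\sqrt{nT}(\widehat{\theta}-\widehat{\beta}_{H}(\theta_{0},\boldsymbol{\widehat{\alpha}}))$ via Assumption~\ref{asu: replace smooth}, consistency, the mean--value theorem and the identity binding function; the same elimination of the $\sqrt{n/T}$--scaled bias piece through Proposition~\ref{prop: bias correction} and the equality of the probability limits of $B(\theta_{0},\boldsymbol{\alpha_{0}})$ and $B^{h}(\theta_{0},\boldsymbol{\alpha_{0}})$; and the same replacement of $A^{h}(\theta_{0},\boldsymbol{\widehat{\alpha}})$ by $A^{h}(\theta_{0},\boldsymbol{\alpha_{0}})$ using Assumption~\ref{assumption for normality} together with the continuous--mapping argument for the averaged information matrices. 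Where you diverge is the last step: the paper stops at that point and invokes Proposition 5 of \cite{GourierouxMonfortRenault1993} to conclude that $A(\theta_{0},\boldsymbol{\alpha_{0}})-\tfrac{1}{H}\sum_{h}A^{h}(\theta_{0},\boldsymbol{\alpha_{0}})$ is asymptotically $\mathcal{N}\big(0,(1+\tfrac{1}{H})\Omega\big)$, whereas you re-derive that conclusion: a joint CLT for the stacked vector $(A,A^{1},\dots,A^{H})$ via Cram\'er--Wold under the independence and mixing conditions of Assumptions~\ref{assumption on data} and~\ref{ass: simulation}; vanishing cross--covariances from the conditional mean--zero property of the scores, the conditional independence of simulation draws from nature's shocks given $x_{it}$, and the independence across $h$ guaranteed by Remark~\ref{Remark: CRN}; equality of all marginal variances because the process simulated at $(\theta_{0},\boldsymbol{\alpha_{0}})$ is distributionally identical to the observed one, so the \cite{HahnKuersteiner2011} CLT applies verbatim to each block; and finally the arithmetic $\Omega+\Omega/H$. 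Your route buys self-containedness --- it makes explicit exactly which features of the design (common random numbers, conditional independence, the identity binding function) produce the $(1+\tfrac{1}{H})$ inflation, all of which the paper leaves buried in the citation --- at the cost of a longer argument; the paper's route buys brevity at the cost of asking the reader to verify that the cited proposition's hypotheses hold in this fixed--effects setting. Both are sound proofs of Theorem~\ref{CLT}.
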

The estimation of the variance--covariance matrix uses the estimated Hessian matrix of the sample log likelihood function from the real data. As previously discussed in Remark~\ref{Remark: the role of H}, the number of simulations $H$ shows up as a factor that inflates the asymptotic variance. There are two interpretations. The first is in line with other simulation--based methods: using simulations introduces an additional source of uncertainty and it is manifested through an increase in variance. The other interpretation is related to the trade--off between bias and variance. Because the indirect inference estimator debiases fixed effects estimator, the variance is larger, and it is quantified by the number of panel data simulated. 

\cite{DhaeneJochmans2015} proposed a half--panel method that removes the leading bias. Intuitively, the method splits in half the panel along the time series dimension, obtains the fixed effect estimators for the half samples and applies a linear combination with respect to the full--sample fixed effect estimator. Theoretically, the method does not change the asymptotic variance because the influence function is linear. However, in finite samples the variance is inflated due to an inefficient use of data. The indirect fixed effect method is explicit about the bias--variance tradeoff in that the asymptotic variance is multiplied by $H$. However, as will be shown in the simulations and applications, this method leads to a smaller standard error compared to the half--panel bias correction method. 
%%%%%%%%%%%%%%%%%%%%%%%%%%%%%%%%%%%%%%%%%%%%%%%
%%%%%%%%%%%%%%%%%%%%%%%%%%%%%%%%%%%%%%%%%%%%%%%
\section{Application to Female Labor Force Participation} \label{Section: Application}
Research on the relationship between female labor force participation and fertility is complicated by the presence of unobserved factors that affect both decisions. Following \cite{hyslop99}, this paper addresses the omitted variable issue by including individual fixed effects into the binary response panel  model for the female labor force participation. 

The data come from the Panel Study of Income Dynamics (PSID) and constitute a nine--year longitudinal sample spanning from 1979 to 1988. The sample includes 664 women aged 18–-60 in 1985 who were continuously married with husbands in the labor force in each of the sample periods and changed their labor force participation statuses. Consider the following static specification:
\begin{equation*} 
    y_{it}=\boldsymbol{1}\{x'_{it}\theta+\alpha_{i}>u_{it}\}, \quad u_{it}\sim\mathcal{N}(0, 1),
\end{equation*}
where $y_{it}$ denotes the labor force participation indicator for woman $i$ at time $t$, and $x_{it}$ denotes a vector of time--varying covariates. These covariates include numbers of children of at most 2 years of age, between 3 and 5 years of age, between 6 and 17 years of age; log of the husband's income,\footnote{This variable serves as a proxy for permanent nonlabor income \citep{hyslop99}.} age and age squared. The individual effects $\alpha_{i}$'s are included to control for time--invariant unobserved heterogeneity such as willingness to work or ability.

\begin{table}[htbp!]
\centering
\begin{threeparttable}
\caption{Parameter Estimates for Static LFP} \label{empirics}
\begin{tabular}{rrrrrrr}
  \hline\hline
  & kids0\_2 & kids3\_5 & kids6\_17 & loghusinc & age & age2 \Tstrut\Bstrut\\ 
  \hline
FE  & -0.71 & -0.41 & -0.13 & -0.24 & 2.32 & -0.29 \Tstrut\Bstrut\\ 
    & (0.06) & (0.05) & (0.04) & (0.05) & (0.38) & (0.05)\vspace{0.08cm}\\ 
IFE--1 & -0.65 & -0.36 & -0.08 & -0.17 & 2.24 & -0.29 \\
    & (0.08) & (0.07) & (0.06) & (0.08) & (0.53) & (0.07)\vspace{0.08cm}\\    
IFE--10 & -0.60 & -0.32 & -0.10 & -0.30 & 2.08 & -0.27 \\
    & (0.06) & (0.05) & (0.04) & (0.06) & (0.39) & (0.05)\vspace{0.08cm}\\
IFE--20 & -0.60 & -0.32 & -0.10 & -0.30 & 2.08 & -0.27 \\
    & (0.06) & (0.05) & (0.04) & (0.06) & (0.38) & (0.05)\vspace{0.08cm}\\
ABC & -0.63 &  -0.37 &  -0.11 & -0.22 & 2.39 &-0.25 \\
    & (0.06) & (0.05) & (0.04) & (0.05) & (0.38) & (0.05)\vspace{0.08cm}\\
BC--HN & -0.62 & -0.36 &  -0.10 & -0.21 & 1.73 &  -0.22 \\
 & (0.06) & (0.05) & (0.04) & (0.05) & (0.38) & (0.05)\vspace{0.08cm}\\
HBC &  -0.92 & -0.58 & -0.26 & -0.30 & 2.28 & -0.26  \\
 & (0.09) & (0.09) & (0.08) & (0.07) & (0.89) & (0.12)\vspace{0.08cm}\\
\hline
\bottomrule
\end{tabular}
\begin{tablenotes}[flushleft]
\linespread{1}\footnotesize
\item\hspace*{-\fontdimen2\font}\textit{Notes:} Standard errors are reported in the parenthesis and are computed based on the Hessian matrix of profiled log likelihood. The HBC estimates and standard errors computation follows page 1025 in \cite{DhaeneJochmans2015}. 
\end{tablenotes}
\end{threeparttable}   
\end{table}

Table~(\ref{empirics}) reports estimates of index coefficients using different methods. The standard errors are reported in parentheses. The standard errors for the fixed effects are computed from the Hessian of the profiled log likelihood. IFE--1, IFE--10 and IFE--20 denote indirect fixed effect estimators with $H$ being 1, 10 and 20 respectively, and their respective standard errors are computed by multiplying the FE standard errors by $(1+\frac{1}{H})$. For comparisons, the table includes results using half--panel jackknife method (HBC) \citep{DhaeneJochmans2015}, analytical bias correction (ABC) \citep{Fernandez-Val2009} and the leave--one--out jackknife method (BC--HN) \citep{HahnNewey2004}. The ABC has the same standard errors as the uncorrected fixed effect estimators, while the standard error computation for BC--HN and HBC follow the descriptions in \cite{HahnNewey2004} and \cite{DhaeneJochmans2015} respectively. The results show that the uncorrected estimates of index coefficients are about 15\% larger (in absolute value) than their bias-corrected counterparts, indirect fixed effect estimators are closely comparable to ABC and BC--HN, and HBC produces estimates that are larger in magnitude. Because HBC achieves bias correction through sample splitting, the standard errors are larger. 
%%%%%%%%%%%%%%%%%%%%%%%%%%%%%%%%%%%%%%%%%%%%%%%%%%%%%%%%%%%%%%%%%%%%%
%%%%%%%%%%%%%%%%%%%%%%%%%%%%%%%%%%%%%%%%%%%%%%%%%%%%%%%%%%%%%%%%%%%%%
\section{Monte Carlo Simulations} \label{Section: Monte Carlo}
This section considers Monte Carlo simulations calibrated to the same PSID data. The details of calibration procedures are available in Appendix~\ref{appendix: calibration}. The indirect inference fixed effect estimator is compared with the fixed effect estimation, the ABC and two jackknife bias correction methods. All simulations are done 1000 times and $H$ is set to 10. The coverage reports the proportion of the times that $\theta_{0}$ falls within the 95\% confidence interval. All of the other statistics are relative to the true parameters and multiplied by 100. 
\begin{table}[htbp!]
\centering
\begin{threeparttable}
\caption {\label{tab:MC static} Simulation Results for Static LFP} 
\begin{tabular}{rrrrrrrrrrrrr}
\hline\hline
  & \multicolumn{3}{c}{FE} &  & \multicolumn{3}{c}{IFE--10} & \multicolumn{4}{c}{IFE--20}  \Tstrut\Bstrut \\ 
\cline{2-4} \cline{6-8} \cline{10-12} 
            &  Bias   & Std Dev  &  Cvge & &  Bias  & Std Dev  &  Cvge & &  Bias  & Std Dev  &  Cvge\Tstrut\Bstrut\\    \hline
 kids0\_2   &  14.75  & 9.62 & 0.79 & & -4.78  & 8.18 & 0.94  & & -4.21 & 8.80 & 0.93\Tstrut\Bstrut\\       
 kids3\_5   &  14.74  & 14.27 & 0.91 & & -6.83  & 13.12 & 0.95 & & -6.43 & 13.23 & 0.94\Tstrut\Bstrut\\         
 kids6\_17  &  14.49  & 36.58    & 0.94 & & -18.06 & 38.81 & 0.94 & & -16.98 & 39.83 & 0.93\Tstrut\Bstrut\\
 loghusinc  &  14.87  & 25.83    & 0.94 & & -3.34  & 26.76 & 0.97 & & -4.66 & 24.97 & 0.96\Tstrut\Bstrut\\
 age        &  13.53  & 19.34    & 0.92 & & 0.24   & 7.79  & 0.97 & & -0.87 & 9.98 & 0.98\Tstrut\Bstrut\\
 age2       &  13.47  & 20.61    & 0.92 & & -2.25  & 22.51    & 0.96 & & -4.00 & 23.00 & 0.96\Tstrut\Bstrut\\
 \hline
 \bottomrule
\end{tabular}
\begin{tablenotes}[flushleft]
\linespread{1}\footnotesize
\item\hspace*{-\fontdimen2\font}\textit{Notes:} FE denotes fixed effects estimates. IFE--10 and IFE--20 denote indirect fixed effect estiamtes with $H$ being 10 and 20. Cvge denotes the empirical coverage probability. The nominal coverage is 95\%. Simulations are conducted 1000 times, and all relative statistics are multiplied by 100. The nominal coverage is 95\%.
\end{tablenotes}
\end{threeparttable}  
 \end{table}
 
 \begin{table}[htbp!]
\centering
\begin{threeparttable}
\caption {\label{tab:MC static 2} Simulation Results for Static LFP} 
\begin{tabular}{rrrrrrrrrrrrr}
\hline\hline
  & \multicolumn{3}{c}{ABC} &  & \multicolumn{3}{c}{BC--HN} & \multicolumn{4}{c}{HBC}   \Tstrut\Bstrut\\ 
\cline{2-4} \cline{6-8} \cline{10-12} 
&  Bias   & Std Dev  &  Cvge & &  Bias  & Std Dev  &  Cvge & &  Bias  & Std Dev  &  Cvge\Tstrut\Bstrut\\    \hline
 kids0\_2   &  1.18  & 8.39 & 0.95 & & -3.46  & 8.09  & 0.95  & & -5.22 & 12.47 & 0.96\Tstrut\Bstrut\\       
 kids3\_5   &  1.35  & 12.59 & 0.96 & & -3.33  & 12.17 & 0.96 & & -4.70 & 21.12 & 0.98\Tstrut\Bstrut\\         
 kids6\_17  &  1.54  & 32.39    & 0.95 & & -3.50 & 31.12 & 0.96 & & -4.16 & 56.48 & 0.98\Tstrut\Bstrut\\
 loghusinc  &  1.55  & 22.75 & 0.96 & & -3.43  & 21.87 & 0.96 & & -6.11 & 28.27 & 0.98\Tstrut\Bstrut\\
 age &  0.38  & 27.50  & 0.97 & & 4.27   & 16.67     & 0.96 & & -4.02 & 34.81 & 0.98\Tstrut\Bstrut\\
 age2  &  0.48  & 18.41 & 0.96 & & -4.36  & 17.78 & 0.94 & & -3.95 & 37.07 & 0.98\Tstrut\Bstrut\\
 \hline
 \bottomrule
\end{tabular}
\begin{tablenotes}[flushleft]
\linespread{1}\footnotesize
\item\hspace*{-\fontdimen2\font}\textit{Notes:} ABC denotes analytical bias correction in \cite{Fernandez-Val2009}. BC--HN denotes leave--one--out jackknife bias correction in \cite{HahnNewey2004}. HBC denotes split--panel bias correction in \cite{DhaeneJochmans2015}. Cvge denotes the empirical coverage probability. The nominal coverage is 95\%. Simulations are conducted 1000 times, and all relative statistics are multiplied by 100. The nominal coverage is 95\%.
\end{tablenotes}
\end{threeparttable}  
 \end{table}
 
Table~(\ref{tab:MC static}) reports the simulation results of fixed effects and indirect fixed effect estimators. Fixed effect estimators are subject to a bias that is of the same order of magnitude as the standard deviation. This leads to severe under--coverage of the confidence intervals. The indirect fixed effect estimators, on the other hand, reduce bias by a margin without much inflation in the standard deviation. Therefore, the empirical coverage is close to the nominal value of $95\%$.

%As an illustration, Figure~(\ref{fig: sim plot}) plots the densities of FE and IFE--10 estimates of $kids0\_2$ and $kids3\_5$, where the dashed lines denote the true values in the calibration exercise. The graphs show that fixed effect estimators are subject to a big bias, and thus the confidence intervals around them are not informative about the true coverage of the confidence intervals. On the other hand, because IFE--10 removes a large fraction of the bias, the recentering of confidence intervals admits a better coverage.  

Table~(\ref{tab:MC static 2}) tabulates the simulation results of ABC and two jackknife bias correction methods. Compared with IFE, ABC features smaller biases as it removes the bias term based on a plugged--in estimate, but standard deviations are comparable. Turning to the other two methods that automatically correct bias, first note that BC--HN admits smaller biases and standard deviations than HBC. The simulation results are in line with those reported in \cite{HughesHahn2020}, who theoretically showed that HBC has a larger higher--order variance and remaining bias than BC--HN.\footnote{Therefore, in practice it is recommended to use panel bootstrap to obtain standard errors for HBC.} On the other hand, IFE is comparable with BC--HN in terms of both bias and standard deviation. A theoretical exploration is left for future work. 

The current theory is restricted to strictly exogenous explanatory variables, but Monte Carlo simulations in Appendix~\ref{appendix: dynamic simulation} shows that the method can accommodate lagged dependent variables as well. Naturally, the next step is to extend the current theory to allow for dynamics in the DGP. 

Like other bias correction methods, the theoretical properties of the indirect fixed effect estimator are predicated on the large--$T$ assumption. Therefore, to compare how the estimator performs against other methods under varying lengths of time periods, this paper follows the literature \citep[e.g.,][]{HahnNewey2004, Fernandez-Val2009, HughesHahn2020} and considers the following simulation design: 
\begin{align*}
    & y_{it} = \boldsymbol{1}\{\theta_{0}x_{it} + \alpha_{i} - \varepsilon_{it}\geq 0\}, \quad \theta_{0}=1, \quad \alpha_{i}\sim\mathcal{N}(0, 1), \quad \varepsilon_{it}\sim\mathcal{N}(0, 1); \\
    & x_{it} = t/10 + x_{i,t-1}/2 + u_{it}, \quad x_{i0} = u_{i0}, \quad u_{it}\sim U(-0.5, 0.5).
\end{align*}
\begin{align*}
    & y_{it} = \boldsymbol{1}\{\theta_{0}x_{it} + \alpha_{i} - \varepsilon_{it}\geq 0\}, \quad \theta_{0}=1, \quad \alpha_{i}\sim\mathcal{N}(0, 1), \quad \varepsilon_{it}\sim\mathcal{N}(0, 1); \\
    & x_{it} = t/10 + x_{i,t-1}/2 + u_{it}, \quad x_{i0} = u_{i0}, \quad u_{it}\sim U(-0.5, 0.5).
\end{align*}
The numerical experiments consider panels with $n = \{100, 200\}$ and $T = \{4, 8, 12\}$.

\begin{table}[htbp!]
\centering
\begin{adjustbox}{max width=\textwidth}
\begin{threeparttable}
\caption {\label{tab:MC varying T} Estimates of $\theta_{0}$} 
\begin{tabular}{rrrrrrrrrrrrrrrrrr}
\hline\hline
  & \multicolumn{3}{c}{FE} &  & \multicolumn{3}{c}{IFE--1} & \multicolumn{4}{c}{IFE--10}  & \multicolumn{4}{c}{HBC} \Tstrut\Bstrut\\ 
\cline{2-4} \cline{6-8} \cline{10-12} \cline{14-16}
            &  Bias   & Std Dev  &  Coverage & &  Bias  & Std Dev  &  Coverage & &  Bias  & Std Dev  &  Coverage & &  Bias  & Std Dev  &  Coverage \Tstrut\Bstrut\\    \hline\Tstrut\Bstrut
 $n=100, T=4$   &  39.81  & 39.69 & 0.87 & & -3.00 & 44.06 & 0.99  & & -2.98 & 38.52 & 0.96 & & -42.41 & 80.30 & 0.94\vspace{0.1cm}\\       
 $n=100, T= 8$  &  18.58  & 14.27 & 0.86 & & 1.45 & 17.33 & 0.98 & & 0.94  & 12.98 & 0.96  & & -7.00 & 25.09 & 0.96\vspace{0.1cm}\\         
 $n=100, T = 12$  &  12.93  & 9.89    & 0.85 & & 0.29 & 11.17 & 0.99 & & 0.45 & 10.12 & 0.95 & & -3.50 & 16.55 & 0.97\vspace{0.1cm}\\
 $n=200, T = 4$  &  41.21  & 27.17    & 0.78 & & 1.67  & 30.70    & 0.99 & & 1.17 & 27.97 & 0.97 & & -36.56 & 53.53 & 0.92\vspace{0.1cm}\\
 $n=200, T = 8$  &  18.25  & 10.31 & 0.76 & & 1.17   & 12.84     & 0.96 & & 0.83 & 10.85 & 0.96 & & -6.11 & 17.94 & 0.96\vspace{0.1cm}\\
 $n=200, T = 12$ &  13.34  & 6.92  & 0.73 & & 1.47  & 8.29    & 0.95 & & 1.25 & 6.93 & 0.95 & & -2.38 & 11.65 & 0.97\vspace{0.1cm}\\
 \hline
 \bottomrule
\end{tabular}
\begin{tablenotes}[flushleft]
\linespread{1}\footnotesize
\item\hspace*{-\fontdimen2\font}\textit{Notes:} FE denotes fixed effects estimates. IFE--1 and IFE--10 denote indirect fixed effect estiamtes with $H$ being 1 and 10 respectively. HBC denotes split--sample jackknife bias correction in \cite{DhaeneJochmans2015}. Cvge denotes the empirical coverage probability. The nominal coverage is 95\%. Simulations are conducted 1000 times, and all the statistics are multiplied by 100. The nominal coverage is 95\%.
\end{tablenotes}
\end{threeparttable}  
\end{adjustbox}
 \end{table}

Table~(\ref{tab:MC varying T}) reports the simulation results. The fixed effect estimators are subject to large biases, even when the time periods is 12. Because their biases are comparable to the standard deviations, fixed effect estimators exhibit undercoverage, which implies under--rejections. HBC has a poor performance when $T=4$, and this is because each of the split sample only uses 2 time periods for estimation. HBC substantially reduces the bias when $T$ is 8 or 12, but it is subject to a large dispersion, which reflects a larger confidence interval. As a result, the empirical coverage of HBC is larger than the nominal value of 95\%.

The indirect fixed effect estimator has a better bias reduction performance compared to HBC, especially when $T=4$. This means that the new estimator is less sensitive to the time periods, and thus can be potentially useful for short--panel applications as well. The number of simulation paths affects the dispersion. When $H=1$, the estimator has a larger dispersion compared to the fixed effect estimator. When $H=10$, the dispersion is reduced. The trade--off for setting a large $H$ in practice is an increase in computation time. For example, when $n=200$ and $T=12$, it took roughly sixteen minutes to obtain the results for $H=10$ with ten cores on the MacBook Pro (M1, 2020). For $H=1$, it took about two minutes. It is interesting to explore algorithms that efficiently search for solutions to non--smoonth functions, but this is beyond the scope of the paper.

It is worth noting that both the application and the simulation design feature non--stationary regressors. The results provide suggestive evidence that the indirect fixed effect estimator can accommodate these variables, which do not satisfy the stationarity condition in Assumption~\ref{assumption on data}. However, the theoretical exploration is beyond the scope of this paper. In another ongoing project, the author proposes a new method called \textit{crossover jackknife} that deals with non--stationarity explicitly.
%What affects coverage? Bias and variance. A large bias mis--center the CIs while a large variance helps covering the true parameter. Compared to the HBC, the IFE is less sensitive to the time series lengths. 

%%%%%%%%%%%%%%%%%%%%%%%%%%%%%%%%%%%%%%%%%%%%%%%%%%%%%%%%%%%%%%%%%%%%%
%%%%%%%%%%%%%%%%%%%%%%%%%%%%%%%%%%%%%%%%%%%%%%%%%%%%%%%%%%%%%%%%%%%%%
\section{Conclusion} \label{Section: Conclusion}
Fixed effect estimations of nonlinear panel models are subject to large biases of point estimates and incorrect coverages of confidence intervals. This paper proposes a new estimator that reduces the bias and obtains standard errors without bootstrap. 

There are at least three other questions for further explorations. First, average partial effects are often the quantities of interest in nonlinear models. This paper establishes theoretical properties of finite dimensional parameters, and it could be interesting to explore if they can be extended to handle average partial effects, which is a function of explanatory variables, parameters of interest and incidental parameters.

Second, this paper directly works with non--smooth log likelihood function and establishes the asymptotic properties of the new estimator. However, a practical concern of non--smoothness is that gradient--based optimization schemes cannot be used for estimation, and gradient--free schemes like Nelder--Mead face computational difficulty in high--dimensional problems. The indirect fixed effect estimator might benefit from approaches like kernel smoothing, but the theoretical justification can be nontrivial as smoothing can introduce an additional bias.

%Third, measurement error is common in panel data \citep{MeijerSpierdijkWansbeek2015}. Fixed effect estimation in linear panel models suffers from attenuation bias when explanatory variables are mismeasured \citep{GrilichesHausman1986}, and a recent paper by \cite{EvdokimovZeleneev2020} extends this result to nonlinear panel models considered in this paper. Indirect inference holds out the promise of automatic bias correction when the same regression is run on observed and simulated data. However, current practices fully specify individual effects and measurement error \citep[e.g.,][]{GuvenenSmith2014}, which might lead to a concern of misspecification. It would be interesting to investigate if the indirect fixed effect estimator can be extended to handle measurement error without having to impose a particular structure.

Finally, incorporating unobserved heterogeneity into the dynamic discrete choice (DDC) models is an active area of research. One popular approach treats unobserved heterogeneity as an unobserved state variable and assumes individuals can be categorized into a finite number of types \citep{KasaharaShimotsu2009, ArcidiaconoMiller2011}. Introducing fixed effects circumvents the need to take a stand on the number of types, but can potentially complicate identification and estimation: the individual effects show up in both the current payoff and the continuation value, the latter of which has to be solved using a fixed--point algorithm. It would be exciting to investigate whether some of the ideas in this paper can be applied to incorporate fixed effects into DDC models.

%\cite{RossideMagistris2018} shows that indirect inference estimator is inconsistent in the presence of measurement errors.
%%%%%%%%%%%%%%%%%%%%%%%%%%%%%%%%%%%%%%%%%%%%%%%%%%%%%%%%%%%%%%%%%%%%%
%%%%%%%%%%%%%%%%%%%%%%%%%%%%%%%%%%%%%%%%%%%%%%%%%%%%%%%%%%%%%%%%%%%%%
\newpage
\bibliographystyle{ecta}
\bibliography{reference}
%%%%%%%%%%%%%%%%%%%%%%%%%%%%%%%%%%%%%%%%%%%%%%%%%%%%%%%%%%%%%%%%%%%%%
%%%%%%%%%%%%%%%%%%%%%%%%%%%%%%%%%%%%%%%%%%%%%%%%%%%%%%%%%%%%%%%%%%%%%
\newpage
\appendix
%%%%%%%%%%%%%%%%%%%%%%%%%%%%%%%%%%%%%%%%%%%%%%%
%%%%%%%%%%%%%%%%%%%%%%%%%%%%%%%%%%%%%%%%%%%%%%%
\section{Auxiliary Results} \label{Appendix: Auxiliary Lemmas}
\setcounter{equation}{0}
\numberwithin{equation}{section}
%%%%%%%%%%%%%%%%%%%%%%%%%%%%%%%%%%%%%%%%%%%%%%%
\subsection{Proof of Lemma~\ref{uniform convergence}}
\begin{lemma}[Uniform Convergence of Sample Criterion Function using Simulated Data] \label{uniform convergence}
$$\max_{1\leq i\leq n}\sup_{(\beta, \gamma)}\Big\lvert \widehat{G}^{h}_{(i)}(\beta, \gamma) - G_{(i)}(\beta, \gamma) \Big\rvert \xrightarrow{p} 0,$$
where
\begin{align*}
    \widehat{G}^{h}_{(i)}(\beta, \gamma) & = \frac{1}{T}\sum^{T}_{t=1}\ln f(y^{h}_{it}(\theta, \widehat{\alpha}_{i}) \mid x_{it}; \beta, \gamma); \\
  %  \widetilde{G}^{h}_{(i)}(\beta, \gamma) & = \frac{1}{T}\sum^{T}_{t=1}\ln f(y^{h}_{it}(\theta, \alpha_{i0}) \mid x_{it}, \beta, \gamma);\\
    G_{(i)}(\beta, \gamma) & = \frac{1}{T}\sum^{T}_{t=1}\mathbb{E}\ln f(y_{it}(\theta, \alpha_{i0}) \mid x_{it}; \beta, \gamma). \\
\end{align*}
\end{lemma}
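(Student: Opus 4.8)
The plan is to use cross-section independence (Assumption~\ref{cross-section independence}) to treat the $n$ individual discrepancies separately and then control their maximum by a maximal inequality, handling the two distinct difficulties — the plug-in of the estimated effect $\widehat{\alpha}_{i}$ into the discontinuous simulated outcome, and the single-path sampling noise — one at a time. Since $\theta$ is held fixed here (uniformity in $\theta$ is obtained later, via the stochastic-equicontinuity step of Proposition~\ref{Proposition: uniform consistency}), the only nonsmoothness that matters is that of $y^{h}_{it}(\theta,\cdot)$ in $\alpha$. The key device is to insert the simulation-integrated criterion
\begin{equation*}
    \bar{G}^{h}_{(i)}(\beta,\gamma;\alpha) := \frac{1}{T}\sum_{t=1}^{T}\int_{U}\ln f\big(y^{h}_{it}(\theta,\alpha)\mid x_{it};\beta,\gamma\big)\,dF_{u},
\end{equation*}
which averages out the simulation draws but keeps the individual effect $\alpha$ as a free argument, and to write the telescoping decomposition
\begin{equation*}
    \widehat{G}^{h}_{(i)}(\beta,\gamma)-G_{(i)}(\beta,\gamma)=\underbrace{\big[\widehat{G}^{h}_{(i)}-\bar{G}^{h}_{(i)}(\cdot;\widehat{\alpha}_{i})\big]}_{\mathrm{(I)}}+\underbrace{\big[\bar{G}^{h}_{(i)}(\cdot;\widehat{\alpha}_{i})-\bar{G}^{h}_{(i)}(\cdot;\alpha_{i0})\big]}_{\mathrm{(II)}}+\underbrace{\big[\bar{G}^{h}_{(i)}(\cdot;\alpha_{i0})-G_{(i)}\big]}_{\mathrm{(III)}}.
\end{equation*}
It suffices to show that $\max_{i}\sup_{(\beta,\gamma)}$ of each bracket converges to zero in probability.

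Term (III) carries only sampling randomness from the observed data, because the simulation draws have been integrated out and $\alpha_{i0}$ is fixed; moreover $\bar{G}^{h}_{(i)}(\cdot;\alpha_{i0})$ is smooth in $(\beta,\gamma)$, so its uniform convergence to $G_{(i)}$ follows from the envelope and mixing conditions (Assumptions~\ref{assumption on data} and~\ref{Lipschitz}) by the uniform law of large numbers argument already used in \cite{HahnKuersteiner2011}. Term (II) measures the sensitivity of the integrated criterion to replacing $\alpha_{i0}$ by $\widehat{\alpha}_{i}$: because integrating $\boldsymbol{1}(x'_{it}\theta+\alpha>u^{h}_{it})$ over $u^{h}_{it}\sim F_{u}$ yields $F_{u}(x'_{it}\theta+\alpha)$, a smooth function of $\alpha$, the map $\alpha\mapsto\bar{G}^{h}_{(i)}(\beta,\gamma;\alpha)$ is Lipschitz in $\alpha$ uniformly in $(\beta,\gamma)$, with modulus governed by the envelope $M(z_{it})$ of Assumption~\ref{envelope}. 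Hence (II) is bounded by that modulus times $\max_{i}\lvert\widehat{\alpha}_{i}-\alpha_{i0}\rvert$, which is $o_{p}(1)$ by the uniform consistency~(\ref{uniform consistency of alphai}).

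The genuinely hard term is (I), which is exactly the centered single-path process $\nu_{T}(\widehat{\alpha}_{i})$ evaluated at the random argument $\widehat{\alpha}_{i}$ and taken uniformly over $(\beta,\gamma)$. Here the discontinuity of $y^{h}_{it}(\theta,\cdot)$ in $\alpha$ rules out a smooth expansion, so I would follow \cite{Newey1991} and establish (a) pointwise convergence $\nu_{T}(\alpha_{i0})\xrightarrow{p}0$, which holds because $\nu_{T}$ is an average of terms with conditional mean zero over the simulation draws and the simulated sequence is mixing (Assumption~\ref{simulation mixing}); and (b) stochastic equicontinuity of $\alpha\mapsto\nu_{T}(\alpha)$ uniformly in $(\beta,\gamma)$. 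For (b), the summands are products of indicators of the scalar argument $\alpha$ with smooth functions of $(\beta,\gamma)$, a class of controlled (VC-type) complexity in the sense of \cite{VanDerVaartWellner1996}, whose expected squared increment satisfies the $L^{2}$-continuity modulus of \cite{Andrews1994}. Combining (a), (b), and $\max_{i}\lvert\widehat{\alpha}_{i}-\alpha_{i0}\rvert=o_{p}(1)$ then yields $\max_{i}\sup_{(\beta,\gamma)}\lvert\nu_{T}(\widehat{\alpha}_{i})\rvert\xrightarrow{p}0$.

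The main obstacle I anticipate is propagating all of this uniformly across the $n\asymp T$ individuals \emph{simultaneously}: for a single $i$ the process $\sup_{(\beta,\gamma)}\lvert\nu_{T}(\alpha_{i0})\rvert$ is $O_{p}(T^{-1/2})$, but the maximum over a growing number of units need not vanish without quantitative tail control. This is precisely where the uniform mixing bound $\sup_{i}\lvert\alpha_{i}(m)\rvert\leq Ka^{m}$ (Assumption~\ref{alpha mixing}) and the high-order uniform moment bound on the envelope (Assumption~\ref{envelope}) enter: they feed a maximal inequality — applying Markov's inequality to a sufficiently high moment of $\max_{i}\sup_{(\beta,\gamma)}\lvert\nu_{T}(\alpha_{i0})\rvert$ — strong enough to offset the growth of $n$ under Assumption~\ref{largeT}. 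The same uniform-in-$i$ tail control must be applied to the equicontinuity increment, so that stochastic equicontinuity holds uniformly over individuals rather than merely for each fixed $i$.
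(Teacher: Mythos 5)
Your proposal is correct and takes essentially the same route as the paper's proof: your terms (I) and (II) together constitute the paper's Step 1 (the simulation-centered empirical process $\nu_{T}$, its stochastic equicontinuity in the scalar individual effect via the Andrews/van der Vaart--Wellner entropy arguments for indicator-times-smooth classes, a Taylor/Lipschitz bound on the simulation-integrated criterion, and $\max_{i}\lvert\widehat{\alpha}_{i}-\alpha_{i0}\rvert=o_{p}(1)$), while your term (III) is exactly the paper's Step 2 (the Hahn--Kuersteiner-style ULLN with a union bound over the $n=O(T)$ units and $o(T^{-2})$ per-unit tail rates). If anything, you are more explicit than the paper in noting that the maximum over the growing number of individuals requires quantitative tail control also for the empirical-process terms (I)--(II), a uniformity the paper enforces explicitly only for the pure-sampling term.
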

\begin{proof} The proof consists of two main steps. The first step deals with $\widehat{\alpha}_{i}$'s in data simulation and shows that $\widehat{G}^{h}_{(i)}$ is uniformly close to a criterion that uses $\alpha_{i0}$ to simulate the data, i.e., 
$$\widetilde{G}_{(i)}(\beta, \gamma)=\frac{1}{T}\sum^{T}_{t=1}\int_{U}\ln f(y_{it}(\theta, \alpha_{i0})\mid x_{it}; \beta, \gamma)dF(u).$$
The second step is a uniform law of large number results showing that $\widetilde{G}_{(i)}(\beta, \gamma)$ uniformly converges to $G_{(i)}(\beta, \gamma)$. 

\noindent \textbf{Step 1:} Given $\theta$ and a scalar $\tau$, note that 
\begin{align*}
    \frac{1}{T}\sum^{T}_{t=1}\ln f(y^{h}_{it}(\theta, \tau)\mid x_{it}; \beta, \gamma) := & \frac{1}{T}\sum^{T}_{t=1}y^{h}_{it}(\theta, \tau)\ln \Phi(x_{it}'\beta + \gamma) \\
    & + (1-y^{h}_{it}(\theta, \tau))\ln(1-\Phi(x_{it}'\beta + \gamma))
\end{align*}
consists of two components: (1) an indicator function of scalar $\tau$ and (2) a smooth, bounded and monotone function of $(\beta, \gamma)$. The indicator function $y^{h}_{it}(\theta, \widehat{\alpha}_{i})$ belongs to type I class of \cite{Andrews1994}, which satisfies Pollard's entropy condition. The second component belongs to a class of functions satisfying bracketing entropy condition \citep[][Section 2.7.2]{VanDerVaartWellner1996}. 

Because $\frac{1}{T}\sum^{T}_{t=1}\ln f(y^{h}_{it}(\theta, \tau)\mid x_{it}; \beta, \gamma)$ is an additive and multiplicative combination of the two classes of components, its function class also satisfies the entropy condition \citep{Andrews1994}, which is the primitive condition for stochastic equicontinuity. More specifically, define the following empirical process:
$$\nu_{T}(\tau)=\frac{1}{T}\sum^{T}_{t=1}\Big[\ln f(y^{h}_{it}(\theta, \tau)\mid x_{it}; \beta, \gamma) - \int_{U}\ln f(y^{h}_{it}(\theta, \tau)\mid x_{it}; \beta, \gamma)dF_{u}\Big],$$
where the integration is over the known distribution of simulation draws. 
By one of the equivalent definitions of stochastic equicontinuity \cite[i.e.,][p.2252]{Andrews1994}, the following condition holds: for every sequence of constants $\{\delta_{T}\}$ that converges to zero, 
\begin{equation} \label{eqn: replace individual effects time series}
    \sup_{(\beta, \gamma)\in\mathcal{B}\times\Gamma_{\gamma}, \lvert\tau_{1}-\tau_{2}\rvert\leq\delta_{T}}\sqrt{T}\lvert \nu_{T}(\tau_{1})-\nu_{T}(\tau_{2})\rvert \xrightarrow{p} 0.
\end{equation}
A first--order Taylor expansion on $\int_{U}\ln f(y^{h}_{it}(\theta, \alpha_{i0})\mid x_{it}; \beta, \gamma)$ with respect to $\alpha_{i0}$ around $\widehat{\alpha}_{i}$ yields
\begin{align*}
    \int_{U}\ln f(y^{h}_{it}(\theta, \alpha_{i0})\mid x_{it}; \beta, \gamma)dF_{u}=&\int_{U}\ln f(y^{h}_{it}(\theta, \widehat{\alpha}_{i})\mid x_{it}; \beta, \gamma)dF_{u} \\
    & +\frac{\partial\int_{U}\ln f(y^{h}_{it}(\theta, \overline{\alpha}_{i})\mid x_{it}; \beta, \gamma)dF_{u}}{\partial\alpha_{i}}(\widehat{\alpha}_{i}-\alpha_{i0}).
\end{align*}
Combined with condition~(\ref{eqn: replace individual effects time series}),
\begin{align*}
    \sqrt{T}\Big\lvert\frac{1}{T}&\sum^{T}_{t=1}\Big[\ln f(y^{h}_{it}(\theta, \widehat{\alpha}_{i}) \mid x_{it}; \beta, \gamma) - \int_{U}\ln f(y^{h}_{it}(\theta, \alpha_{i0})\mid x_{it}; \beta, \gamma)dF_{u}\Big]\Big\rvert \\
    & = \sqrt{T}\Big\lvert\nu_{T}(\widehat{\alpha}_{i}) - \frac{\partial\int_{U}\ln f(y^{h}_{it}(\theta, \overline{\alpha}_{i})\mid x_{it}; \beta, \gamma)dF_{u}}{\partial\alpha_{i}}(\widehat{\alpha}_{i}-\alpha_{i0}) ]\Big\rvert \\
    & \leq \sqrt{T}\lvert \nu_{T}(\widehat{\alpha}_{i})\rvert + \sqrt{T} \Big\lvert\frac{1}{T}\sum^{T}_{t=1} \frac{\partial\int_{U}\ln f(y^{h}_{it}(\theta, \overline{\alpha}_{i})\mid x_{it}; \beta, \gamma)dF_{u}}{\partial\alpha_{i}}(\widehat{\alpha}_{i}-\alpha_{i0}) \Big\rvert \\
    & = \sqrt{T}\lvert \nu_{T}(\alpha_{i0})+\nu_{T}(\widehat{\alpha}_{i})-\nu_{T}(\alpha_{i0})\rvert \\
    & +\sqrt{T} \Big\lvert\frac{1}{T}\sum^{T}_{t=1} \frac{\partial\int_{U}\ln f(y^{h}_{it}(\theta, \overline{\alpha}_{i})\mid x_{it}; \beta, \gamma)dF_{u}}{\partial\alpha_{i}}(\widehat{\alpha}_{i}-\alpha_{i0}) \Big\rvert\\
    & \leq \sqrt{T}\lvert \nu_{T}(\alpha_{i0})\rvert+\sqrt{T}\lvert\nu_{T}(\widehat{\alpha}_{i})-\nu_{T}(\alpha_{i0})\rvert \\
    & + \sqrt{T}\Big\lvert\frac{1}{T}\sum^{T}_{t=1} \frac{\partial\int_{U}\ln f(y^{h}_{it}(\theta, \overline{\alpha}_{i})\mid x_{it}; \beta, \gamma)dF_{u}}{\partial\alpha_{i}}\Big\rvert\cdot\lvert\widehat{\alpha}_{i}-\alpha_{i0}\rvert. 
\end{align*}
where the third and last lines are due to triangular inequality. Because $\nu_{T}(\alpha_{i0})$ is a normalized sum of mean zero random variables, $\nu_{T}(\alpha_{i0})\xrightarrow{p} 0$ by LLN. The second term is the stochastic equicontinuity condition in Eq.~(\ref{eqn: replace individual effects time series}). Because the derivative is bounded by Assumption~\ref{Lipschitz} and $\max_{1\leq i\leq n}\lvert\widehat{\alpha}_{i}-\alpha_{i0}\rvert=o_{p}(1)$ \citep[][Theorem 4]{HahnKuersteiner2011}, the third term is thus $o_{p}(1)$. Therefore
$$\sup_{(\beta, \gamma)\in\mathcal{B}\times\Gamma_{\gamma}}\Big\lvert\frac{1}{T}\sum^{T}_{t=1}[\ln f(y^{h}_{it}(\theta, \widehat{\alpha}_{i}) \mid x_{it}; \beta, \gamma) - \int_{U}\ln f(y^{h}_{it}(\theta, \alpha_{i0})\mid x_{it}; \beta, \gamma)dF_{u}]\Big\rvert \xrightarrow{p} 0.$$
\textbf{Step 2:} The second part of the proof shows that
$$\max_{1\leq i\leq n}\sup_{(\beta, \gamma)}\Big\lvert\widetilde{G}_{(i)}(\beta, \gamma) - G_{(i)}(\beta, \gamma) \Big\rvert \xrightarrow{p} 0.$$
Following the the proof structure of Lemma 4 in \cite{HahnKuersteiner2011}, note that 
\begin{align*}
     P\Big[&\max_{1\leq i\leq n}\sup_{(\beta, \gamma)}\Big\lvert \widetilde{G}_{(i)}(\beta, \gamma) - G_{(i)}(\beta, \gamma) \Big\rvert \geq \eta\Big] \leq \sum^{n}_{i=1}P\Big[\sup_{(\beta,\gamma)}\Big\lvert \widetilde{G}_{(i)}(\beta, \gamma) - G_{(i)}(\beta, \gamma) \Big\rvert \geq \eta\Big].
\end{align*}
Since the parameter space is compact, it suffices to show that 
$$\sup_{\Gamma_{j}}\Big\lvert \widetilde{G}_{(i)}(\beta, \gamma) - G_{(i)}(\beta, \gamma) \Big\rvert \rightarrow 0,$$
where $\Gamma_{j}$ is a subset of $\mathcal{B}\times\Gamma_{\gamma}$ such that $\lVert\beta - \beta'\rVert\leq \varepsilon$ and $\lvert \gamma - \gamma'\rvert\leq \varepsilon$ for $(\beta, \gamma)$ and $(\beta', \gamma')\in \Gamma_{j}.$
By Assumption~\ref{Lipschitz} on $G_{(i)}$, 
\begin{align*}
    & \Big\lvert G_{(i)}(\beta, \gamma) - G^{h}_{(i)}(\beta', \gamma') \Big\rvert \leq \mathbb{E}M(z_{it})\lvert (\beta, \gamma) - (\beta', \gamma') \rvert < \varepsilon \mathbb{E}M(z_{it}), \\
    & \Big\lvert \widetilde{G}_{(i)}(\beta, \gamma) - \widetilde{G}_{(i)}(\beta', \gamma') \Big\rvert \leq \frac{1}{T}\sum^{T}_{t=1}M(z_{it})\lvert (\beta, \gamma) - (\beta', \gamma') \rvert < \frac{\varepsilon}{T}\sum^{T}_{t=1}M(z_{it}).
\end{align*}
By the triangular inequality,
\begin{align*}
    \Big\lvert \widetilde{G}_{(i)}(\beta, \gamma) - G_{(i)}(\beta, \gamma) \Big\rvert &- \Big\lvert \widetilde{G}_{(i)}(\beta', \gamma') - G_{(i)}(\beta', \gamma') \Big\rvert \\
    &\leq \Big\lvert \Big(\widetilde{G}_{(i)}(\beta, \gamma)  - \widetilde{G}_{(i)}(\beta', \gamma')\Big) - \Big( G_{(i)}(\beta, \gamma) - G_{(i)}(\beta', \gamma')\Big)\Big\rvert \\
    & \leq \Big\lvert \widetilde{G}_{(i)}(\beta, \gamma)  - \widetilde{G}_{(i)}(\beta', \gamma' ) \Big\rvert + \Big\lvert G_{(i)}(\beta, \gamma) - G_{(i)}(\beta', \gamma')\Big\rvert \\
    & < \varepsilon \mathbb{E}M(z_{it}) + \frac{\varepsilon}{T}\sum^{T}_{t=1}M(z_{it}) \\
    & = \frac{\varepsilon}{T}\Big(\sum^{T}_{t=1}M(z_{it})-\mathbb{E}M(z_{it})\Big) + \frac{\varepsilon}{T}\mathbb{E}M(z_{it})+\varepsilon\mathbb{E}M(z_{it}) \\
    & < \frac{\varepsilon}{T}\Big\lvert \sum^{T}_{t=1}M(z_{it})-\mathbb{E}M(z_{it}) \Big\rvert + 2\varepsilon\mathbb{E}M(z_{it}).
\end{align*}
Therefore by a rearrangement of the terms, 
\begin{align*}
    \Big\lvert \widetilde{G}_{(i)}(\beta, \gamma) - G_{(i)}(\beta, \gamma ) \Big\rvert \leq \Big\lvert &\widetilde{G}_{(i)}(\beta', \gamma') - G_{(i)}(\beta', \gamma') \Big\rvert \\
    & + \frac{\varepsilon}{T}\Big\lvert \sum^{T}_{t=1}M(x_{it})-\mathbb{E}M(x_{it}) \Big\rvert + 2\varepsilon\mathbb{E}M(x_{it}). 
\end{align*}
Let $\varepsilon$ be such that $2\varepsilon\max_{i}\mathbb{E}M(z_{it})<\frac{\eta}{3}$, then 
\begin{align*}
    P\Big[&\sup_{\Gamma_{j}}\Big\lvert \widetilde{G}_{(i)}(\beta, \gamma) - G_{(i)}(\beta, \gamma) \Big\rvert>\eta\Big] \\
     \leq &P\Big[\Big\lvert \widetilde{G}_{(i)}(\beta', \gamma') - G_{(i)}(\beta', \gamma') \Big\rvert> \frac{\eta}{3}\Big] + P\Big[\frac{1}{T}\Big\lvert \sum^{T}_{t=1}M(x_{it})-\mathbb{E}M(x_{it}) \Big\rvert>\frac{\eta}{3\varepsilon}\Big] \\
     & + P\Big[2\varepsilon\mathbb{E}M(x_{it})>\frac{\eta}{3}\Big]\\
    =&o(T^{-2}),
\end{align*}
where the last line follows as the first two terms on the right--hand side are $o(T^{-2})$ by Lemma 1 in \cite{HahnKuersteiner2011} and the last term is of probability zero by construction. Since $n=O(T)$,
\begin{align*}
     P\Big[&\max_{1\leq i\leq n}\sup_{(\beta, \gamma)}\Big\lvert \widetilde{G}_{(i)}(\beta, \gamma) - G_{(i)}(\beta, \gamma) \Big\rvert \geq \eta\Big] \\
     &\leq \sum^{n}_{i=1}\sum^{m(\varepsilon)}_{j=1}P\Big[\sup_{\Gamma_{j}}\Big\lvert \widetilde{G}_{(i)}(\beta, \gamma) - G_{(i)}(\beta, \gamma) \Big\rvert\geq\eta\Big] \\
     &=o(T^{-1})
\end{align*}
\end{proof}
%%%%%%%%%%%%%%%%%%%%%%%%%%%%%%%%%%%%%%%%%%%%%%%
\subsection{Proof of Lemma~\ref{point-wise consistency}}
\begin{lemma}[Pointwise Consistency of Auxiliary Estimator in the Simulation World] \label{point-wise consistency}
$\forall \theta\in\Theta$,
$$\widehat{\beta}^{h}(\theta, \boldsymbol{\widehat{\alpha}}) \xrightarrow{p} \beta(\theta, \boldsymbol{\alpha_{0}}) = \theta.$$
\end{lemma}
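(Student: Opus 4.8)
This lemma establishes the pointwise consistency of the auxiliary estimator $\widehat{\beta}^{h}(\theta, \boldsymbol{\widehat{\alpha}})$ to $\beta(\theta, \boldsymbol{\alpha_{0}}) = \theta$. Since $\widehat{\beta}^{h}(\theta, \boldsymbol{\widehat{\alpha}})$ is an extremum estimator (the maximizer of the profiled sample criterion in the simulation world), the natural strategy is to invoke the standard consistency argument for M-estimators in the spirit of \cite{NeweyMcFadden1994}. The key ingredients of that argument are: (i) a uniform law of large numbers for the sample criterion function, (ii) a unique maximizer of the limiting criterion, and (iii) compactness of the parameter space together with continuity of the limiting objective. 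The plan is to verify each of these ingredients using the results and assumptions already established in the excerpt.

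The first step is to deploy Lemma~\ref{uniform convergence}, which provides precisely the uniform convergence needed:
\begin{equation*}
    \max_{1\leq i\leq n}\sup_{(\beta,\gamma)}\Big\lvert \widehat{G}^{h}_{(i)}(\beta,\gamma) - G_{(i)}(\beta,\gamma)\Big\rvert \xrightarrow{p} 0.
\end{equation*}
This ULLN handles the nonstandard aspects of the problem at once: the randomness from data sampling, the randomness from simulation draws, and crucially the fact that data are simulated using $\widehat{\alpha}_{i}$'s rather than $\alpha_{i0}$'s. Because $\widehat{\beta}^{h}(\theta, \boldsymbol{\widehat{\alpha}})$ is obtained by profiling out the $\gamma_{i}$'s, I would argue that the uniform convergence of the full (unprofiled) sample criterion passes to the profiled criterion, using the fact that the argmax operation over the $\gamma_{i}$'s is preserved in the limit; this step uses Assumption~\ref{compactness in simulation} (compactness of $\Gamma_{\gamma}$) and the smoothness/concavity of $\ln f$ in $(\beta, \gamma_{i})$ for the panel binary choice model, so that the profiled objective inherits uniform convergence to the profiled limit.

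The second step is to identify the limiting criterion and verify uniqueness of its maximizer. As discussed in Remark~\ref{Remark: invertibility}, the limiting function for the simulation-world criterion is essentially the population fixed-effect likelihood problem, with $(\beta, \gamma_{i})$ merely a relabeling of $(\theta, \alpha_{i})$. Identification (Assumption~\ref{identification}) then guarantees that the unique maximizer of $G_{(i)}(\beta,\gamma)$ over $(\beta,\gamma)$ is attained at the data-generating value, and after profiling the $\gamma_{i}$'s the unique maximizer in $\beta$ equals $\theta$. This is exactly why $\beta(\theta, \boldsymbol{\alpha_{0}}) = \theta$: the binding function is the identity. Combining the ULLN of step one with this unique identification, the standard consistency theorem for extremum estimators delivers $\widehat{\beta}^{h}(\theta, \boldsymbol{\widehat{\alpha}}) \xrightarrow{p} \theta$ for each fixed $\theta$.

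I expect the main obstacle to be the careful handling of the profiling step, namely transferring the uniform convergence of the unprofiled criterion (which Lemma~\ref{uniform convergence} gives uniformly over $(\beta,\gamma)$ and over individuals) into uniform convergence of the concentrated objective in $\beta$ alone. The difficulty is that each $\widehat{\gamma}_{i}(\beta)$ is itself a random maximizer depending on the nonsmooth simulated data, and one must ensure that the maximization over the incidental parameters does not destroy the convergence — in particular, that $\max_i$ over a growing number of individuals remains controlled. Here I would lean on the uniform-in-$i$ statement of Lemma~\ref{uniform convergence} (note the $\max_{1\leq i\leq n}$), combined with the concavity of the profiled log likelihood in $(\beta,\gamma_{i})$, which makes the profiled-out estimator $\widehat{\gamma}_{i}(\beta)$ well-behaved and continuous in $\beta$. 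Everything else — continuity of the limit, compactness — follows directly from Assumptions~\ref{compactness} and~\ref{compactness in simulation}.
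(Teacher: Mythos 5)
Your proposal is correct and rests on the same essential ingredients as the paper's proof: Lemma~\ref{uniform convergence} as the uniform law of large numbers, Assumption~\ref{identification} for uniqueness of the limiting maximizer, the identity binding function $\beta(\theta,\boldsymbol{\alpha_{0}})=\theta$ from Remark~\ref{Remark: invertibility}, and a \cite{NeweyMcFadden1994}--style extremum-estimator argument. The genuine difference is how the incidental parameters are handled. You concentrate out the $\gamma_{i}$'s and treat the transfer of uniform convergence to the profiled objective as the main obstacle, proposing concavity and continuity of $\widehat{\gamma}_{i}(\beta)$ to control it. The paper never profiles: following Theorem 3 of \cite{HahnKuersteiner2011}, it runs the argument on the joint criterion $\frac{1}{n}\sum_{i}\widehat{G}^{h}_{(i)}(\beta,\gamma_{i})$ and exploits the set inclusion $\{\lVert\beta-\theta\rVert>\eta,\ \gamma_{1},\dots,\gamma_{n}\ \text{arbitrary}\}\subseteq\{\lVert(\beta,\gamma_{i})-(\theta,\alpha_{i0})\rVert>\eta\ \text{for all}\ i\}$, so the identification gap $\varepsilon$ of Assumption~\ref{identification} applies directly to the enlarged region; a chain of four inequalities, invoking Lemma~\ref{uniform convergence} twice (once over the bad region, once at the point $(\theta,\alpha_{i0})$), then shows the joint maximizer must place $\beta$ within $\eta$ of $\theta$ with probability $1-o(T^{-1})$. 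Two observations on your route: first, the obstacle you flag dissolves into an elementary inequality, since $\lvert\sup_{\gamma}f(\gamma)-\sup_{\gamma}g(\gamma)\rvert\leq\sup_{\gamma}\lvert f(\gamma)-g(\gamma)\rvert$ implies the $\max_{1\leq i\leq n}\sup_{(\beta,\gamma)}$ statement of Lemma~\ref{uniform convergence} passes to the concentrated criterion immediately, with no appeal to concavity or to properties of $\widehat{\gamma}_{i}(\beta)$; second, the paper's joint formulation keeps the uniformity over the $n$ growing incidental parameters and the quantitative rate $o(T^{-1})$ explicit, which is the form in which the conclusion is recorded. Either route closes the proof, but the paper's is shorter and avoids defining and analyzing a population profiled objective altogether.
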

\begin{proof}
The previous lemma shows that using $\boldsymbol{\widehat{\alpha}}$ for data simulation well approximates data simulated using $\boldsymbol{\alpha_{0}}$, therefore the randomness in the log likelihood function only comes from observed data. The proof structure of this lemma follows from that for Theorem 3 in \cite{HahnKuersteiner2011}, with minor modification of notations. Fix $\eta>0$ and set 
$$\varepsilon = \inf_{i}\Big[G_{(i)}(\theta, \alpha_{i0})-\sup_{\{(\beta, \gamma): \lVert(\beta, \gamma)-(\theta, \alpha_{i0})\rVert>\eta\}}G_{(i)}(\beta, \gamma)\Big]>0$$
With probability $1-o(T^{-1})$, 
\begin{align*}
    \max_{\lVert \beta - \theta\rVert>\eta, \gamma_{1},\dots,\gamma_{n}}\frac{1}{n}\sum^{n}_{i=1}\widehat{G}^{h}_{(i)}(\beta, \gamma_{i}) &\leq \max_{\lVert (\beta, \gamma_{i}) - (\theta, \alpha_{i0})\rVert>\eta}\frac{1}{n}\sum^{n}_{i=1}\widehat{G}^{h}_{(i)}(\beta, \gamma_{i}) \\
    &\leq \max_{\lVert (\beta, \gamma_{i}) - (\theta, \alpha_{i0}) \rVert>\eta}\frac{1}{n}\sum^{n}_{i=1}G_{(i)}(\beta, \gamma_{i}) + \frac{1}{3}\varepsilon \\ 
    & < \frac{1}{n}\sum^{n}_{i=1}G_{(i)}(\theta, \alpha_{i0}) - \frac{2}{3}\varepsilon \\
    & < \frac{1}{n}\sum^{n}_{i=1}\widehat{G}^{h}_{(i)}(\theta, \alpha_{i0}) - \frac{1}{3}\varepsilon,
\end{align*}
where the second and last inequalities are due to Lemma~\ref{uniform convergence}. By definition 
$$\max_{\beta, \gamma_{1},\dots,\gamma_{n}}\frac{1}{n}\sum^{n}_{i=1}\widehat{G}^{h}_{(i)}(\beta, \gamma_{i}) \geq \frac{1}{n}\sum^{n}_{i=1}G^{h}_{(i)}(\theta, \alpha_{i0}).$$
Hence
$$P\Big[\lVert \widehat{\beta}^{h}(\theta, \boldsymbol{\widehat{\alpha}}) - \beta(\theta, \boldsymbol{\alpha_{0}}) \rVert \geq \eta\Big]=o(T^{-1}).$$
\end{proof}
\section{Proofs of Main Results} \label{Appendix: Main Theorem Proof}
\setcounter{equation}{0}
\numberwithin{equation}{section}
%%%%%%%%%%%%%%%%%%%%%%%%%%%%%%%%%%%%%%%%%%%%%%%
\subsection{Proof of Proposition~\ref{Proposition: uniform consistency}}
\begin{proof}
The structure of the proof follows Theorem 1 in \cite{Newey1991}, which requires four  main pieces. The parameter space $\Theta$ is compact by assumption. The limiting function $\beta(\theta, \boldsymbol{\alpha_{0}})$ is continuous since it is an identity function. Lemma~\ref{point-wise consistency} establishes the pointwise convergence result using simulated data: $\forall \theta\in\Theta,$ $\widehat{\beta}^{h}(\theta, \boldsymbol{\widehat{\alpha}})\xrightarrow{p}\beta(\theta, \boldsymbol{\alpha_{0}})$. Therefore, it suffices to prove that $\widehat{\beta}^{h}(\theta, \boldsymbol{\widehat{\alpha}})$ is stochastic equicontinuous. This section uses $\widehat{\beta}^{h}(\theta)$ to ease the notation. 

\noindent By Markov inequality, $\forall \eta>0$,
$$Pr\Big(\sup_{\theta\in\Theta}\lVert\widehat{\beta}^{h}(\theta)-\beta(\theta, \boldsymbol{\alpha_{0}})\rVert>\eta\Big)\leq \frac{1}{\eta}\mathbb{E}\Big(\sup_{\theta\in\Theta}\lVert\widehat{\beta}^{h}(\theta)-\beta(\theta, \boldsymbol{\alpha_{0}})\rVert\Big).$$
Combined with the compactness assumption, it suffices to show that
\begin{equation} \label{eqn for stoc eq}
    \mathbb{E}\Big(\sup_{\lVert \theta_{1}-\theta_{2}\rVert\leq \delta}\lVert \widehat{\beta}^{h}(\theta_{1})-\widehat{\beta}^{h}(\theta_{2})\rVert\Big)\leq C\delta,
\end{equation}
where $\delta$ denotes a positive scalar that is arbitrarily small and $C$ is a constant. The rest of the proof consists of three parts. Firstly, a representation of $\widehat{\beta}^{h}(\theta_{1}) - \widehat{\beta}^{h}(\theta_{2})$ in terms of profiled likelihood is established. Then, the question is transformed to bounding terms related to components of the profiled log likelihood. Lastly, the different pieces are glued together to give an expression of $C$. 

\noindent \textbf{Step 1:} Let $\widehat{Q}(\widehat{\beta}^{h}(\theta); \theta)$ denote the profiled log likelihood function using simulated data $h$,  
\begin{align} 
    \widehat{Q}(\beta; \theta) = \frac{1}{nT}\sum^{n}_{i=1}\sum^{T}_{t=1}&y^{h}_{it}(\theta, \widehat{\alpha}_{i})\ln\Big(\Phi(x_{it}'\beta + \widehat{\gamma}_{i}(\beta)\Big) \nonumber \\
    & + (1-y^{h}_{it}(\theta, \widehat{\alpha}_{i}))\ln\Big(1-\Phi(x_{it}'\beta+ \widehat{\gamma}_{i}(\beta))\Big).
\end{align}
Then by definition, $\widehat{\beta}^{h}(\theta_{1})$ and $\widehat{\beta}^{h}(\theta_{2})$ satisfy the first--order conditions,
$$\frac{\partial \widehat{Q}(\widehat{\beta}^{h}(\theta_{1}); \theta_{1})}{\partial\beta}=0, \quad \frac{\partial \widehat{Q}(\widehat{\beta}^{h}(\theta_{2}); \theta_{2})}{\partial\beta}=0.$$
A first--order Taylor expansion yields
$$\frac{\partial \widehat{Q}(\widehat{\beta}^{h}(\theta_{1}); \theta_{1})}{\partial\beta}=0=\frac{\partial \widehat{Q}(\widehat{\beta}^{h}(\theta_{2}); \theta_{1})}{\partial\beta}+\frac{\partial^{2}\widehat{Q}(\widetilde{\beta}; \theta_{1})}{\partial\beta\partial\beta'}(\widehat{\beta}^{h}(\theta_{1})-\widehat{\beta}^{h}(\theta_{2})),$$
where $\widetilde{\beta}$ is between $\widehat{\beta}^{h}(\theta_{1})$ and $\widehat{\beta}^{h}(\theta_{2})$. Therefore, 
$$\frac{\partial^{2}\widehat{Q}(\widetilde{\beta}; \theta_{1})}{\partial\beta\partial\beta'}(\widehat{\beta}^{h}(\theta_{1})-\widehat{\beta}^{h}(\theta_{2}))=\frac{\partial \widehat{Q}(\widehat{\beta}^{h}(\theta_{2}); \theta_{2})}{\partial\beta} - \frac{\partial \widehat{Q}(\widehat{\beta}^{h}(\theta_{2}); \theta_{1})}{\partial\beta}.$$
Let $\lambda_{s}$ denote the smallest eigenvalue of the Hessian of the profiled likelihood, then a quadratic inequality leads to 
$$\lambda_{s}\lVert\widehat{\beta}^{h}(\theta_{1})-\widehat{\beta}^{h}(\theta_{2})\rVert\leq \Big\lvert \frac{\partial \widehat{Q}(\widehat{\beta}^{h}(\theta_{2}); \theta_{2})}{\partial\beta} - \frac{\partial \widehat{Q}(\widehat{\beta}^{h}(\theta_{2}); \theta_{1})}{\partial\beta}\Big\rvert,$$
where $\frac{\partial\widehat{\gamma}_{i}(\widehat{\beta}(\theta_{2}))}{\partial\beta}=0$ by the envelope theorem. For binary response panel probit models, some algebra leads to the following expression of the right--hand--side term in the absolute sign,
\begin{align} 
    \frac{1}{nT}\sum^{n}_{i=1}\sum^{T}_{t=1}&\Big(y^{h}_{it}(\theta_{1}, \widehat{\alpha}_{i})-y^{h}_{it}(\theta_{2}, \widehat{\alpha}_{i})\Big)\nonumber \\
    & \times \Big(\frac{\phi(x'_{it}\widehat{\beta}^{h}(\theta_{2})+\widehat{\gamma}_{i}(\widehat{\beta}^{h}(\theta_{2}))x_{it}}{\Phi(x'_{it}\widehat{\beta}^{h}(\theta_{2})+\widehat{\gamma}_{i}(\widehat{\beta}^{h}(\theta_{2}))[1-\Phi(x'_{it}\widehat{\beta}^{h}(\theta_{2})+\widehat{\gamma}_{i}(\widehat{\beta}^{h}(\theta_{2}))]}\Big), \label{eqn: original RHS}
\end{align}
where $y^{h}_{it}(\theta)=\boldsymbol{1}\{x'_{it}\theta+\widehat{\alpha}_{i}\geq u^{h}_{it}\}$ and $u^{h}_{it}$ is from the standard normal distribution. Therefore, to establish Condition~(\ref{eqn for stoc eq}), it suffices to focus on Eq.~(\ref{eqn: original RHS}).

\noindent \textbf{Step 2:} By the Cauchy--Schwarz inequality, 
\begin{footnotesize}
\begin{align*}
    \mathbb{E}\Big(\sup_{\lVert\theta_{1}-\theta_{2}\rVert\leq\delta}&\Big\lvert \frac{1}{nT}\sum^{n}_{i=1}\sum^{T}_{t=1}\Big(y^{h}_{it}(\theta_{1}, \widehat{\alpha}_{i})-y^{h}_{it}(\theta_{2}, \widehat{\alpha}_{i})\Big)\\
    & \times \Big(\frac{\phi(x'_{it}\widehat{\beta}^{h}(\theta_{2})+\widehat{\gamma}_{i}(\widehat{\beta}^{h}(\theta_{2}))x_{it}}{\Phi(x'_{it}\widehat{\beta}^{h}(\theta_{2})+\widehat{\gamma}_{i}(\widehat{\beta}^{h}(\theta_{2}))[1-\Phi(x'_{it}\widehat{\beta}^{h}(\theta_{2})+\widehat{\gamma}_{i}(\widehat{\beta}^{h}(\theta_{2}))]}\Big)\Big\rvert\Big) \\
    & \leq \sqrt{\mathbb{E}\Big(\sup_{\lVert\theta_{1}-\theta_{2}\rVert\leq\delta}\Big\lvert \frac{1}{nT}\sum^{n}_{i=1}\sum^{T}_{t=1}y^{h}_{it}(\theta_{1}, \widehat{\alpha}_{i})-y^{h}_{it}(\theta_{2}, \widehat{\alpha}_{i})\Big\rvert^{2}\Big)} \times\\
    & \sqrt{\mathbb{E}\Big(\Big\lvert\frac{1}{nT}\sum^{n}_{i=1}\sum^{T}_{t=1}\frac{\phi(x'_{it}\widehat{\beta}^{h}(\theta_{2})+\widehat{\gamma}_{i}(\widehat{\beta}^{h}(\theta_{2}))x_{it}}{\Phi(x'_{it}\widehat{\beta}^{h}(\theta_{2})+\widehat{\gamma}_{i}(\widehat{\beta}^{h}(\theta_{2}))[1-\Phi(x'_{it}\widehat{\beta}^{h}(\theta_{2})+\widehat{\gamma}_{i}(\widehat{\beta}^{h}(\theta_{2}))]}\Big\lvert^{2}\Big)}.
\end{align*}
\end{footnotesize}
For each $i$ and $t$, the following two $L^{2}$--smoothness conditions hold:
\begin{align}
    & \sqrt{\mathbb{E}\Big(\sup_{\lVert\theta_{1}-\theta_{2}\rVert\leq\delta}\lvert y^{h}_{it}(\theta_{1}, \widehat{\alpha}_{i})-y^{h}_{it}(\theta_{2}, \widehat{\alpha}_{i})\rvert^{2}\Big)}\leq \sqrt{\frac{\mathbb{E}\lVert x_{it}\rVert_{2}}{\sqrt{2\pi}}}\sqrt{\delta}, \label{eqn: l2 for indicators}\\
    & \sqrt{\mathbb{E}\Big(\Big\lvert\frac{\phi(x'_{it}\widehat{\beta}^{h}(\theta_{2})+\widehat{\gamma}_{i}(\widehat{\beta}^{h}(\theta_{2}))x_{it}}{\Phi(x'_{it}\widehat{\beta}^{h}(\theta_{2})+\widehat{\gamma}_{i}(\widehat{\beta}^{h}(\theta_{2}))[1-\Phi(x'_{it}\widehat{\beta}^{h}(\theta_{2})+\widehat{\gamma}_{i}(\widehat{\beta}^{h}(\theta_{2}))]}\Big\lvert^{2}\Big)} \leq K_{2}, \label{eqn: l2 for second term}
\end{align}
where $\lVert x\rVert_{2}$ denotes the $L_{2}$-norm $\lvert x'x \rvert^{1/2}$. This corresponds to type IV class in \cite{Andrews1994}.

\noindent \textbf{Proving condition~(\ref{eqn: l2 for indicators}):} Denote $\Delta\theta:=\theta_{2} - \theta_{1}$ and note that
\begin{align*}
    \sup_{\lVert\theta_{1}-\theta_{2}\rVert\leq\delta}\lvert y^{h}_{it}(\theta_{1})-y^{h}_{it}(\theta_{2})\lvert & = \sup_{\lVert\Delta\theta\rVert\leq\delta}\lvert\boldsymbol{1}\{x'_{it}\theta_{1}+\widehat{\alpha}_{i}\geq u^{h}_{it}\}-\boldsymbol{1}\{x'_{it}(\theta_{1}+\Delta\theta)+\widehat{\alpha}_{i}\geq u^{h}_{it}\}\lvert.
\end{align*}
The direction that obtains the supremum is given by
$$\Delta\theta = \pm \frac{\delta}{\lVert x_{it}\rVert_{2}}x_{it}.$$
Therefore
\begin{equation} \label{eqn: max bound of indicators}
     \mathbb{E}\Big[\sup_{\lVert\theta_{1}-\theta_{2}\rVert\leq\delta}\lvert y^{h}_{it}(\theta_{1})-y^{h}_{it}(\theta_{2})\lvert\Big] \leq \mathbb{E}\Big(\boldsymbol{1}\{x'_{it}\theta_{1}+\widehat{\alpha}_{i}\geq u^{h}_{it}\}-\boldsymbol{1}\{x'_{it}\theta_{1}-\lVert x_{it}\rVert_{2}\delta+\widehat{\alpha}_{i}\geq u^{h}_{it}\}\Big).
\end{equation}
Because $\delta$ is a scalar, a proof strategy {\`a} la  \cite{ChenLintonVanKeilegom2003} is employed to bound the right--hand--side term in Equation~(\ref{eqn: max bound of indicators}). More specifically, note that $$\boldsymbol{1}\{x'_{it}\theta_{1}+\widehat{\alpha}_{i}\geq u^{h}_{it}\}-\boldsymbol{1}\{x'_{it}\theta_{1}-\lVert x_{it}\rVert_{2}\delta+\widehat{\alpha}_{i}\geq u^{h}_{it}\}$$
takes value either 1 or 0, and the expectation is the probability that the following event occurs: 
$$x'_{it}\theta_{1}+\widehat{\alpha}_{i}\geq u^{h}_{it} \geq x'_{it}\theta_{1}-\lVert x_{it}\rVert_{2}\delta+\widehat{\alpha}_{i}.$$
Applying law of iterated expectation on the right--hand--side term and first--order Taylor expansion around $\delta$,
\begin{align*}
    &\mathbb{E}\Big[\mathbb{E}\Big(\boldsymbol{1}\{x'_{it}\theta_{1}+\widehat{\alpha}_{i}\geq u^{h}_{it}\}-\boldsymbol{1}\{x'_{it}\theta_{1}-\lVert x_{it}\rVert_{2}\delta+\widehat{\alpha}_{i}\geq u^{h}_{it}\} \mid x'_{it}, \widehat{\alpha}_{i}\Big)\Big] \\
    =& \mathbb{E}\Big[\Phi(x'_{it}\theta_{1}+\widehat{\alpha}_{i}) - \Phi(x'_{it}\theta_{1} -\lVert x_{it}\rVert_{2}\delta +\widehat{\alpha}_{i})\Big] \\ 
    =& \mathbb{E}\Big[\phi(x'_{it}\theta + \widehat{\alpha}_{i})\lVert x_{it}\rVert_{2}\Big]\delta
\end{align*}
Therefore,
$$\mathbb{E}\Big[\sup_{\lVert\theta_{1}-\theta_{2}\rVert\leq\delta}\lvert y^{h}_{it}(\theta_{1})-y^{h}_{it}(\theta_{2})\lvert\Big] \leq \mathbb{E}\Big[\phi(x'_{it}\theta + \widehat{\alpha}_{i})\lVert x_{it}\rVert_{2}\Big]\delta \leq \frac{\mathbb{E}\lVert x_{it}\rVert_{2}}{\sqrt{2\pi}}\delta,$$
where the last inequality uses the fact that $\phi(\cdot)\leq\frac{1}{\sqrt{2\pi}}$. 

\noindent \textbf{Proving condition~(\ref{eqn: l2 for second term}):} Note that
\begin{align*}
    &\sqrt{\mathbb{E}\Big(\Big\lvert\frac{\phi(x'_{it}\widehat{\beta}^{h}(\theta_{2})+\widehat{\gamma}_{i}(\widehat{\beta}^{h}(\theta_{2}))x_{it}}{\Phi(x'_{it}\widehat{\beta}^{h}(\theta_{2})+\widehat{\gamma}_{i}(\widehat{\beta}^{h}(\theta_{2}))[1-\Phi(x'_{it}\widehat{\beta}^{h}(\theta_{2})+\widehat{\gamma}_{i}(\widehat{\beta}^{h}(\theta_{2}))]}\Big\lvert^{2}\Big)}
\end{align*}
is no greater than 
\begin{align*}
    \sqrt{\mathbb{E}\Big(\sup_{(\beta, \gamma)\in\mathcal{B}\times\Gamma_{\gamma}}\Big\lvert\frac{\phi(x'_{it}\beta+\gamma)x_{it}}{\Phi(x'_{it}\beta+\gamma)[1-\Phi(x'_{it}\beta+\gamma)]}\Big\lvert^{2}\Big)},
\end{align*}
which is bounded based on Lipschitz condition. 

\noindent \textbf{Step 3:} Because the supremum of sum is no greater than sum of the supremum, 
\begin{align*}
    &\mathbb{E}\Big(\sup_{\lVert\theta_{1}-\theta_{2}\rVert\leq\delta}\Big\lvert \frac{1}{nT}\sum^{n}_{i=1}\sum^{T}_{t=1} y^{h}_{it}(\theta_{1}, \widehat{\alpha}_{i})-y^{h}_{it}(\theta_{2}, \widehat{\alpha}_{i})\Big\rvert^{2}\Big)\\
    &\leq \frac{1}{nT}\sum^{n}_{i=1}\sum^{T}_{t=1}\mathbb{E}\Big(\sup_{\lVert\theta_{1}-\theta_{2}\rVert\leq\delta}\lvert y^{h}_{it}(\theta_{1}, \widehat{\alpha}_{i})-y^{h}_{it}(\theta_{2}, \widehat{\alpha}_{i})\rvert^{2}\Big) \\
    & \leq \frac{\delta}{\sqrt{2\pi}}\frac{1}{nT}\sum^{n}_{i=1}\sum^{T}_{t=1}\mathbb{E}\lVert x_{it}\rVert_{2}, 
\end{align*}
and
\begin{align*}
    &\mathbb{E}\Big(\Big\lvert\frac{1}{nT}\sum^{n}_{i=1}\sum^{T}_{t=1}\frac{\phi(x'_{it}\widehat{\beta}(\theta_{2})+\widehat{\gamma}_{i}(\widehat{\beta}(\theta_{2}))x_{it}}{\Phi(x'_{it}\widehat{\beta}(\theta_{2})+\widehat{\gamma}_{i}(\widehat{\beta}(\theta_{2}))[1-\Phi(x'_{it}\widehat{\beta}(\theta_{2})+\widehat{\gamma}_{i}(\widehat{\beta}(\theta_{2}))]}\Big\lvert^{2}\Big) \\
    & \leq \frac{1}{nT}\sum^{n}_{i=1}\sum^{T}_{t=1}\mathbb{E}K_{it}.
\end{align*}
Therefore,
\begin{align*}
     \mathbb{E}\Big(\sup_{\lVert \theta_{1}-\theta_{2}\rVert\leq \delta}\lVert \widehat{\beta}(\theta_{1})-\widehat{\beta}(\theta_{2})\rVert\Big)\leq \frac{\sqrt{\delta}}{(2\pi)^{1/4}}\sqrt{\frac{1}{nT}\sum^{n}_{i=1}\sum^{T}_{t=1}\mathbb{E}\lVert x_{it}\rVert_{2}K_{it}}.
\end{align*}
This verifies condition~(\ref{eqn for stoc eq}) and hence establishes the stochastic equicontinuity condition. 

\noindent \textbf{Step 4:} By Theorem 1 in \cite{Newey1991}, $\widehat{\beta}^{h}(\theta, \boldsymbol{\widehat{\alpha}})$ converges to $\beta(\theta, \boldsymbol{\alpha_{0}})$ uniformly over $\theta\in\Theta$. 
\end{proof}
%%%%%%%%%%%%%%%%%%%%%%%%%%%%%%%%%%%%%%%%%%%%%%%
%%%%%%%%%%%%%%%%%%%%%%%%%%%%%%%%%%%%%%%%%%%%%%%%%%%%%%%%%%%%%%%%%%%%%
\subsection{Proof of Theorem~\ref{consistency}}
\begin{proof}
Following  the argument as in Appendix 1 of \cite{GourierouxMonfortRenault1993}, consistency of $\widetilde{\theta}^{H}$ requires the following three conditions to hold: 
\begin{enumerate}
    \item the function $\beta(\theta, \boldsymbol{\alpha_{0}})$ is invertible;
    \item $\widehat{\theta}$ converges to $\beta(\theta_{0}, \boldsymbol{\alpha_{0}})$ in $\theta_{0}\in\Theta$  pointwise;
    \item $\widehat{\beta}^{h}(\theta, \boldsymbol{\widehat{\alpha}})$ converges to $\beta(\theta, \boldsymbol{\alpha_{0}})$ uniformly over $\theta\in\Theta$.
\end{enumerate}  
The first condition is satisfied because function is an identity. The second condition only involves fixed effect estimator using observed data, and is a standard result in large--$T$ panel literature \citep[e.g,][Theorem 3]{HahnKuersteiner2011}. The third condition is verified by Proposition~\ref{Proposition: uniform consistency}. 

\noindent Recall that $\widetilde{\theta}^{H}$ is the solution to the optimization problem: 
\begin{align*}
    \widetilde{\theta}^{H}&=\argmin_{\theta\in\Theta} [\widehat{\theta}-\widehat{\beta}_{H}(\theta,\boldsymbol{\widehat{\alpha}})]'[\widehat{\theta}-\widehat{\beta}_{H}(\theta,\boldsymbol{\widehat{\alpha}})],
\end{align*}
where $\widehat{\beta}_{H}(\theta,\boldsymbol{\widehat{\alpha}} ):=\frac{1}{H}\sum^{H}_{h=1}\widehat{\beta}^{h}(\theta, \boldsymbol{\widehat{\alpha}})$. Therefore. the limit of the optimization problem becomes
$$\min_{\theta\in\Theta}[\theta_{0}-\theta]'[\theta_{0}-\theta],$$
which has a unique solution $\theta_{0}$. Therefore, 
$$\widetilde{\theta}^{H}\xrightarrow{p}\theta_{0}.$$
\end{proof}
%%%%%%%%%%%%%%%%%%%%%%%%%%%%%%%%%%%%%%%%%%%%%%%%%%%%%
\subsection{Proof of Theorem~\ref{CLT}} \label{Appendix: CLT proof}
\begin{proof}
By Assumption~\ref{asu: replace smooth} and consistency of $\widetilde{\theta}^{H}$,
$$\widehat{\theta}=\widehat{\beta}_{H}(\widetilde{\theta}^{H}, \boldsymbol{\widehat{\alpha}})=\widehat{\beta}_{H}(\theta_{0}, \boldsymbol{\widehat{\alpha}})+\mathbb{E}(\widehat{\beta}_{H}(\widetilde{\theta}^{H}, \boldsymbol{\widehat{\alpha}}) - \widehat{\beta}_{H}(\theta_{0},\boldsymbol{\widehat{\alpha}}))+o_{p}\Big(\frac{1}{\sqrt{nT}}\Big).$$
By the mean--value theorem, 
$$\mathbb{E}(\widehat{\beta}_{H}(\widetilde{\theta}^{H},\boldsymbol{\widehat{\alpha}}) - \widehat{\beta}_{H}(\theta_{0},\boldsymbol{\widehat{\alpha}}))=\frac{\partial\mathbb{E}\widehat{\beta}_{H}(\overline{\theta},\boldsymbol{\widehat{\alpha}})}{\partial\theta}(\widetilde{\theta}^{H}-\theta_{0}),$$
where $\overline{\theta}$ is between $\theta_{0}$ and $\widetilde{\theta}^{H}$. Therefore,
\begin{align*}
    \sqrt{nT}(\widetilde{\theta}^{H}-\theta_{0})& = -\Big[\frac{\partial\mathbb{E}\widehat{\beta}_{H}(\overline{\theta},\boldsymbol{\widehat{\alpha}})}{\partial\theta}\Big]^{-1}\sqrt{nT}\Big(\widehat{\beta}_{H}(\theta_{0},\boldsymbol{\widehat{\alpha}})-\widehat{\theta}\Big) \\
    & = \sqrt{nT}\Big(\widehat{\theta}-\widehat{\beta}_{H}(\theta_{0},\boldsymbol{\widehat{\alpha}})\Big)+o_{p}(1),
\end{align*}
where the last equality uses the property that $\beta(\theta, \boldsymbol{\alpha_{0}})=\theta$. Therefore, it suffices to focus on $\sqrt{nT}(\widehat{\theta}- \widehat{\beta}_{H}(\theta_{0},\boldsymbol{\widehat{\alpha}}))$. \cite{HahnKuersteiner2011} derive the representation of $\widehat{\theta}-\theta_{0}$ as follows:
\begin{equation*} 
    \widehat{\theta}-\theta_{0}=\frac{A(\theta_{0}, \boldsymbol{\alpha_{0}})}{\sqrt{nT}}+\frac{B(\theta_{0}, \boldsymbol{\alpha_{0}})}{T}+o_{p}\Big(\frac{1}{T}\Big),
\end{equation*}
where $A(\theta_{0}, \boldsymbol{\alpha_{0}})$ and $B(\theta_{0}, \boldsymbol{\alpha_{0}})$ are complicated functions of the high--order derivatives of the log likelihood. Because the same regression is run on simulated data, 
\begin{equation*} 
    \widehat{\beta}_{H}(\theta_{0}) - \beta(\theta_{0}, \boldsymbol{\alpha_{0}}) =  \frac{A^{h}(\theta_{0}, \boldsymbol{\widehat{\alpha}})}{\sqrt{nT}}+\frac{B^{h}(\theta_{0}, \boldsymbol{\widehat{\alpha}})}{T}+o_{p}\Big(\frac{1}{T}\Big),
\end{equation*} 
where $\boldsymbol{\widehat{\alpha}}:=(\widehat{\alpha}_{1},\dots,\widehat{\alpha}_{n})$. This implies
$$\widehat{\beta}_{H}(\theta_{0}, \boldsymbol{\widehat{\alpha}})=\beta(\theta_{0}, \boldsymbol{\alpha_{0}}) + \frac{1}{H}\sum^{H}_{h=1}\frac{A^{h}(\theta_{0}, \boldsymbol{\widehat{\alpha}})}{\sqrt{nT}}+\frac{1}{H}\sum^{H}_{h=1}\frac{B^{h}(\theta_{0}, \boldsymbol{\widehat{\alpha}})}{T}+o_{p}\Big(\frac{1}{T}\Big).$$
Combined with $\beta(\theta_{0}, \boldsymbol{\alpha_{0}})=\theta_{0}$,
\begin{align*}
    \sqrt{nT}\Big(\widehat{\theta} - \widehat{\beta}_{H}(\theta_{0}, \boldsymbol{\widehat{\alpha}})\Big) = \Big(&A(\theta_{0}, \boldsymbol{\alpha_{0}})-\frac{1}{H}\sum^{H}_{h=1}A^{h}(\theta_{0}, \boldsymbol{\widehat{\alpha}})\Big) \\ &+\sqrt{\frac{n}{T}}\Big(B(\theta_{0}, \boldsymbol{\alpha_{0}})-\frac{1}{H}\sum^{H}_{h=1}B^{h}(\theta_{0}, \boldsymbol{\widehat{\alpha}})\Big)+o_{p}\Big(\sqrt{\frac{n}{T}}\Big).
\end{align*}
The rest of the proof shows that bias term cancels out and the asymptotic normality holds. To simplify notation, the rest of the proof proceeds by setting $H=1$.

\noindent\textbf{Step 1:} Bias correction is established in Appendix~\ref{proof of prop 2}. 

\noindent \textbf{Step 2:} The simulation analog of the CLT term is 
$$A^{h}(\theta_{0}, \boldsymbol{\widehat{\alpha}})=\Big(\frac{1}{n}\sum^{n}_{i=1}\mathcal{I}_{i}(\theta_{0}, \widehat{\alpha}_{i})\Big)^{-1}\frac{1}{\sqrt{n}}\sum^{n}_{i=1}\frac{1}{\sqrt{T}}\sum^{T}_{t=1}U^{h}_{it}(\theta, \widehat{\alpha}_{i})$$
%First for iid data, find an envelope for $\frac{1}{\sqrt{T}}\sum^{T}_{t=1}U^{h}_{it}(\theta, \widehat{\alpha}_{i})$, then by Corollary 19.35 in \cite{VanDerVaar2000}. Then the expectation of each simulated CLT term can be bounded a finite uniform entropy integral. These individual bounds are then aggregated to form a measure of approximation error that is asymptotically negligible. 
Note that   
\begin{align*}
    & \Big(\frac{1}{n}\sum^{n}_{i=1}\mathcal{I}_{i}(\theta_{0}, \widehat{\alpha}_{i})\Big)^{-1}-\Big(\frac{1}{n}\sum^{n}_{i=1}\mathcal{I}_{i}(\theta_{0}, \alpha_{i0})\Big)^{-1} \\
    = & \Big(\frac{1}{n}\sum^{n}_{i=1}\mathcal{I}_{i}(\theta_{0}, \widehat{\alpha}_{i})\Big)^{-1}\Big(\frac{1}{n}\sum^{n}_{i=1}\big[\mathcal{I}_{i}(\theta_{0}, \alpha_{i0})-\mathcal{I}_{i}(\theta_{0}, \widehat{\alpha}_{i})\big]\Big)\Big(\frac{1}{n}\sum^{n}_{i=1}\mathcal{I}_{i}(\theta_{0}, \alpha_{i0})\Big)^{-1}.
\end{align*}
By continuous mapping theorem, $\mathcal{I}_{i}(\theta_{0}, \widehat{\alpha}_{i})\xrightarrow{p}\mathcal{I}_{i}(\theta_{0}, \alpha_{i0})$ for each $i$, and thus 
$$\Big\lvert\Big(\frac{1}{n}\sum^{n}_{i=1}\mathcal{I}_{i}(\theta_{0}, \widehat{\alpha}_{i})\Big)^{-1} - \Big(\frac{1}{n}\sum^{n}_{i=1}\mathcal{I}_{i}(\theta_{0}, \alpha_{i0})\Big)^{-1}\Big\rvert\xrightarrow{p} 0.$$
Combined with Assumption~\ref{assumption for normality}, $A^{h}(\theta_{0}, \boldsymbol{\widehat{\alpha}})$ has the same distribution as $A^{h}(\theta_{0}, \boldsymbol{\alpha_{0}})$, and by Proposition 5 in \cite{GourierouxMonfortRenault1993}, 
$$A(\theta_{0}, \boldsymbol{\alpha_{0}})-\frac{1}{H}\sum^{H}_{h=1}A^{h}(\theta_{0}, \boldsymbol{\alpha_{0}})\sim\mathcal{N}(0, (1+\frac{1}{H})\mathbb{E}(A(\theta_{0}, \boldsymbol{\alpha_{0}})A(\theta_{0}, \boldsymbol{\alpha_{0}})')).$$
%Then use Berbee's coupling lemma \citep[][Lemma 4.1]{DedeckerLouhichi2002} to extend the previous results to mixing data by constructing i.i.d sequences that approximate its simulated time series with a controlled remainder. The complication is to take care of the extra approximation errors. By triangular inequality, 
%\begin{align}
%    \lvert A^{h}(\theta_{0}, \widehat{\alpha}) & - A^{h}(\theta_{0}, \alpha_{0})\rvert \leq \Big\lvert\Big(\frac{1}{n}\sum^{n}_{i=1}\mathcal{I}_{i}(\theta_{0}, \widehat{\alpha}_{i})\Big)^{-1}\Big\rvert\cdot\Big\lvert\frac{1}{\sqrt{nT}}\sum^{n}_{i=1}\sum^{T}_{t=1}\big[U^{h}_{it}(\theta_{0}, \widehat{\alpha}_{i})-U^{h}_{it}(\theta_{0}, \alpha_{i0})\big]\Big\rvert \label{eqn: 1st term in CLT diff}\\
%    & + \Big\lvert\Big(\frac{1}{n}\sum^{n}_{i=1}\mathcal{I}_{i}(\theta_{0}, \widehat{\alpha}_{i})\Big)^{-1} - \Big(\frac{1}{n}\sum^{n}_{i=1}\mathcal{I}_{i}(\theta_{0}, \alpha_{i0})\Big)^{-1}\Big\rvert\cdot \Big\lvert\frac{1}{\sqrt{nT}}\sum^{n}_{i=1}\sum^{T}_{t=1}U^{h}_{it}(\theta_{0}, \alpha_{i0})\Big\rvert \label{eqn: 2nd term in CLT diff}.
%\end{align}
\end{proof}
%%%%%%%%%%%%%%%%%%%%%%%%%%%%%%%%%%%%%%%%%%%%%%%%%%%%%%%%%%%%%%%%%%%%%
\subsection{Proof of Proposition~\ref{prop: bias correction}} \label{proof of prop 2}
\begin{proof}
Consider an infeasible fixed effect estimator $\widehat{\beta}_{H}(\theta_{0}, \alpha_{0})$ that is obtained from data simulated by $(\theta_{0}, \alpha_{0})$. Then the representation of $\widehat{\beta}_{H}(\theta_{0}, \alpha_{0}) - \theta_{0}$ takes the form
\begin{equation*} 
    \widehat{\beta}_{H}(\theta_{0}, \boldsymbol{\alpha_{0}}) - \theta_{0} =  \frac{A^{h}(\theta_{0}, \boldsymbol{\alpha_{0}})}{\sqrt{nT}}+\frac{B^{h}(\theta_{0}, \boldsymbol{\alpha_{0}})}{T}+o\Big(\frac{1}{T}\Big),
\end{equation*} 
where the superscript $h$ denotes the fact that the dependent variable in $B^{h}(\theta_{0}, \boldsymbol{\alpha_{0}})$ is $y^{h}_{it}(\theta_{0}, \alpha_{i0})$. 
Because $B(\theta_{0}, \boldsymbol{\alpha_{0}})$ and $B^{h}(\theta_{0}, \boldsymbol{\alpha_{0}})$ have the same probability limit, they converge to the same expectation, which is the asymptotic bias. Therefore, it suffices to show that $B^{h}(\theta_{0}, \boldsymbol{\widehat{\alpha}})$ uniformly well approximates $B^{h}(\theta_{0}, \boldsymbol{\alpha_{0}}).$

Now prove bias correction of the following form:
$$\lvert B^{h}(\theta, \boldsymbol{\widehat{\alpha}})  - B(\theta, \boldsymbol{\alpha_{0}}) \lvert\xrightarrow{p} 0.$$
By Markov inequality, $\forall \eta>0$, 
$$Pr(\lvert B^{h}(\theta, \boldsymbol{\widehat{\alpha}})  - B(\theta, \boldsymbol{\alpha_{0}}) \lvert \geq \eta)\leq \frac{1}{\eta}\mathbb{E}(\lvert B^{h}(\theta, \boldsymbol{\widehat{\alpha}})  - B(\theta, \boldsymbol{\alpha_{0}}) \lvert).$$
Therefore it suffices to bound the RHS term. 
By the triangular inequality, 
\begin{align} \label{eqn: main equation for bias correction}
    &\mathbb{E}(\lvert B^{h}(\theta, \boldsymbol{\widehat{\alpha}}) - B(\theta, \boldsymbol{\alpha_{0}})\lvert) \nonumber \\
    &\leq \mathbb{E}(\lvert B^{h}(\theta, \boldsymbol{\widehat{\alpha}}) - B^{h}(\theta, \boldsymbol{\alpha_{0}})\lvert) + \mathbb{E}(\lvert B^{h}(\theta, \boldsymbol{\alpha_{0}}) - B(\theta, \boldsymbol{\alpha_{0}})\lvert).
\end{align}
The second RHS term in equation~(\ref{eqn: main equation for bias correction}) is $o_{p}(1)$ because $B^{h}(\theta,\boldsymbol{\alpha_{0}})$ and $B(\theta, \boldsymbol{\alpha_{0}})$ have the same probability limit. Regarding the first RHS term, by the  triangular inequality, 
\begin{align}
    \mathbb{E}\lvert B^{h}(\theta, \boldsymbol{\widehat{\alpha}}) - B^{h}(\theta, \boldsymbol{\alpha_{0}})\lvert 
    & \leq \mathbb{E}\Big\lvert\Big(\frac{1}{n}\sum^{n}_{i=1}\mathcal{I}_{i}(\theta_{0}, \widehat{\alpha}_{i})\Big)^{-1}\frac{1}{n}\sum^{n}_{i=1}B^{h}_{i}(\theta_{0}, \widehat{\alpha}_{i}) \nonumber\\
    & - \Big(\frac{1}{n}\sum^{n}_{i=1}\mathcal{I}_{i}(\theta_{0}, \widehat{\alpha}_{i})\Big)^{-1}\frac{1}{n}\sum^{n}_{i=1}B^{h}_{i}(\theta_{0}, \alpha_{i0}) \nonumber\\
    & + \Big(\frac{1}{n}\sum^{n}_{i=1}\mathcal{I}_{i}(\theta_{0}, \widehat{\alpha}_{i})\Big)^{-1}\frac{1}{n}\sum^{n}_{i=1}B^{h}_{i}(\theta_{0}, \alpha_{i0}) \nonumber \\
    & - \Big(\frac{1}{n}\sum^{n}_{i=1}\mathcal{I}_{i}(\theta_{0}, \alpha_{i0})\Big)^{-1}\frac{1}{n}\sum^{n}_{i=1}B^{h}_{i}(\theta_{0}, \alpha_{i0})\Big\rvert \nonumber\\
    & \leq \mathbb{E}\Big\lvert\Big(\frac{1}{n}\sum^{n}_{i=1}\mathcal{I}_{i}(\theta_{0}, \widehat{\alpha}_{i})\Big)^{-1}\Big\rvert\nonumber\\
    &\times\Big\lvert\frac{1}{n}\sum^{n}_{i=1}\big[B^{h}_{i}(\theta_{0}, \widehat{\alpha}_{i}) - B^{h}_{i}(\theta_{0}, \alpha_{i0})\big]\Big\rvert \label{eqn: 1st term bias}\\
    & + \mathbb{E}\Big\lvert\Big(\frac{1}{n}\sum^{n}_{i=1}\mathcal{I}_{i}(\theta_{0}, \widehat{\alpha}_{i})\Big)^{-1}-\Big(\frac{1}{n}\sum^{n}_{i=1}\mathcal{I}_{i}(\theta_{0}, \alpha_{i0})\Big)^{-1}\Big\rvert \times \nonumber \\ &\Big\lvert\frac{1}{n}\sum^{n}_{i=1}B^{h}_{i}(\theta_{0}, \alpha_{i0})\Big\rvert \label{eqn: 2nd term bias}.
\end{align}
Therefore, it suffices to focus on bounding terms~(\ref{eqn: 1st term bias}) and (\ref{eqn: 2nd term bias}). 

\noindent For term~(\ref{eqn: 2nd term bias}), note that 
\begin{align*}
    & \Big(\frac{1}{n}\sum^{n}_{i=1}\mathcal{I}_{i}(\theta_{0}, \widehat{\alpha}_{i})\Big)^{-1}-\Big(\frac{1}{n}\sum^{n}_{i=1}\mathcal{I}_{i}(\theta_{0}, \alpha_{i0})\Big)^{-1} \\
    = & \Big(\frac{1}{n}\sum^{n}_{i=1}\mathcal{I}_{i}(\theta_{0}, \widehat{\alpha}_{i})\Big)^{-1}\Big(\frac{1}{n}\sum^{n}_{i=1}\big[\mathcal{I}_{i}(\theta_{0}, \alpha_{i0})-\mathcal{I}_{i}(\theta_{0}, \widehat{\alpha}_{i})\big]\Big)\Big(\frac{1}{n}\sum^{n}_{i=1}\mathcal{I}_{i}(\theta_{0}, \alpha_{i0})\Big)^{-1}.
\end{align*}
By continuous mapping theorem, $\mathcal{I}_{i}(\theta_{0}, \widehat{\alpha}_{i})\xrightarrow{p}\mathcal{I}_{i}(\theta_{0}, \alpha_{i0})$ for each $i$. Therefore, 
\begin{align*}
    &\mathbb{E}\Big\lvert\Big(\frac{1}{n}\sum^{n}_{i=1}\mathcal{I}_{i}(\theta_{0}, \widehat{\alpha}_{i})\Big)^{-1}-\Big(\frac{1}{n}\sum^{n}_{i=1}\mathcal{I}_{i}(\theta_{0}, \alpha_{i0})\Big)^{-1}\Big\rvert \cdot \Big\lvert\frac{1}{n}\sum^{n}_{i=1}B^{h}_{i}(\theta_{0}, \alpha_{i0})\Big\rvert \xrightarrow{p} 0. 
\end{align*}
For term~(\ref{eqn: 1st term bias}), note that 
\begin{align*}
    \Big\lvert\frac{1}{n}\sum^{n}_{i=1}\big[B^{h}_{i}(\theta_{0}, \widehat{\alpha}_{i}) - B^{h}_{i}(\theta_{0}, \alpha_{i0})\big]\Big\rvert & \leq \frac{1}{n}\sum^{n}_{i=1}\lvert B^{h}_{i}(\theta_{0}, \widehat{\alpha}_{i}) - B^{h}_{i}(\theta_{0}, \alpha_{i0}) \rvert \\
    & \leq \max_{1\leq i\leq n} \lvert B^{h}_{i}(\theta_{0}, \widehat{\alpha}_{i}) - B^{h}_{i}(\theta_{0}, \alpha_{i0}) \rvert.
\end{align*}
Therefore,
\begin{align*}
    & \mathbb{E}\Big\lvert\Big(\frac{1}{n}\sum^{n}_{i=1}\mathcal{I}_{i}(\theta_{0}, \widehat{\alpha}_{i})\Big)^{-1}\Big\rvert\cdot\Big\lvert\frac{1}{n}\sum^{n}_{i=1}\big[B^{h}_{i}(\theta_{0}, \widehat{\alpha}_{i}) - B^{h}_{i}(\theta_{0}, \alpha_{i0})\big]\Big\rvert \\
    \leq & \mathbb{E}\Big\lvert\Big(\frac{1}{n}\sum^{n}_{i=1}\mathcal{I}_{i}(\theta_{0}, \widehat{\alpha}_{i})\Big)^{-1}\Big\rvert\cdot\max_{1\leq i\leq n} \lvert B^{h}_{i}(\theta_{0}, \widehat{\alpha}_{i}) - B^{h}_{i}(\theta_{0}, \alpha_{i0}) \rvert \\
    \leq & \sqrt{\mathbb{E}\Big\lvert\Big(\frac{1}{n}\sum^{n}_{i=1}\mathcal{I}_{i}(\theta_{0}, \widehat{\alpha}_{i})\Big)^{-1}\Big\rvert^{2}}\cdot \sqrt{\mathbb{E}\max_{1\leq i\leq n}\Big\lvert B^{h}_{i}(\theta_{0}, \widehat{\alpha}_{i}) - B^{h}_{i}(\theta_{0}, \alpha_{i0})\Big\rvert^{2}},
\end{align*}
where the second inequality is due to Cauchy--Schwarz inequality. By continuous mapping theorem, $\mathcal{I}_{i}(\theta_{0}, \widehat{\alpha}_{i})\xrightarrow{p}\mathcal{I}_{i}(\theta_{0}, \alpha_{i0})$, and by Slutsky theorem, 
$$\sqrt{\mathbb{E}\Big\lvert\Big(\frac{1}{n}\sum^{n}_{i=1}\mathcal{I}_{i}(\theta_{0}, \widehat{\alpha}_{i})\Big)^{-1}\Big\rvert^{2}}\xrightarrow{p}\sqrt{\mathbb{E}\Big\lvert\Big(\frac{1}{n}\sum^{n}_{i=1}\mathcal{I}_{i}(\theta_{0}, \alpha_{i0})\Big)^{-1}\Big\rvert^{2}}.$$
Combined with Assumption~\ref{asu: tech for bias}, 
$$\mathbb{E}\Big\lvert\Big(\frac{1}{n}\sum^{n}_{i=1}\mathcal{I}_{i}(\theta_{0}, \widehat{\alpha}_{i})\Big)^{-1}\Big\rvert\cdot\Big\lvert\frac{1}{n}\sum^{n}_{i=1}\big[B^{h}_{i}(\theta_{0}, \widehat{\alpha}_{i}) - B^{h}_{i}(\theta_{0}, \alpha_{i0})\big]\Big\rvert\xrightarrow{p}0.$$
\end{proof}
\newpage
\section{Computation Appendix} \label{Appendix: Computation}
\subsection{Calibration Procedures} \label{appendix: calibration}
Simulation procedures for the labor force participation application. 
\begin{enumerate}
    \item Run the regression on the LFP data to obtain $\widetilde{\theta}$ and $\widetilde{\alpha}_{i}$'s. These are treated as true coefficients for the calibration exercise. 
    \item For each simulation $s = 1,\dots, S$, create a synthetic panel data based on the equation
    $$y_{it}^{s}=\boldsymbol{1}\{X_{it}\widetilde{\theta}+\widetilde{\alpha}_{i}>u^{s}_{it}\},$$
    where $u^{s}_{it}\sim iid \mathcal{N}(0, 1)$. The data $\{(y^{s}_{it}, X_{it})\}$ are considered as the observed data for simulation $s$. 
    \item \textbf{Implementing the estimation:} 
    \begin{enumerate}
        \item Run Probit regression on $\{(y^{s}_{it}, X_{it})\}$ and obtain $\widehat{\theta}^{s}$ and $\widehat{\alpha}^{s}_{i}$. This denotes the fixed effect estimators using observed data.    
        \item \textbf{Data simulation}: 
        \begin{enumerate}
            \item Given a set of parameter $\theta$, simulate dependent variable using
            $$y_{it}^{h}(\theta)=\boldsymbol{1}\{X'_{it}\theta+\widehat{\alpha}^{s}_{i}>\varepsilon^{h}_{it}\}, \quad \varepsilon^{h}_{it}\sim iid \mathcal{N}(0, 1)$$
            Run Probit regression on $\{y^{h}_{it}(\theta), X_{it}\} $to obtain $\widehat{\beta}^{h}(\theta)$.
            \item Repeat step (i) for $H=10$ times and compute 
            $$\widehat{\beta}^{H}(\theta) = \frac{1}{H}\sum^{H}_{h=1}\widehat{\beta}^{h}(\theta).$$
            \item Compute the indirect inference estimator $\widetilde{\theta}^{H}$ by solving the following equation 
            $$\widehat{\theta}^{s} = \widehat{\beta}^{H}(\widetilde{\theta}^{H}).$$
        \end{enumerate}
    \end{enumerate}
    \item Repeat steps 2 and 3 for $S=500$ times.
\end{enumerate}
%%%%%%%%%%%%%%%%%%%%%%%%%%%%%%%%%%%%%%%%%%%%%%%%%%%%%%%%%%%%%%%%%%%%%
\subsection{Simulations for Dynamic Labor Force Participation} \label{appendix: dynamic simulation}
This subsection introduces dynamics into the specification and compare the performance of indirect fixed effect estimators with other estimators. 

Positive serial correlation observed in employment outcomes motivates the question of identifying state dependence, i.e., the causal impact of past employment on future employment for married women. However, the positive correlation can also be driven by individual--specific unobserved heterogeneity such as willingness to work. Therefore, an important question of interest is to distinguish between state dependence and persistent unobserved heterogeneity.

Following the empirical specification in \cite{Fernandez-Val2009}, this paper controls for time--invariant unobserved heterogeneity by adding individual fixed effects, 
\begin{equation} 
    y_{it}=\boldsymbol{1}\{X'_{it}\theta+\alpha_{i}\geq u_{it}\}, \quad u_{it}\sim\mathcal{N}(0, 1),
\end{equation}
where the vector of pre--determined covariates $X_{it}:=(x_{it}, y_{i,t-1})$ now contains an extra variable: $y_{i, t-1}$, which denotes the lagged dependent variable. The first year of the sample is excluded for use as the initial condition in the dynamic model. In the data simulation step, the dependent variable at time $t$ has the following representation:
\begin{equation}
    y^{h}_{it}(\theta, \widehat{\alpha}_{i}) = \boldsymbol{1}\{\theta_{1}y^{h}_{i,t-1}(\theta, \widehat{\alpha}_{i}) + x_{it}'\theta_{-1} + \widehat{\alpha}_{i}\geq u^{h}_{it}\}, \quad u^{h}_{it}\sim\mathcal{N}(0, 1). 
\end{equation}
where $\theta_{-1}$ denotes parameters other than the one for $y^{h}_{i,t-1}$.

Table~(\ref{empirics: dynamics}) reports the coefficients estimates using different methods. The analytical bias correction (ABC) corresponds to the method proposed by \cite{HahnKuersteiner2011} and serves as a benchmark. The JBC method by \cite{HahnNewey2004} is no longer applicable due to dynamics in the specification. The results are similar to the static case. When $H=20$, the indirect inference estimator produces bias correction results close to the ABC. On the other hand, the HBC estimate of lagged LFP is larger. Regarding the standard errors, the indirect fixed effect estimator does not inflate the errors when $H=20$, but HBC has larger standard errors across all variables. 
\begin{table}[htbp!]
\centering
\begin{threeparttable}
\caption{Parameter Estimates for Dynamic LFP} \label{empirics: dynamics}
\begin{tabular}{rrrrrrrr}
  \hline\hline
  & lfp\_lagged & kids0\_2 & kids3\_5 & kids6\_17 & loghusinc & age & age2 \Tstrut\Bstrut\\ 
  \hline
FE  &  0.76 & -0.55  & -0.28 & -0.07 & -0.25 & 2.05 & -0.25  \Tstrut\Bstrut\\ 
    & (0.04) & (0.06) & (0.05) & (0.04) & (0.06) & (0.38) & (0.05)\vspace{0.08cm}\\ 
IFE--1 & 0.80  & -0.41 & -0.25 &  -0.06 & -0.31 & 2.04 & -0.24  \\
    & (0.06) & (0.08) & (0.08) & (0.06) & (0.08) & (0.54) & (0.07)\vspace{0.08cm}\\
IFE--10 & 1.09 & -0.39  & -0.07 & -0.04 & -0.32 &  1.78 & -0.19 \\
    & (0.04) & (0.06) & (0.06) & (0.04) & (0.06) & (0.40) & (0.05)\vspace{0.08cm}\\
IFE--20 & 1.11 & -0.48 & -0.22 & -0.07 & -0.28 & 1.75 & -0.23  \\
    & (0.04) & (0.06) & (0.05) & (0.04) & (0.06) & (0.39) & (0.05)\vspace{0.08cm}\\
HBC & 1.35  & -0.63 & -0.34 & -0.15 & -0.31 & 1.79  &   -0.20 \\
 & (0.05) & (0.09) & (0.09) & (0.08) & (0.07) & (0.88) & (0.12)\vspace{0.08cm}\\
ABC & 0.99 & -0.48 & -0.21 & -0.06 & -0.23 & 1.84 & -0.22 \\
 & (0.04) & (0.06) & (0.05) & (0.04) & (0.06) & (0.38) & (0.05)\vspace{0.08cm}\\
\hline
\bottomrule
\end{tabular}
\begin{tablenotes}[flushleft]
\linespread{1}\footnotesize
\item\hspace*{-\fontdimen2\font}\textit{Notes:} Standard errors are stored in the parenthesis and are computed based on the Hessian matrix of profiled log likelihood. For details of the HBC estimates and standard errors computation, refer to page 1025 in \cite{DhaeneJochmans2015}. 
\end{tablenotes}
\end{threeparttable}   
\end{table}

Table~(\ref{tab:MC dynamic}) reports the results of the Monte Carlo simulations. Compared to the static case in Table~(\ref{tab:MC static}), adding dynamics into the regression further deteriorates fixed effect estimators of strictly exogenous covariates, which are comparable with the standard deviations. On the other hand, indirect fixed effect estimators correct the bias significantly. Compared to HBC, the reduction of bias is comparable but the standard deviation is smaller, which is consistent with the theory: by construction HBC does not use the whole sample for bias correction and thus inflates the variance. 

\begin{table} [htbp!]
\centering
\begin{threeparttable}
\caption {\label{tab:MC dynamic} Simulation Results for Dynamic LFP } 
\begin{tabular}{rrrrrrrrrrrrr}
\hline\hline
  & \multicolumn{3}{c}{FE} &  & \multicolumn{3}{c}{IFE--10} & \multicolumn{3}{c}{HBC}   \\ 
\cline{2-4} \cline{6-8} \cline{10-12} 
            &  Bias   & Std Dev  &  Cvg & &  Bias  & Std Dev  &  Cvg & &  Bias  & Std Dev  &  Cvg\\    \hline
 lfp\_lagged & -53.59 & 5.84 & 0.50 & & 3.06 & 6.22 & 0.91 & & -6.43 & 7.32 & 0.92 \vspace{0.08cm}\\
 kids0\_2   &  33.45  & 13.64     & 0.62 & & -5.81  & 9.69 & 0.96  & & 7.62 & 17.27 & 0.97 \vspace{0.08cm}\\       
 kids3\_5   &  47.88  & 24.37    & 0.70 & & -8.53  & 18.65    & 0.96 & & 24.14 & 31.89 & 0.96 \vspace{0.08cm}\\         
 kids6\_17  &  53.38  & 73.44    & 0.91 & & -23.29 & 55.91    & 0.97 & & 33.74 & 98.06 & 0.97 \vspace{0.08cm}\\
 loghusinc  &  24.08  & 28.90    & 0.90 & & 5.29  & 44.90    & 0.98 & & 5.70 & 31.67 & 0.98 \vspace{0.08cm}\\
 age        &  29.49  & 19.34    & 0.84 & & 1.44   & 5.54     & 0.97 & & -1.46 & 33.73 & 0.97 \vspace{0.08cm}\\
 age2       &  29.07  & 26.91    & 0.86 & & -1.67  & 20.75    & 0.98 & & -1.04 & 36.54 & 0.97 \vspace{0.08cm}\\
 \hline
 \bottomrule
\end{tabular}
\begin{tablenotes}[flushleft]
\linespread{1}\footnotesize
\item\hspace*{-\fontdimen2\font}\textit{Notes:} FE denotes fixed effects estimates. IFE--10 denotes indirect fixed effect estiamtes with $H=10$. HBC denotes split--sample jackknife method. Simulations are conducted 1000 times, and all relative statistics are multiplied by 100. The nominal coverage is 95\%.
\end{tablenotes}
\end{threeparttable}
 \end{table}
\end{document}